 \gdef\xxxmark{%
   \expandafter\ifx\csname @mpargs\endcsname\relax 
     \expandafter\ifx\csname @captype\endcsname\relax 
       \marginpar{xxx}
     \else
       xxx 
     \fi
   \else
     xxx 
   \fi}
 \gdef\xxx{\@ifnextchar[\xxx@lab\xxx@nolab}
 \long\gdef\xxx@lab[#1]#2{{\bf [\xxxmark #2 ---{\sc #1}]}}
 \long\gdef\xxx@nolab#1{{\bf [\xxxmark #1]}}
\newtheorem{theorem}{Theorem}[section]
\newtheorem{lemma}[theorem]{Lemma}
\newtheorem{claim}[theorem]{Claim}
\newtheorem{corollary}[theorem]{Corollary}
\newtheorem{definition}[theorem]{Definition}
\newtheorem{conjecture}[theorem]{Conjecture}
\newenvironment{proof}{\noindent{\bf Proof:\/}}{\hfill $\Box$ \vskip 0.1in}
\newcommand{\opt}{\textsc{opt}}
\newcommand{\eth}{\textsc{eth}}
\newcommand{\pcp}{\textsc{pcp}}
\newcommand{\pgc}{\textsc{pgc}}
\newcommand{\sat}{\textsc{sat}}
\newcommand{\setcover}{\textsc{setcover}}
\newcommand{\clique}{\textsc{clique}}
\newcommand{\mmis}{\textsc{mmis}}
\newcommand{\fpt}{\textsc{FPT}}
\newcommand{\subexp}{\textsc{subexp}}
\renewcommand{\P}{\textsc{p}}
\newcommand{\NP}{\textsc{np}}
\newcommand{\minrep}{\textsc{min-rep}}
\newcommand{\poly}{\mathrm{poly}}
\begin{document}

\title{Reducing the Optimum: Fixed Parameter Inapproximability for
$\clique$ and $\setcover$ in Time Super-exponential in $\opt$}

\author{%
{\sl Mohammad T. Hajiaghayi\/}\thanks{Department of Computer Science
, University of Maryland at College Park, USA. Supported in part by
NSF CAREER award 1053605, NSF grant CCF-1161626, ONR YIP award
N000141110662, and DARPA/AFOSR grant FA9550-12-1-0423. Email:
\texttt{hajiagha@cs.umd.edu}.}
  \medskip \and
   {\sl Rohit Khandekar\/}\thanks{
 KCG holdings Inc., USA,
  \texttt{rkhandekar@gmail.com}.}
\medskip \and
{\sl  Guy Kortsarz\/}%
\thanks{Department of Computer Science, Rutgers University-Camden,
USA. Supported in part by NSF grant 1218620.
\texttt{guyk@camden.rutgers.edu}.}}
\date{}
\maketitle
\vspace{-0.3in}

\begin{abstract}
A minimization (resp., maximization) problem is called fixed
parameter $\rho$-inapproximable, for a function $\rho \geq 1$, if
there does not exist an algorithm that given a problem instance $I$
with optimum value $\opt$ and an integer $k$, either finds a
feasible solution of value at most $\rho(k)\cdot k$ (resp., at least
$k/\rho(k)$) or finds a certificate that $k < \opt$ (resp., $k >
\opt$) in time $t(k)\cdot |I|^{O(1)}$ for some function $t$.  In
this paper, we present motivations for studying inapproximability in
terms of the parameter $\opt$, the optimum value of an instance.  A
problem is called $(r,t)$-$\fpt$-hard in parameter $\opt$ for
functions $r$ and $t$, if it admits no $r(\opt)$ approximation that
runs in time $t(\opt) |I|^{O(1)}$.  To prove hardness, we use gap
reductions from 3-$\sat$ and assume the Exponential Time Hypothesis
({\eth}). It is easy to see that if the value of $\opt$ is known in
the `yes' instance of the gap reduction, inapproximability w.r.t.
$\opt$ implies inapproximability w.r.t. input integer $k$. The
converse is not true. Hence inapproximability in $\opt$ is stronger.
Previous FPT-hardness results for the problems we study~\cite{CHK}
have running times $t$ that are sub-exponential in $\opt$. Such
results can often be obtained by a simple `translation' of
inapproximability results to FPT-hardness. In this paper, therefore,
we are only interested in times $t(\opt)$ that are super-exponential
in $\opt$. Fellows~\cite{fc} conjectured that $\setcover$ and
$\clique$ are $(r,t)$-$\fpt$-hard for {\em any} pair of
non-decreasing functions $r,t$ and input parameter $k$.  We give the
first inapproximability results for these problems with running
times {\em
  super-exponential in $\opt$}. Since one would like to prove
inapproximability with as fast growing functions $r,t$ as possible,
it is critical to reduce the value of $\opt$ (relative to the gap in
the $\opt$ value between yes and no instances and the size of the
instance). Our paper introduces systematic techniques to reduce the
value of the optimum.  These techniques are robust and work for
three quite different problems. In particular one of our results
shows that, under $\eth$, $\clique$ is $(r,t)$-$\fpt$-hard for
$r(\opt)=1/(1-\epsilon)$ with some constant $\epsilon>0$ and {\em
any} non-decreasing function $t$. The running time can be also set
to $2^{o(n)}$, for an arbitrary $o(n)$ exponent. This improves the
main result of Feige and Kilian~\cite{fk} in two ways. We also show
that the Minimum Maximal Independent Set ($\mmis$) problem is
$(r,t)$-$\fpt$-hard in $\opt$, for arbitrarily fast growing
functions $r,t$ of $\opt$.  While a similar result was known in
terms of parameter $k$, we present this result since the reduction
of $\opt$ value in this case, is very instructive and we believe
will find further applications.
\end{abstract}

\section{Introduction}


\subsection{FPT inapproximability with parameter $\opt$: motivation}

In {\em Fixed Parameter Tractability} ($\fpt$) theory, we are given
a decision problem $P$ with a parameter $k$, that relates to the
problem instance. An $\fpt$ algorithm for a problem is an exact
algorithm that runs in time $t(k)\cdot n^{O(1)}$ where $n$ is the
size of the instance and $t$ is an arbitrary
function.\footnote{Unless otherwise stated, all mentioned functions
  are total computable functions from non-negative integers to
  themselves.} An $\fpt$ {\em approximation} algorithm,
given a parameter $k$, approximates
the desired solution value within a ratio $r(k)$ and runs in time
$t(k)\cdot n^{O(1)}$, for a function $r$. More precisely, an
$\fpt$-approximation algorithm is defined as follows. For a
minimization (resp., maximization) problem P, an algorithm is called
an $(r,t)$-$\fpt$-approximation algorithm for $P$ with input parameter
$k$, if the algorithm takes as input an instance $I$ with (possibly
unknown) optimum value $\opt$ and an integer parameter $k$ and either
computes a feasible solution to $I$ with value at most $k\cdot r(k)$
(resp., at least $k/r(k)$) or computes a certificate that $k < \opt$
(resp., $k > \opt$) in time $t(k)\cdot |I|^{O(1)}$. In the latter
case, such a certificate can be obtained from the analysis of the
algorithm and the fact that it did not produce the desired
solution. Here the goal is to design algorithms with as {\em slow}
growing functions $r$ and $t$ as possible. A problem is called
$(r,t)$-$\fpt$-inapproximable (or, $(r,t)$-$\fpt$-hard) if it does not
admit any $(r,t)$-$\fpt$-approximation algorithm. Here the goal is to
show inapproximability with as {\em fast} growing functions $r$ and
$t$ as possible.

In this paper, we use reductions from $3$-$\sat$ to prove $\fpt$-hardness.
When doing a reduction from $3$-$\sat$ to an optimization problem $P$,
a ``yes'' instance of $P$ is an instance obtained from a satisfiable
formula, and a ``no'' instance of $P$ is an instance obtained from a
reduction of non-satisfiable formula.  Whenever we use reductions from
3-$\sat$, we denote the number of variables by $q$, the number of
clauses by $m$ and $N=m+q$.

Our work is motivated by a conjecture, by Mike Fellows, concerning
parameterized approximation for $\setcover$ and $\clique$.

\begin{conjecture}[FPT-hardness of $\setcover$ and $\clique$ (Fellows~\cite{fc})]
\label{conj1}
\hspace{1in}

\begin{itemize}
\item $\setcover$ is $(r,t)$-$\fpt$-hard for any non-decreasing
  functions $r$ and $t$.
\item $\clique$ is $(r,t)$-$\fpt$-hard for any non-decreasing
  functions $r(k)=o(k)$ and $t$.
\end{itemize}
\end{conjecture}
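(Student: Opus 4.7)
The conjecture asserts $(r,t)$-$\fpt$-hardness for \emph{arbitrary} non-decreasing $r$ and $t$, so no single reduction can possibly establish it outright: the plan must produce, for every pair $(r,t)$, a tailored gap-preserving reduction from $3$-$\sat$. I would begin with a gap-amplified $3$-$\sat$ (PCP theorem plus parallel repetition), so that under $\eth$ the resulting instance on $q$ variables cannot be decided in time $2^{o(q)}$ and has a constant-factor gap between the yes and no cases. The goal is then to reduce this to an instance of $\setcover$ or $\clique$ whose yes-optimum $k^\star$ is as small as possible relative to $q$, and whose no-optimum is at least $r(k^\star)\cdot k^\star$ (respectively at most $k^\star/r(k^\star)$). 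An $(r,t)$-$\fpt$-approximation would then refute $\eth$ by deciding the source $3$-$\sat$ instance in time $t(k^\star)\cdot |I|^{O(1)}$, and provided $k^\star$ is sufficiently small compared to $q$ this forces $t$ to grow faster than any prescribed function.

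For $\setcover$, I would follow Feige's framework from $\minrep$-style two-prover games to set cover, in which the yes-optimum equals the number of provers. The key novelty must be a mechanism that shrinks the yes-optimum: using error-correcting encodings of partitions (in the Moshkovitz--Raz or Raz--Safra style) one could hope to drive $k^\star$ down to $\polylog(q)$ while simultaneously amplifying the gap from a constant to any target $r(k^\star)$. For $\clique$, the natural plan is to combine the PCP-based $n^{1-\epsilon}$ hardness with an ``optimum-compression'' gadget that replaces each tester by a small combinatorial design, so that the optimum clique size is controlled by the number of tests rather than the size of the witness. The restriction $r(k)=o(k)$ in the conjecture is unavoidable because the no-optimum of $\clique$ is always at least one, so $k$-vs-$1$ gaps are the natural ceiling.

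The main obstacle is the tight three-way tension between the yes-optimum $k^\star$, the gap $r(k^\star)$, and the instance size $n$, all pinned down by $\eth$. Standard gap reductions produce $k^\star$ polynomial in $n$; driving $k^\star$ down to even $\log n$ while preserving a growing gap requires genuinely new ideas, and pushing it into the $\log\log n$ regime needed for \emph{super-exponential-in-$\opt$} hardness is, in my assessment, beyond current techniques. I therefore expect that the paper does not prove the full Fellows conjecture but rather establishes robust partial results -- constant-factor $\fpt$-hardness for $\clique$ and arbitrary $(r,t)$-hardness for $\mmis$, as announced in the abstract -- by designing problem-specific ``optimum-reduction'' gadgets that collapse the yes-optimum while preserving (or even amplifying) the gap inherited from a standard $3$-$\sat$ reduction. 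The bulk of the technical effort will go into those optimum-reduction steps, which is precisely where the abstract promises the paper's main conceptual contribution.
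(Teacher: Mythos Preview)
Your assessment is correct: the statement is a \emph{conjecture}, and the paper does not prove it. The authors say so explicitly in Section~1.3: ``We are not able to prove either of these conjectures with our current techniques. It seems to us that the current state of knowledge is not enough to prove the conjectures.'' There is therefore no ``paper's own proof'' to compare against.

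Your prediction of what the paper actually accomplishes is largely accurate. The central technique is exactly the optimum-reduction idea you describe, implemented not via error-correcting encodings but by a direct combinatorial trick: given a starting instance (of $\setcover$, $\clique$, or $\mmis$) with optimum $\opt$ and size $n$, create a new instance whose ``supervertices'' or ``supersets'' are all size-$p$ subsets of the original objects, where $p$ is chosen so that the new optimum shrinks to roughly $\opt/p$ while the new instance size stays $2^{o(m)}$ (here $m$ is the number of clauses in the source $3$-$\sat$). For $\setcover$ this yields $(\log\opt)^\gamma$ hardness in time $\exp(\exp((\log\opt)^\gamma))$ under $\eth+\pgc$; for $\clique$ it yields $1/(1-\epsilon)$ hardness for \emph{any} function $t$ (because the new optimum $f(m)$ can be made to grow arbitrarily slowly); for $\mmis$ it yields full $(r,t)$-hardness. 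Your sketch of the $\setcover$ route through $\minrep$ matches the paper's Corollary~\ref{scc}, and your diagnosis of the three-way tension between $k^\star$, the gap, and the instance size is precisely the obstruction the authors identify in Section~6.4, where they argue that almost-linear $\pcp$s alone cannot resolve the conjecture and that a parameterized $\pcp$ theorem may be needed.
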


When we do a gap reduction from $3$-$\sat$, in all cases in this
paper, the optimum $\opt$ of a yes instance is known. Thus, by
definition, inapproximability in terms of $\opt$ implies
inapproximability in terms of $k$ by setting $k=opt$. On the other
hand, inaproximability in terms of $k$ does not imply
inapproximability in $\opt$ because it may be that $k<\opt$ but we
can not give a certificate that $k<\opt$. Typically, this happens if
$k$ is only slightly smaller than $\opt$. In this case, proving
 $\rho(k)$-$\fpt$ hardness, implies that
is it is not possible to find a solution of size $k\cdot \rho(k)$.
Clearly, showing that we cannot find a solution of value
$\opt\cdot \rho(\opt)$ is a stronger
statement if $k<opt$. Thus if $\opt$ is known
inapproximability in $\opt$ implies inapproximability in $k$,
but not vise-versa.

Therefore, we suggest a principle of
studding the hardness in terms of $\opt$
whenever possible. The following version of the conjecture, is with
parameter $\opt$.

\begin{conjecture}[FPT-hardness of $\setcover$ and $\clique$ with parameter
OPT.]
\label{conj2}
Let $\opt$ and $n$ denote the value of the optimum and size of the
given instance, respectively.
\begin{itemize}
\item $\setcover$ admits no $r(\opt)$ approximation that runs in time
$t(\opt)\cdot n^{O(1)}$ for any non-decreasing functions $r$ and
$t$.
\item $\clique$ admits no $r(\opt)$ approximation that runs in time
$t(\opt)\cdot n^{O(1)}$ for any non-decreasing functions $r(k)=o(k)$
and $t$.
\end{itemize}
\end{conjecture}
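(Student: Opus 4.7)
The plan is to prove inapproximability in $\opt$ for $\setcover$ and $\clique$ via gap reductions from $3$-$\sat$ under $\eth$. If such a reduction maps a $3$-$\sat$ formula with $q$ variables to an instance of size $\poly(q)$ with optimum $f(q)$, then a hypothetical $(r,t)$-$\fpt$-approximation running in $t(\opt)\cdot n^{O(1)}$ time immediately yields a $t(f(q))\cdot\poly(q)$ algorithm distinguishing satisfiable from unsatisfiable formulas. To contradict $\eth$ for arbitrary non-decreasing $t$, we need $f$ to grow very slowly in $q$ --- essentially $f(q)=\omega(1)$ but slower than any fixed function --- so the central technical task is to design an ``optimum-shrinking'' reduction that produces instances with unusually small $\opt$ while preserving both the gap and the polynomial instance-size bound.

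For $\setcover$ I would begin from Feige's $(1-\epsilon)\ln n$-hardness reduction, in which $\opt$ is already only $\Theta(\log n)$, and then push $\opt$ further down using a block/product construction: partition the clauses into groups, attach a gadget per group sharing a common universe, and arrange the set families so that any good cover must choose one ``index'' set per group plus a small global collection, making $\opt$ proportional to the number of groups rather than the universe size. For $\clique$ I would start with an efficient $\pcp$ verifier for $3$-$\sat$ with constant queries and logarithmic randomness, apply the FGLSS reduction, and shrink $\opt$ by bundling many random strings into a single ``super-vertex'' labelled with the verifier's behavior on the entire bundle, adding edges encoding cross-bundle consistency. The clique number then tracks the number of bundles rather than the full $2^r$ count, which should yield the $r(\opt)=1/(1-\epsilon)$ gap and $2^{o(n)}$ running time claimed in the abstract.

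The main obstacle is exactly this trade-off: standard $\pcp$ instruments produce instances whose $\opt$ is proportional to the number of random strings or clause groups, and naive attempts to shrink $\opt$ weaken either the completeness--soundness gap or the polynomial size bound. I would spend most effort designing the bundling/block gadget so that the gap survives composition, and on verifying that the construction may be iterated or parameterized to drive $\opt$ below any prescribed slow-growing threshold while keeping instance size $\poly(q)$. Getting all three of small $\opt$, constant (or unbounded) gap, and polynomial size simultaneously is the crux of Conjecture~\ref{conj2}; I do not expect to reach arbitrary $r$, but a careful such reduction should at least recover the constant-ratio $\clique$ and arbitrary-ratio $\mmis$ results advertised in the abstract, and clearly identifies what further ingredient would be needed to close the conjecture.
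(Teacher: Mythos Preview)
The statement you are attempting is a \emph{conjecture}, not a theorem: the paper explicitly says it is \emph{unable} to prove Conjecture~\ref{conj2} and speculates that a parameterized version of the $\pcp$ theorem would be needed. So there is no ``paper's own proof'' to compare against; the paper only establishes partial results (Theorems~\ref{1}--\ref{4}) that fall well short of arbitrary $r$ and $t$ for $\setcover$ and $\clique$.

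Your bundling/supervertex idea is in fact exactly the mechanism the paper uses for its partial results, but there is a genuine gap in your plan: you repeatedly insist that the reduced instance have size $\poly(q)$. This is precisely the constraint the paper \emph{drops}. The paper's optimum-shrinking step for $\clique$ introduces a supervertex for \emph{every} $(m/f(m))$-subset of the $7m$ original vertices, giving $\binom{7m}{m/f(m)}=2^{o(m)}$ vertices; for $\setcover$ it introduces a set for every union of up to $m/\log m$ old sets, again $2^{o(m)}$ many. The whole point is that $\eth$ only forbids $2^{o(m)}$-time algorithms, so a sub-exponential-size (not polynomial-size) instance is still admissible in the gap-reduction framework of Definition~\ref{def:gapred}. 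If you force polynomial size, the bundling produces only polynomially many supervertices, which caps how far you can push $\opt$ down relative to the gap; the paper's freedom to blow the instance up to $2^{o(m)}$ is what lets $\opt$ shrink to an arbitrarily slowly growing $f(m)$ while keeping the $1/(1-\epsilon)$ gap for $\clique$.

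Even with that relaxation, the paper does \emph{not} obtain arbitrary $r$ for $\setcover$ or $\clique$: the $\setcover$ gap is tied to the $\pcp$ soundness and tops out near $\log\opt$ (or $\opt^{d'}$ under a stronger conjecture), and the $\clique$ gap stays at a fixed constant (amplified only to $(1/(1-\epsilon))^{\log^{1/3}\opt}$ via graph powering). Your plan to ``iterate or parameterize'' the bundling to reach arbitrary $r$ while keeping polynomial size is therefore the crux that neither you nor the authors know how to carry out; the paper's Section on ``Discussion of this specific technique'' argues that almost-linear $\pcp$s combined with bundling cannot do better than its Theorem~\ref{3}, so a fundamentally new ingredient is required.
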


\subsection{Reduction with sub-exponential time in $\opt$ and their weaknesses}

Many times it is automatic to translate an inapproximability result,
to $\fpt$-hardness. This happens if we are allowed to prove
$(r,t)$-$\fpt$-hardness for time $t(opt)$, subexponential in $\opt$.

Consider a polynomial time gap reduction from
$3$-$\sat$ to any other problem $P$.
Let $|I|=q+m$ be the size of the $3$-$\sat$ instance
and thus the size of the instance
of $P$ is clearly bounded by $m^c$, for some constant $c$.

Many times the gap with respect to $\opt$ and the gap with respect to $n$,
are about the same, as
$\opt$ is close to $n$. For
example, for all standard reduction from $3$$\sat$
to $\clique$ and $\setcover$, $n$ and $\opt$ are very close.
However, there are even cases in which $\opt$ is much smaller than $n$.
Thus a gap in $n$ implies a larger gap in $\opt$.

In any case, say that we have $\rho(\opt)$
gap that resulted from a reduction from $3$-$\sat$.
Thus the $\eth$ implies that we cannot
find an approximation better than
$\rho(\opt)$ in time $2^{o(m)}$.

Thus, all we need to do is to translate $2^{o(m)}$ to a function of
$\opt$. In almost all cases $\opt\leq n$ (albeit if $\opt$ is
polynomial in $n$ the next claim still holds). In that case,
$\opt\leq n=m^c$. For any constant $c'>c$, $2^{\opt^{1/c'}}=
2^{o(m)}$. Thus we automatically get a $\rho(\opt)$-$\fpt$-hardness
in time $2^{\opt^{1/c'}}$. Clearly, the meaning of such a reduction
is limited and it is a translation of the hardness result to
$\fpt$-hardness language. The detail that allows us to give such a
translation is that  $t(\opt)$ is subexponential in $\opt$.

\subsection{Reducing the value of OPT}

We are not able to prove either of these conjectures with our current
techniques. It seems to us that the current state of knowledge is not enough to
prove the conjectures. In fact we suspect that in order to prove this
conjecture, a parameterized version of the $\pcp$ is needed.
Nevertheless, we make a very important breakthrough proving hardness
results with running times $t(\opt)$ that are {\em super-exponential}
in $\opt$. Such results were never proven before. In particular, the
inapproximability of \cite{CHK} is under a {\em sub-exponential
  running} time.  We discuss this and the relation to the reducing the
value of $\opt$.
Proving hardness result with super-exponential time in $\opt$ or $k$
necessitates reducing the value of $\opt$. The reason is that
$t(\opt)=2^{o(m)}$ must hold with $m$ the number of clauses in the
$3$-$\sat$ instance we reduce from. As the function $t$ becomes faster
and faster growing, $\opt$ needs to get smaller and smaller.
Therefore we claim that an important aspect of
the art of proving $\fpt$-hardness is
creating new instances with smaller and smaller $\opt$.

We develop systematic techniques to reduce the value of $\opt$ for the problems
we study, with the following property.  Given a pair of `yes' and `no'
instances for the problem, with optimum values $\opt_y$ and $\opt_n$
respectively, the reduction creates new instances with optimum values
$\opt'_y$ and $\opt'_n$ respectively such that
\begin{itemize}
\item the new optimum values are much smaller: $\opt'_y \ll \opt_y$ and
  $\opt'_n \ll \opt_n$,
\item the gap between optimum values in yes and no instances is
  preserved: $\opt'_y/\opt'_n \approx \opt_y/\opt_n$,
\item the new instance sizes are not much larger than the old ones.
\end{itemize}
No {\fpt} hardness before us used any techniques to reduce $\opt$,
even though reducing the $\opt$ seems to us to be one of the most
natural and important ideas for proving ${\fpt}$-hardness.

Our approach can be summarized as follows.
\begin{enumerate}
\item  Try to prove inapproximability in terms of $\opt$ and not in terms of $k$.
For this you need gap reductions in which the value of a ``yes''
instance is known (it is also possible to do reductions in terms of $\opt$
if its only approximately known but its much better if $\opt$ is known).
\item Prove only inapproximability in time super-exponential in $\opt$
or larger.
\item An important tool in proving a  strong ${\fpt}$-hardness
is developing techniques to reduce an instance of a problem
to another instance of the same problem, with much smaller
optimum, while not loosing much in the gap.
\end{enumerate}

Apart from $\setcover$ and $\clique$, we also study a problem called
the {\em Minimum Maximal Independent Set ($\mmis$)} problem. In this
problem, given an undirected graph, the goal is to find a minimum-size
independent set that is also inclusion-wise maximal.  This is also
equivalent to finding the minimum-size independent set that is also a
dominating set.


\section{Previous work}

The following relation is known among the parameterized complexity
classes: $\fpt \subseteq W[1]\subseteq W[2]$. It is widely believed
that $\fpt\neq$ W[1]. In fact $\fpt$=W[1] implies that $\eth$ fails.

{\bf Inapproximability that is sub-exponential in $n$:} A widely
explored line of research shows, for $\clique$ and $\setcover$, a
relation between an approximation, and the running time it
requires. Such results are discussed in \cite{subexp}.  Recently,
\cite{focs13} improved \cite{subexp} to get the following result.  For
any $r$ larger than some constant and any constant $\epsilon>0$, any
$r$-approximation algorithm for the maximum independent set problem
must run in time at least $2^{n^{1-\epsilon}/r^{1+\epsilon}}$. This
nearly matches the upper bound of $2^{n/r}$.  In this case super
exponential running times are out of the question, because the time
depends on $n$.  This again shows the power of parameterizing
algorithms. In the instance we start with, the optimum is very close
to $n$.  By reducing $\opt$ we can get inapproximability for $\clique$
and $\setcover$ in time super-exponential in $\opt$, giving a more
refined classification of the problems.

To the best of our knowledge, the effort of showing
$\fpt$-hardness for $\clique$ and $\setcover$
(in terms of $k$ and $\opt$) started with \cite{CHK}.
\begin{theorem}{\em \cite{CHK}}
Under  $\eth$ and $\pgc$, there exist constants $1> F1, F2 > 0$ such
that the $\setcover$ problem does not admit an $\fpt$ approximation
algorithm with ratio $\opt^{F_1}$ in time $2^{\opt^{F_2}}$.
\end{theorem}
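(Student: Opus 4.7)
The plan is to combine the two assumed hypotheses—$\eth$ for a super-polynomial time lower bound and $\pgc$ for a strong gap in Label Cover—with an additional step that drives down the yes-optimum of the final \setcover instance relative to the size $N$ of the starting 3-$\sat$ formula. The first two ingredients together give a classical quasi-polynomial inapproximability; the third is what forces the hardness to be in terms of $\opt$ rather than in terms of the instance size, and it is the novel contribution.

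First I would invoke $\pgc$ on a 3-$\sat$ formula $\phi$ with $N=q+m$ to produce, for any target soundness $\epsilon \geq 1/\poly(N)$, a bipartite projection game $\Psi=(U\cup V, E, \Sigma, \pi)$ of near-linear size $N^{1+o(1)}\cdot \poly(1/\epsilon)$ with alphabet $|\Sigma|=\poly(1/\epsilon)$, satisfiable when $\phi$ is and of value at most $\epsilon$ otherwise. I would then apply the standard Feige-type reduction from projection games to \setcover: attach a partition-system gadget on $2^{O(|\Sigma|)}$ elements to each right-vertex, and introduce one set per (left-vertex, label) pair, covering only the partition blocks consistent with the projection constraint. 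This produces a \setcover instance with yes-optimum proportional to $|U|$ and gap $\Omega(\log(1/\epsilon))$. Fixing $\epsilon = 1/N^{\gamma}$ for a small enough constant $\gamma>0$ keeps the total instance size polynomial in $N$ and delivers a multiplicative gap of $\Omega(\log N)$ on top of an $\opt$ of order $N$.

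The decisive step is to push $\opt$ down from $\Theta(|U|)\approx N$ to roughly $N^{\beta}$ for some constant $\beta\in(0,1)$, without destroying the gap. Following the paper's \emph{optimum-reduction} template, I would group the left-vertices of $\Psi$ into blocks of size $g=N^{1-\beta}$ and replace each block by a single super-vertex whose label is a $g$-tuple of old labels; a single super-vertex/label pair now encodes what formerly required $g$ pair choices. The alphabet blows up to $|\Sigma|^g$, but by picking $\gamma$ small and $g$ accordingly this blow-up is absorbed into the partition-system universe while keeping the total size $\poly(N)$. Crucially, the projection structure is preserved block-wise, so the Feige gap-analysis still yields a \setcover gap of $\Omega(\log N)$, and since the new optimum is $\approx N^{\beta}$ this gap can be rewritten as $\Omega(\opt^{F_1})$ for some constant $F_1>0$ depending on $\beta$ and $\gamma$.

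Closing the argument with $\eth$ is then routine: a purported $\opt^{F_1}$-approximation running in $2^{\opt^{F_2}}\cdot|I|^{O(1)}$ time would decide the reduced \setcover instance, and hence $\phi$, in $2^{N^{\beta F_2}}\cdot\poly(N)$ time, so any $F_2<1/\beta$ yields $2^{o(N)}$ and contradicts $\eth$. The main obstacle in this plan is the third step: one must verify that the grouping really preserves enough of the projection property for the Feige gadget to still give an $\Omega(\log N)$ gap, and that the soundness loss from consolidation is only polynomial in $g$ rather than $2^g$. Pinning down this trade-off between shrinking $\opt$, maintaining the gap, and keeping the instance polynomial in $N$ is exactly what the paper's reducing-the-optimum technique is designed to handle.
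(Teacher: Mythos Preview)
The paper does not prove this theorem at all: it is quoted from \cite{CHK} in the ``Previous work'' section, and immediately afterwards the authors remark that ``the above theorem uses $F_2<1$, hence uses time sub-exponential in $\opt$ and is not suited for this paper.'' In Section~1.2 they spell out why such a sub-exponential-in-$\opt$ statement is essentially automatic: starting from any polynomial-time gap reduction from $3$-$\sat$ to $\setcover$ with gap $\rho$, one has $\opt\le n\le m^{c}$, so for any constant $c'>c$ the bound $2^{\opt^{1/c'}}=2^{o(m)}$ already contradicts $\eth$. No optimum-reduction step is needed; the constants $F_1,F_2<1$ come straight from the exponent $c$ of the reduction and from writing the $\Omega(\log m)$ (or polynomial) gap as $\opt^{F_1}$.

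Your proposal therefore over-engineers the argument. The ``decisive step'' you describe---grouping left vertices of the projection game into blocks of size $g=N^{1-\beta}$---is precisely the kind of technique the present paper introduces for its \emph{new} results (Theorems~\ref{1}--\ref{3}), where the running time must be \emph{super}-exponential in $\opt$. For the cited \cite{CHK} theorem with $F_2<1$ it is unnecessary. Worse, the way you execute it is not sound: blocking $g$ left vertices sends the alphabet to $|\Sigma|^{g}$, and with $g=N^{1-\beta}$ any partition-system gadget whose size depends even polynomially on the alphabet becomes $\exp(N^{1-\beta})$, so the instance is not $\poly(N)$ as you assert. The paper's own optimum-reduction (Section~5) avoids this by grouping at the $\setcover$ level---forming unions of $p\approx m/\log m$ sets---and explicitly accepts a $2^{o(m)}$-size instance rather than a polynomial one, which is still compatible with $\eth$.
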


The above theorem uses $F_2<1$ hence uses time sub-exponential
time $\opt$ and is not suited for this paper.

\begin{theorem}{\em \cite{CHK}}
Unless $\NP \subseteq \subexp$, for every $0<\delta<1$ there exists
a constant $F=F(\delta) > 0$ such that $\clique$ admits no $\fpt$
approximation within $\opt^{1-\delta}$ in time $2^{\opt^F}$.
\end{theorem}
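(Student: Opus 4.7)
My plan is to combine a $\pcp$-based gap reduction from 3-$\sat$ with a gap amplification tailored so that the resulting $\clique$ instance has a very large multiplicative gap while keeping $\opt$ polynomial in the 3-$\sat$ input size $N$. Starting from a 3-$\sat$ instance $\phi$ on $N=q+m$ variables and clauses, apply the $\pcp$ theorem to obtain a gap 3-$\sat$ instance $\phi'$ of size $\poly(N)$ that is either satisfiable or admits no assignment satisfying more than a $(1-\epsilon_0)$ fraction of its clauses, for an absolute constant $\epsilon_0>0$. Applying the FGLSS reduction to $\phi'$ yields a graph $G_0$ on $n_0=\poly(N)$ vertices with a constant-factor gap $c_0>1$ between the clique numbers in the yes and no cases.

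Next I would amplify the gap while keeping $\opt$ small. The workhorse is a pseudorandom/derandomized graph product in the style of Alon--Feige--Wigderson, built on constant-degree expander walks, possibly combined with Berman--Schnitger-style randomized subsampling. A naive $t$-fold tensor product $G_0^{\otimes t}$ does produce the needed exponential gap but blows the vertex count up to $n_0^t$, which forces $\opt$ to be astronomically large relative to $N$ and collapses the exponent $F$. The pseudorandom variant preserves the multiplicative clique ratio while keeping the vertex count at $n_0\cdot 2^{O(t)}$. Tuning $t=t(\delta)$ appropriately yields a $\clique$ instance $G$ on $\poly(N)$ vertices with yes-value $\opt=N^{c(\delta)}$ and no-value at most $\opt^{\delta}$, so the gap is $\opt^{1-\delta}$; only the exponent $c(\delta)$ grows as $\delta$ shrinks.

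Suppose, toward contradiction, that for every $F>0$ there is an algorithm giving an $\opt^{1-\delta}$-approximation for $\clique$ in time $2^{\opt^F}$. Composed with the reduction above, each such algorithm distinguishes the yes and no cases of gap 3-$\sat$ in time $2^{N^{c(\delta)F}}$. Letting $F$ range over arbitrarily small positive reals places gap 3-$\sat$ in $\mathrm{DTIME}(2^{N^\epsilon})$ for every $\epsilon>0$, i.e.\ in $\subexp$, and hence $\NP\subseteq\subexp$, a contradiction. Thus some $F=F(\delta)>0$ exists for which no such algorithm can exist, which is exactly the statement of the theorem.

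I expect the main obstacle to be the gap-amplification step: one must simultaneously achieve a gap of $\opt^{1-\delta}$, keep the vertex count polynomial in $N$, and keep $\opt$ polynomial in $N$ rather than quasi-polynomial. The naive tensor product and naive parallel repetition both fail the last two conditions, which is precisely why expander-based pseudorandom products are essential---any super-polynomial slack there would send $c(\delta)$ to infinity and render the exponent $F$ useless (only sub-exponential in $N$, not in $\opt$). A secondary subtlety is verifying that the expander-walk construction really preserves the clique ratio, which relies on the standard hitting/mixing bounds for random walks on constant-degree expanders.
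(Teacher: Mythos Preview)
The paper does not prove this theorem; it is quoted from \cite{CHK} as prior work, followed only by the remark that ``$F<1$ in the above construction,'' i.e.\ the running time is sub-exponential in $\opt$, which the present paper explicitly regards as the uninteresting regime. The paper's Section~1.2 explains why such results are essentially automatic: any polynomial-time gap reduction from 3-$\sat$ to $P$ with gap $\rho$ immediately yields $(\rho,\,2^{\opt^{1/c'}})$-$\fpt$-hardness, because $\opt \le n = m^{c}$ and hence $2^{\opt^{1/c'}} = 2^{o(m)}$ for any constant $c'>c$. Applied here, the intended argument is simply to invoke Theorem~\ref{thm:clique} (H{\aa}stad/Zuckerman) as a black box: from a 3-$\sat$ instance of size $N$ one gets a $\clique$ instance of size $n=N^{c}$ with gap $n^{1-\epsilon}$; since $\opt$ is polynomially related to $n$, the gap is $\opt^{1-\delta}$ for suitable $\delta$, and any $\opt^{1-\delta}$-approximation in time $2^{\opt^{F}}$ with $F<1/c$ would decide 3-$\sat$ in sub-exponential time.

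Your proposal is not wrong in outline, but it re-derives the hard part instead of citing it: you are essentially reproving the $n^{1-\epsilon}$ hardness via expander-based amplification and then performing the same one-line translation. Your contrapositive bookkeeping (letting $F\to 0$ to hit every $2^{N^{\epsilon}}$) is fine, though a single $F=F(\delta)$ already suffices. The genuinely shaky spot is your description of the amplification: an Alon--Feige--Wigderson expander-walk product does not by itself give ``vertex count $n_0\cdot 2^{O(t)}$ with yes-clique $N^{c(\delta)}$ and gap $\opt^{1-\delta}$''; the yes-clique size in the walk product is governed by how many walks stay inside the clique, not simply by $\omega(G_0)$, and the constructions that actually achieve $n^{1-\epsilon}$ hardness (low free-bit $\pcp$s plus dispersers/extractors, as in \cite{hast,ZC}) are more delicate than a single expander walk. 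The cleaner route, and the one the paper implicitly has in mind, is to treat Theorem~\ref{thm:clique} as given and apply the translation of Section~1.2 directly.
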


As $F<1$ in the above construction the running time here too is
sub-exponential in $\opt$ and this theorem is not suited for this
paper.

These theorems seem unrelated to the results of \cite{subexp} and
\cite{focs13} because of the large inapproximability proved in the
theorems, compared to the constant lower bound in the above papers.

{\bf Previous work on} $\mmis$: The problem is known to be W[2]-Hard
\cite{mmis}.  It is also known that it is $(r(k),t(k))$-$\fpt$-hard
for any increasing $r,t$.  The $r(k)$ inapproximability for any $r$
was shown in \cite{dis}, under the assumption that $\fpt\neq W[2]$.
Indeed, reducing directly from $W[1]$ or $W[2]$ creating a gap,
is another important technique to get $\fpt$-hardness
(if somewhat hard to use).
The arbitrary time $t(k)$ follows from proving that the {\eth} implies
$\fpt\neq W[2]$.  This proof is covered by the papers
\cite{Chen,fel,Dow}
\footnote{
The version we discuss is from the survey paper
http://www.cs.bme.hu/~dmarx/papers/survey-eth-beatcs.pdf.}.

The proof in that survey first shows that if Clique admits an {\fpt}
algorithm, then $3$-coloring admits a $2^{o(m)}$ time algorith
which implies that $3$-$\sat$ admits a $2^{o(m)}$ exact solution.
This is due
to the simple linear reduction from $3$-$\sat$ to $3$-Coloring. This in
turn implies that the {\eth} fails. Thus assuming {\eth}, Clique
admits no {\fpt} algorithm.  Then it is shown that if Clique has no
$\fpt$ algorithm a problem called {\em Multicolored Clique} admits no
$\fpt$ algorithm.  Finally it is proved that the if the Multicolored
Clique problem admits no $\fpt$ algorithm, the Dominating Set (hence
$\setcover$ as well) admits no $\fpt$ algorithm.

We prove a very slightly stronger statement namely
$(r(\opt),t(\opt))$-$\fpt$-hard for any increasing $r,s$
with the optimum known. However, this
slight improvement is not the reason we include that proof. The
reduction of the optimum in this case is quite non-trivial and
requires ideas that probably will find future applications.
To understand the difficulty
consider a reduction from $3$-SAT
to
$\mmis$ in which we basically try to transform
the instance into one in which literals (that will
be the minimal independent set)
will dominate
clauses that they belong to.  A great difficulty is
that we need to choose a small number of the literals, only.
The literals not chosen must be dominated as well.
Thus, we need to add edges between literals. But
then: how do we make sure that the independent set
has no edges between literals in the set, and on top of that make the optimum much smaller?


In \cite{CHK} a large collection of $W[1]$-hard problems are
presented for which an inapproximability such as in the Fellows
conjecture does not apply. In fact, these problems are given some
approximation $f(\opt)$ for $f(x)\leq x^2$ and the running time is
just polynomial in the size of the input. All these problems are not
only $W[1]$-hard, but also admit strong inapproximability results
(at least Label-Cover hardness or believed to have no better than
polynomial ratio like the minimization version of the Dense
$k$-subgraph problem).

While all problems of \cite{CHK} are minimization problems, such
results hold for maximization problems as well, as the next example
shows.  For a set $U$, the edges $e(U)$ are the edges with both
endpoints in $U$. The parameterized version of the {\em Dense
  $q$-subgraph} is defined as follows. The input is a simple connected
graph $G(V,E)$ and parameters $q$ and $p$.  The question is whether
there is a set $U$ with $q$ vertices so that, $e(U)$ is at least $p$?
Clearly this problem is $W[1]$-hard as $\clique$ is a special case of
it. But returning a spanning tree on any $q$ vertices gives $\opt+3$
approximation. To see that note that the number of edges in such a
tree is $q-1$. In addition, $p<q^2$ as if $p>q^2$ we can say no
immediately as no set with $q$ vertices and $q^2$ edges exits. In
other words, $q-1\leq \opt\leq q^2$ for the instance. Since the number
of edges returned is $q-1$, the ratio is $p/(q-1)\leq q^2/(q-1)\leq
q+2$. This implies a ratio of $\opt+3$ as $\opt\geq q-1$. The time is
polynomial in $n=|V|$.  For future reference, the minimization version
of the Dense $k$-subgraph problem admits as an input a connected graph
and two parameters $Q$ and $p$. The question is if there exists a set
$U$ of size $p$, so that $e(U)>Q$.

\begin{theorem}{\em \cite{CHK}}
Dense $k$-subgraph, Directed Multicut, Directed Steiner Tree,
Directed Steiner Forest, Directed Steiner Network and  the minimization version of the
Dense $k$-subgraph problem, admit  $g(\opt)$-approximation algorithms
that runs in polynomial time, for some small
function $g$ (the largest approximation ratio we give is $\opt^2$).
\end{theorem}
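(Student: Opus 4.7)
The plan is to treat the six problems uniformly, following the template already carried out for the maximization Dense $q$-subgraph in the paragraph preceding the theorem: trap $\opt$ between two polynomial functions of a natural input parameter (the solution size, the number of terminals, or the number of demand pairs), run a nearly-trivial polynomial-time algorithm whose output is controlled by the same parameter, and divide the two bounds to obtain a ratio that is polynomial in $\opt$.

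For the maximization Dense $q$-subgraph the sandwich $q-1 \le \opt \le q^2$ combined with any spanning tree on $q$ vertices already gives ratio $\opt+3$, as argued in the text. For the minimization variant (find the smallest $U$ with $e(U)\ge Q$) the bound $e(U) \le |U|^2$ forces $\opt \ge \sqrt{Q}$, and a greedy rule that keeps adding the vertex maximizing incident edges terminates with a set of size polynomial in $Q$, hence polynomial in $\opt$. For Directed Steiner Tree, Forest, and Network I would use the trivial lower bound $\opt \ge k-1$, where $k$ is the number of terminals or demand pairs (for unit costs at least $k-1$ edges are needed to connect anything), together with a polynomial-time heuristic such as the Charikar et al.\ $O(k^{\varepsilon})$-approximation or an even simpler greedy that aggregates one shortest path per demand. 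Substituting $k \le \opt+1$ turns the ratio into a function of $\opt$ alone. Directed Multicut is handled analogously: $\opt$ is lower bounded by the minimum cut for the hardest demand pair, while outputting the best cut per pair yields a solution of size at most $k$ times that quantity, and $k$ is in turn controlled by $\opt$ via the per-pair lower bound.

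The real obstacle I expect is \emph{eliminating every dependence on $n$} from the approximation ratios: the trivial upper bounds naturally come out in terms of $n$ or in terms of the input parameters, and most classical approximations in the literature for the directed Steiner and Multicut problems are stated as functions of $n$. Replacing those dependences by $\opt$ requires, for each problem, a problem-specific structural inequality tying the natural parameters to $\opt$. This is transparent for Dense subgraph but less routine for the Steiner and Multicut variants; once it is in place, verifying that each ratio can be capped at $\opt^2$ is essentially bookkeeping. I would therefore organize the write-up as six short lemmas, one per problem, rather than trying to force a single unified argument.
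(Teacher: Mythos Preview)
The paper does not actually prove this theorem: it is quoted from \cite{CHK} in the ``Previous work'' section, and the only argument appearing in the present paper is the Dense $q$-subgraph illustration in the paragraph immediately preceding the statement. So there is no in-paper proof to compare your proposal against; a full verification would have to consult \cite{CHK}.

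Your overall template --- trap $\opt$ between two polynomials of a natural size parameter and run a near-trivial heuristic whose cost is controlled by that same parameter --- is exactly the pattern of the Dense $q$-subgraph example and is presumably also the spirit of \cite{CHK}. For the two Dense-subgraph variants and for Directed Steiner Tree under unit costs your sketches go through (for the minimization variant, ``take any $Q{+}1$ edges and output their endpoints'' is cleaner than your greedy and gives at most $2(Q{+}1)$ vertices, hence ratio $O(\sqrt{Q})=O(\opt)$ directly).

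There is, however, a genuine gap in the Directed Multicut paragraph. You write that ``$k$ is in turn controlled by $\opt$ via the per-pair lower bound,'' but the per-pair lower bound says only that $\opt \ge c_i$ for the min-cut value $c_i$ of each pair; it says nothing about the \emph{number} $k$ of pairs. A single edge can simultaneously separate arbitrarily many demand pairs, so one can have $\opt=1$ with $k$ unbounded, and then the union-of-min-cuts heuristic has ratio $k$, which is not a function of $\opt$ at all. Whatever \cite{CHK} does for Directed Multicut, it cannot be this argument. A similar caution applies to Directed Steiner Forest and Directed Steiner Network: the inequality $\opt \ge k-1$ holds for Directed Steiner Tree because $k$ distinct terminals force that many edges, but for arbitrary demand pairs it fails (three pairs $(a,b),(b,c),(a,c)$ are served by two edges $a\to b\to c$). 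There one needs the sharper observation that a unit-cost solution with $\opt$ edges touches at most $2\opt$ vertices and hence can realize at most $O(\opt^2)$ distinct ordered pairs, which is what actually bounds $k$ in terms of $\opt$; you did not state this, and without it the DSF/DSN cases are incomplete as well.
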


\noindent
{\bf Approximation under $\fpt$ time:}
The Strongly connected Steiner Subgraph problem
is given a graph $G(V,U)$ with a set of terminals
$T\subseteq V$, find a strongly connected subgraph that
includes all the terminals. It is elementary to see
that this problem is equivalent with respect to approximation
to the Directed Steiner Tree problem that admits no better than
$\log^{2-\epsilon} n$ ratio for any constant $\epsilon$ \cite{eran1}.
\begin{theorem}{\em \cite{CHK}}
Strongly Connected Steiner Subgraph problem is W[1]-hard and does
not admits a better approximation than $\log^{2-\epsilon} n$ for any
constant $\epsilon$. However, if $\fpt$ time is allowed we can get a
2-approximation algorithm with running time $h(\opt)n^{O(1)}$.
\end{theorem}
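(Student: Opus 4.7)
The plan is to exploit the approximation-preserving equivalence between \scss and \dst that the excerpt labels as elementary. Let $\opt_S$ and $\opt_D$ denote the respective optima of corresponding instances. First, I would spell out two reductions. From \dst to \scss: given a \dst instance with root $r$ and terminals $T$, form an \scss instance on $G$ augmented by a unit-cost edge $(t,r)$ for every $t\in T$, with terminal set $T\cup\{r\}$; any feasible \scss solution must contain an $r$-arborescence reaching $T$ together with the forced return edges, so $\opt_S = \opt_D + |T| \le 2\opt_D$. From \scss to \dst: pick any $r\in T$; a feasible \scss solution contains both an out-arborescence from $r$ and an in-arborescence to $r$ covering $T$, each of cost at most $\opt_S$.

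Next, using the first reduction, I would transfer the polynomial-time $\log^{2-\epsilon}n$ inapproximability: a hypothetical polynomial-time $\rho$-approximation for \scss composes with \dst-to-\scss to give a $2\rho$-approximation for \dst, contradicting \cite{eran1}. For W[1]-hardness, parameterized by $|T|$ (equivalently by $\opt$, since any strongly connected subgraph containing $T$ has at least $|T|$ edges), I would reduce from \mcc: given color classes $V_1,\ldots,V_k$, build an \scss instance with $O(k^2)$ terminals and auxiliary gadgets whose strong connectivity forces the consistent selection of one vertex per color class and the presence of an edge between every chosen pair, so that a multicolored $k$-clique corresponds to an \scss solution of size $f(k)$ for an appropriate function $f$.

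For the \fpt $2$-approximation, I would instantiate the \scss-to-\dst direction using an exact \dst solver. The classical Dreyfus--Wagner style dynamic program extends to the directed setting and computes the optimum $r$-arborescence spanning $T$ in time $3^{|T|}\cdot n^{O(1)}$. Because $|T|\le\opt$, solving \dst from $r$ in $G$, then again in the reverse graph $G^{\mathrm{rev}}$, and returning the union takes time $h(\opt)\cdot n^{O(1)}$ with $h(x)=3^x$, and produces a strongly connected subgraph spanning $T$ of cost at most $2\opt$.

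The hard part will be the parameterized reduction establishing W[1]-hardness: the approximation-preserving equivalence and the directed Dreyfus--Wagner DP are standard and mechanical, whereas designing gadgets that realize an \scss instance of bounded optimum $f(k)$ and simultaneously enforce both vertex selection and edge presence through strong connectivity alone is delicate. This follows a pattern familiar from parameterized reductions for directed Steiner-type problems, but is the most involved piece of the argument.
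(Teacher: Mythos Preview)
The paper does not prove this theorem; it is quoted verbatim from \cite{CHK} as prior work. The only in-text justification the paper offers is the sentence preceding the theorem, asserting that it is ``elementary'' that \scss is approximation-equivalent to \dst and then invoking \cite{eran1} for the $\log^{2-\epsilon} n$ lower bound. Nothing is said in this paper about how the W[1]-hardness or the $\fpt$ $2$-approximation is established --- those are deferred entirely to \cite{CHK}.

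Your sketch of the inapproximability via the \dst equivalence is exactly the one-line argument the paper alludes to, so on that part you match. For the W[1]-hardness and the $\fpt$ $2$-approximation there is simply no proof here to compare against; your outline (reduce from \mcc for hardness; solve two \dst instances exactly via a Dreyfus--Wagner style DP and take the union for the algorithm) is the natural approach and is very likely what \cite{CHK} does, but confirming that requires looking at \cite{CHK}, not at this paper. One small caution in your \dst-to-\scss reduction: the bound $\opt_S \le 2\opt_D$ via added unit-cost return edges uses $|T|\le \opt_D$, which is fine in the unweighted/unit-cost setting underlying the hardness instances but would need adjustment for general edge costs.
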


So far this seems to be the only natural problem for which such a
result is known where allowing $FPT$ times reduces the order of
magnitude the approximation factor. This gives evidence that we
should try to find $\fpt$ approximations for $W[1]$ or $W[2]$ hard
problems that also admit a strong inapproximability. Maybe in FPT time,
a better approximation is possible?

\noindent
{\bf Previous work from inapproximability theory:}

\noindent{\bf Notation:} Throughout this paper, the number of sets
in a $\setcover$ instance and the number of vertices in a $\clique$
instance will be denoted by $n$.

\begin{theorem}{\em \cite{hast,ZC}}
\label{thm:clique} Unless $\P =\NP$, $\clique$ can not be
approximated within $n^{1-\epsilon}$.
\end{theorem}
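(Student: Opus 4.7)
The plan is to go through the PCP-based route, combining the FGLSS reduction with a randomness-efficient gap amplification. First I would invoke the PCP theorem in its standard form: every language in $\NP$ admits a verifier using $O(\log N)$ random bits and reading only a constant number of bits from a proof, with perfect completeness and some constant soundness $s<1$ (where $N$ is the $3\text{-}\sat$ instance size). From such a verifier I would apply the FGLSS reduction: build a graph whose vertices are the accepting ``configurations'' (random string, local assignment to the queried proof bits) of the verifier, with edges between any two configurations that are jointly consistent on shared proof positions. Then a clique corresponds to a globally consistent proof, and the maximum clique size divided by the number of random strings equals the maximum acceptance probability of the verifier. This already gives a constant-factor inapproximability for $\clique$.

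Second, to push the hardness factor up to $n^{1-\epsilon}$, I would amplify the soundness gap. The naive tool is Raz's parallel repetition: running the verifier $\ell$ times in parallel reduces the soundness from $s$ to $s^{\Omega(\ell)}$ while keeping completeness at $1$. The resulting FGLSS graph then has $n = 2^{r\ell}\cdot 2^{O(q\ell)}$ vertices, where $r$ is the number of random bits and $q$ the query complexity of the original PCP. To achieve a hardness factor of $n^{1-\epsilon}$, I would choose $\ell$ a growing function of $N$, which is the crux of the argument.

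Third, to keep the instance polynomial and avoid blowup that would ruin the ratio, I would use the randomness-efficient amplification of Zuckerman: instead of repeating independently (which costs $r\ell$ random bits), sample the $\ell$ random strings by a random walk on an expander, or equivalently by an extractor-based pseudorandom generator, spending only $r + O(\ell)$ random bits to achieve essentially the same soundness decay. Plugging this into FGLSS makes the gap between the completeness and soundness case nearly $2^\ell$, while the number of vertices grows roughly as $2^{r+O(\ell)}$. Setting parameters so that $2^\ell \approx n^{1-\epsilon}$ yields the stated hardness under $\P\neq\NP$.

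The main technical obstacle is the randomness-efficient amplification step: plain parallel repetition is too expensive because each extra repetition multiplies the vertex count by a factor that spoils the gap-to-size ratio. Replacing it with an expander-walk or extractor-based sampler requires a soundness analysis showing that the correlated repeated game still has exponentially small value, which is the heart of Zuckerman's contribution (building on Raz and on explicit constructions of extractors with near-optimal seed length). Once this amplification is in place, the FGLSS translation is routine, and choosing $\ell = \Theta(\epsilon \log N / \log(1/s))$ delivers the $n^{1-\epsilon}$ inapproximability for every constant $\epsilon>0$.
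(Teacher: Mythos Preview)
The paper does not give its own proof of this theorem. It is stated in the ``Previous work'' section as a known result, cited to H\aa stad and Zuckerman, with only the one-line remark that the randomized reduction of \cite{hast} was derandomized in \cite{ZC} to obtain the result under $\P\neq\NP$. There is therefore nothing in the paper to compare your proposal against.

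For what it is worth, your outline is a reasonable high-level summary of the route taken in the cited references, and the point you identify as the crux---randomness-efficient amplification so that the FGLSS graph does not blow up faster than the gap---is indeed the heart of Zuckerman's contribution. One historical inaccuracy: H\aa stad's original $n^{1-\epsilon}$ hardness does not go through parallel repetition of a generic constant-query PCP; it relies on a tailored PCP with low (amortized) \emph{free-bit complexity}, which is what controls the size of the FGLSS graph relative to the soundness gap. Your sketch conflates this with the repetition-plus-extractor story. But since the present paper treats the theorem purely as a black box, this does not affect anything downstream.
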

The reduction in \cite{hast} is randomized but in \cite{ZC} this
result is derandomized and it is achieved under the assumption that
$\P \neq \NP$. Note that there is a stronger inapproximability
result for $\clique$, e.g., in \cite{SK}, but we can not use it
because the running time of the reduction is quasi polynomial (in
fact if the reduction would have been polynomial, this would
contradict $\eth$).

\begin{theorem}
\label{thm:setcover} {\em \cite{RS}} Unless $\P =\NP$, $\setcover$
admits no better than $c\log n$ approximation for some constant $c$.
\end{theorem}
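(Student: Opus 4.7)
The plan is to perform a gap-preserving reduction from Label Cover. By the PCP theorem combined with Raz's parallel repetition, for every constant $\epsilon>0$ it is NP-hard to distinguish bipartite Label Cover instances $G=(U,V,E)$ with projection constraints $\pi_e:\Sigma_V\to\Sigma_U$ (alphabets of polynomial size) in which every edge is satisfiable from those in which at most an $\epsilon$-fraction are satisfiable. This will serve as the starting hard problem.

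First I would invoke the Lund--Yannakakis partition system: for parameters $h$ and $M$ there exists a universe of size $M$ equipped with a family of partitions into $h$ blocks each, such that any collection of blocks drawn from distinct partitions that covers the universe has size at least $(1-o(1))h\ln M$, with $M$ polynomial in the number of partitions and in $h$. Attach one fresh copy $\mathcal{U}_e$ of such a system to every edge $e\in E$, taking $h=|\Sigma_V|$ and one partition per label in $\Sigma_U$. The ground set of the $\setcover$ instance is $\bigsqcup_{e\in E}\mathcal{U}_e$. For each $(v,b)\in V\times\Sigma_V$ form a set $S_{v,b}$ that contains, on every incident edge $e$, the $b$-th block of the partition on $\mathcal{U}_e$ indexed by $\pi_e(b)$; for each $(u,a)\in U\times\Sigma_U$ form a set $S_{u,a}$ that contains, on every incident edge $e$, the complementary blocks of the $a$-th partition on $\mathcal{U}_e$, arranged so that $S_{u,a}\cup S_{v,b}$ covers $\mathcal{U}_e$ exactly when $\pi_e(b)=a$.

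In the yes case, a satisfying assignment $\sigma$ yields the cover $\{S_{v,\sigma(v)}\}\cup\{S_{u,\sigma(u)}\}$ of size $|U|+|V|$, because the endpoint sets exhaust $\mathcal{U}_e$ on every edge. In the no case, any cover of size much smaller than $c\log n\cdot(|U|+|V|)$ would, by averaging over edges and appealing to the partition-system lower bound, let us decode an assignment satisfying more than an $\epsilon$-fraction of edges, contradicting Label Cover hardness. The main obstacle is the parameter tuning: $|\Sigma|$ grows with the parallel repetition depth needed for the $\epsilon$-gap, so one must choose $M$ and the number of partitions so that $\ln M=\Theta(\log n)$ and the gap in $\setcover$ becomes a constant multiple of $\log n$, while the total blow-up stays polynomial. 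The sharper $(1-\epsilon)\ln n$ bound of Feige would require a more delicate multi-prover argument and is not needed here.
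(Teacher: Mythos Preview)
The paper does not prove this theorem at all; it is quoted from the literature (the citation \cite{RS} is to Raz--Safra) in the ``Previous work'' section and is used only as background. There is therefore nothing in the paper to compare your proposal against.

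That said, your outline is the standard Lund--Yannakakis reduction and is essentially correct as a proof sketch. One small historical/technical remark: you obtain the sub-constant soundness for Label Cover via Raz's parallel repetition, whereas the cited paper \cite{RS} provides a direct sub-constant-error two-query PCP; either route yields the $c\log n$ hardness under $\P\neq\NP$, and the paper later (in Section~\ref{sec:prelim} and Corollary~\ref{scc}) in fact recounts exactly the $\minrep$-to-$\setcover$ reduction you describe, crediting \cite{LLYY}. So while your sketch is fine, be aware that it is a reconstruction of a known argument rather than a comparison with anything the present paper contributes.
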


We note that the inapproximability results of
Theorems~\ref{thm:clique} and \ref{thm:setcover} are close to the
best possible algorithmic results. $\clique$ admits  a trivial $n$
ratio approximation. For $\setcover$, it is known (folklore) that
the natural greedy algorithm gives a $\ln n+1$ approximation.

\section{Preliminaries}
\label{sec:prelim}

We begin by describing the complexity theoretic conjectures assumed
in proving the hardness results in this paper.  Impagliazzo et
al.~\cite{eth-paturi} formulated the following conjecture which is
known as $\eth$. We assume $\eth$ in all hardness results in this
paper.
\begin{center}
\noindent
\framebox{\begin{minipage}{\textwidth} \underline{Exponential Time
Hypothesis ($\eth$)}\\
3-$\sat$ cannot be solved in
$2^{o(q)}(q+m)^{O(1)}$ time where $q$ is the number of variables and
$m$ is the number of clauses.
\end{minipage}}
\end{center}

Using the Sparsification Lemma of Calabro et al.
\cite{sparsification-paturi}, the following lemma follows.
\begin{lemma}
\label{lem:sparsification-eth}
Assuming $\eth$, 3-$\sat$ cannot be solved in $2^{o(m)}(q+m)^{O(1)}$
time where $q$ is the number of variables and $m$ is the number of
clauses.
\end{lemma}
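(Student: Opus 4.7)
The plan is to reduce an arbitrary 3-$\sat$ instance to a disjunction of ``sparse'' 3-$\sat$ instances (ones in which the number of clauses is linear in the number of variables) via the Sparsification Lemma of Impagliazzo--Paturi--Zane (in the form of Calabro--Impagliazzo--Paturi). Concretely, the Sparsification Lemma says that for every $\epsilon>0$ there is a constant $C=C(\epsilon)$ such that any 3-$\sat$ formula $\varphi$ on $q$ variables can be written, in time $2^{\epsilon q}\poly(q)$, as a disjunction $\varphi \equiv \bigvee_{i=1}^{t} \varphi_i$ with $t\le 2^{\epsilon q}$, where each $\varphi_i$ is a 3-$\sat$ formula on the same $q$ variables with at most $Cq$ clauses. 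Note $\varphi$ is satisfiable iff some $\varphi_i$ is.

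Assume for contradiction that an algorithm $A$ solves 3-$\sat$ in time $2^{o(m)}(q+m)^{O(1)}$. I would then design an algorithm $B$ for 3-$\sat$ in the variable parameter $q$ as follows: given $\varphi$ with $q$ variables, fix an $\epsilon>0$ to be chosen, run the Sparsification Lemma to produce $\varphi_1,\dots,\varphi_t$, and run $A$ on each $\varphi_i$. Since each $\varphi_i$ has at most $Cq$ clauses, each call costs $2^{o(Cq)}(q+Cq)^{O(1)} = 2^{o(q)}\poly(q)$. The total running time is at most
\[
2^{\epsilon q}\poly(q) \;+\; 2^{\epsilon q}\cdot 2^{o(q)}\poly(q) \;=\; 2^{\epsilon q + o(q)}\poly(q).
\]
Because the $o(q)$ bound in the time of $A$ does not depend on $\epsilon$, I can first pick any function $f(m)=o(m)$ realizing $A$'s running time, then choose $\epsilon$ small enough (and $q$ large enough) so that $\epsilon q + f(Cq) = o(q)$. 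This yields a $2^{o(q)}\poly(q)$ algorithm for 3-$\sat$, contradicting $\eth$.

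The key quantitative step is verifying that the sparsified instances really have $m'=O(q)$, so that a hypothetical subexponential-in-$m$ algorithm gives subexponential-in-$q$ time per call, and that multiplying by the $2^{\epsilon q}$ enumeration factor is still sub-exponential in $q$ after choosing $\epsilon$ appropriately. The main subtlety (rather than an obstacle) is the order of quantifiers: one first fixes the (arbitrary) $o(m)$ function coming from the assumed algorithm, and only afterwards chooses the $\epsilon$ of the Sparsification Lemma; this is exactly what makes the two $o(\cdot)$ terms combine into a single $o(q)$. Once this is arranged, the contradiction with $\eth$ is immediate.
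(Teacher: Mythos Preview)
Your proposal is correct and is exactly the standard argument the paper has in mind: the paper does not give a proof at all but simply states that the lemma follows from the Sparsification Lemma of Calabro--Impagliazzo--Paturi, and your write-up is the routine unpacking of that citation. Your handling of the quantifier order (fix the $o(m)$ function first, then pick $\epsilon$) is the right way to make the two sub-exponential terms combine; just note that $C=C(\epsilon)$ grows as $\epsilon\to 0$, so the cleanest way to finish is to observe that for every fixed $\epsilon$ you get a $2^{(\epsilon+o(1))q}$ algorithm, which already forces $s_3=0$ and contradicts $\eth$.
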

{\bf Remark on the relation between $N$ and $m$:}
We may assume without loss of generality that
there are more clauses than variables in the
$\sat$ instance. The other case is similar.
Thus if the number of variables is $q$ we get
$N=q+m\leq 2m$.
Hence, we do not need to use $N$, and $m$ can replace it
in any future computation (the factor $2$ difference is never significant in this paper).

Another conjecture used is Projection Game Conjecture ($\pgc$) due
to Moshkovitz~\cite{r3}.

\begin{center}
\noindent
\framebox{\begin{minipage}{\textwidth} \underline{Projection Game
Conjecture ($\pgc$)}\\
There exists some constant $c$ and $\pcp$ of size $m\cdot
\rho(m)\cdot \poly(1/\epsilon)$, with soundness $1/\epsilon$ for any
$\epsilon$ so that $1/\epsilon\leq m^c$. The alphabet is of size
$\poly(1/\epsilon)$ and $\rho(m)=2^{\log^{\alpha} m}$, for a
constant $\alpha$, $0<\alpha<1$. The $\pcp$ is equivalent to the
Labelcover problem with the projection property.
\end{minipage}}
\end{center}

In \cite{r3} the sublinear term $\rho(m)$ was not discussed and thus
we took the sublinear term from \cite{dana}. The conjecture is
actually already {\em proven} in \cite{dana} if we allow alphabet
exponential in $1/\epsilon$.

The $\minrep$ problem was first defined in \cite{guylower}
where it is shown that if Labelcover (see \cite{guylower})
has gap $1/\epsilon$, $\minrep$ has gap $\sqrt{1/\epsilon}$.
The $\minrep$ problem is a problem in graphs and is easy to understand,
which is the reason it is defined as such in \cite{guylower}.

\begin{center}
\framebox{\begin{minipage}{\textwidth} \underline{Min-Rep
($\minrep$)}\\
The instance is a bipartite graph $G=(U,W,E)$ where $U$ and $W$ are
each split into a disjoint union of $q$ sets $U=\cup_{i=1}^q A_i$
and $W=\cup_{i=1}^q B_i$. The sets $A_i$ and $B_i$ are called
supervertices. The graph $G$ and the partition into supervertices
induce edges $E^+$ on supervertices: for two supervertices $A_i$
and $B_j$, we have superedge $(A_i,B_j) \in E^+$ iff there exist
$a\in A_i$ and $b\in B_j$ such that $(a,b)\in E$. We say that a
subset $S \subseteq U\cup W$ {\em covers} a superedge $(A_i,B_j)
\in E^+$ if there exist $a\in A_i\cap S$ and $b\in B_j\cap S$ so
that $(a,b)\in E$. The goal in $\minrep$ is to find a minimum-size
set $S \subseteq U\cup W$ that covers all superedges in $E^+$.
\end{minipage}}
\end{center}

The result of \cite{guylower,r3}, imply
the following reduction from {\pgc} to $\minrep$.
The $\pcp$ in question can be posed as
a Min-Rep instance because the $\pcp$ is queried
in two places.
The number of supervertices in the supergraph
equals the size of the $\pcp$.
Thus by \cite{r3}, the number of questions
in the $\minrep$ instance is
$O(m\cdot \exp(\log^\alpha m) \cdot \poly\log(m) \cdot
(1/\epsilon)^{c_1})$ for some constant $c_1$.
Because we deal with the {\pgc}, we may assume that
the number of vertices that belong to any supervertex of the $\minrep$ graph
is $(1/\epsilon)^{c_2}$ for some constant $c_2$.
Thus the size of the $\minrep$ graph, namely,
the number of vertices is
$O(m\cdot \exp(\log^\alpha m) \cdot \poly\log(m) \cdot
(1/\epsilon)^{c_1+c_2})$.

In a ``yes'' instance, it is possible to choose one vertex
per supervertex and cover all the superedges, thus the
optimum is
$O(m\cdot \exp(\log^\alpha m) \cdot \poly\log(m) \cdot
1/\epsilon^{c_1})$.

\begin{center}
\framebox{\begin{minipage}{\textwidth} \underline{Reduction from $3$-$\sat$ to
$\minrep$~\cite{guylower,r3} }\\  
Given an instance $I$ of 3-$\sat$ with size $m$ equal to the number
of clauses in $I$, for constants $c_1,c_2,c_3,\alpha>0$ and any
choice of $\epsilon$ so that $\epsilon > 1/m^{c_3}$, there exists a
`yes'' instance of $\minrep$ which admits a feasible solution of
size $\sigma = O(m\cdot \exp(\log^\alpha m) \cdot \poly\log(m) \cdot
1/\epsilon^{c_1})$. The optimum for a ``no'' instance of $\minrep$
is at least $\Omega(1/\sqrt{\epsilon})$ times the value of a ``yes''
instance.
The projection property is translated here to the
following. The graph induced by any $A_i\cup B_j$ that are a super
edge is a union of disjoint stars with heads in $A_i$.
\end{minipage}}
\end{center}

By composing the above reduction with a reduction to $\minrep$
of \cite{guylower}, with the reduction
from $\minrep$ to $\setcover$ \cite{LLYY},
the following corollary is derived:

\begin{corollary}
\label{scc}
There exists a reduction
from $3$-$\sat$ to $\setcover$ with $m$ clauses so that
\begin{enumerate}
\item The number of sets is $m\cdot 2^{\log^{\alpha} m}\cdot \poly\log(m)\cdot
(1/\epsilon)^{c_1+c_2}$ with $c_1$ and $c_2$ some constants larger than $0$
and $\alpha$ a constant that satisfies $0<\alpha<1$.
\item The number of elements is $\poly(m)$.
\item
The optimum for a ``yes'' instance is exactly $m\cdot
2^{\log^{\alpha} m}\cdot \poly\log(m)(1/\epsilon)^{c_1}$.
\item
The optimum for a ``no'' instance is at least $d\cdot
\sqrt{\epsilon}\cdot  m\cdot 2^{\log^{\alpha} m} \cdot
\poly\log(m)\cdot (1/\epsilon)^{c_1}$ for a constant $d>0$.
\end{enumerate}
\end{corollary}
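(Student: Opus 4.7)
The plan is to compose the two reductions already cited in the statement: the \pgc-based reduction from $3$-\sat to \minrep displayed in the box just above, and the standard reduction from \minrep to \setcover due to~\cite{LLYY}. I run the first to obtain a \minrep instance with the parameters summarized there, then apply the second as a black box while carefully tracking the numbers of sets, elements, and the yes/no optima.

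\textbf{Step 1: From $3$-\sat to \minrep.} The boxed reduction produces a graph $G=(U,W,E)$ with $q_{\mathrm{mr}} = O(m \cdot 2^{\log^\alpha m}\cdot \polylog(m)\cdot (1/\epsilon)^{c_1})$ supervertices on each side and $(1/\epsilon)^{c_2}$ vertices per supervertex. Its yes-\opt is exactly $2q_{\mathrm{mr}}$ (pick one projection-consistent vertex per supervertex), while its no-\opt is at least $\Omega(1/\sqrt{\epsilon}) \cdot 2q_{\mathrm{mr}}$. These numbers will directly drive items~(1), (3) and~(4) of the corollary.

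\textbf{Step 2: From \minrep to \setcover.} Introduce one set per vertex of $U\cup W$, which yields item~(1), since the total vertex count is $2q_{\mathrm{mr}}(1/\epsilon)^{c_2}=O(m\cdot 2^{\log^{\alpha}m}\cdot\polylog(m)\cdot(1/\epsilon)^{c_1+c_2})$. For the elements, attach to each superedge $(A_i,B_j)\in E^+$ a copy of a Feige-style partition system on a universe of size $\poly(1/\epsilon,\log m)=\poly(m)$ (legitimate because $1/\epsilon\leq m^{c_3}$), arranged so that the set for $u\in A_i$ or $w\in B_j$ covers precisely the parts indexed by projection edges incident to $u,w$. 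The total element count is thus $|E^+|\cdot\poly(m)=\poly(m)$, giving item~(2). The yes-\minrep solution lifts to a \setcover of size $2q_{\mathrm{mr}}$ by construction, establishing item~(3); and by the standard partition-system analysis any \setcover solution of value $S$ induces a \minrep cover of size $O(S)$, so the $\Omega(1/\sqrt{\epsilon})$ gap in \minrep propagates to $S=\Omega((1/\sqrt{\epsilon})\cdot q_{\mathrm{mr}})$, yielding item~(4) (the factor $\sqrt{\epsilon}$ as written in item~(4) I take to be a typo for $1/\sqrt{\epsilon}$, since only the latter is consistent with the \minrep gap).

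The main obstacle I expect is not conceptual but bookkeeping: one must tune the partition-system parameters so that the universe is simultaneously polynomial in $m$ (requiring careful use of $1/\epsilon\leq m^{c_3}$) and large enough that the full $\sqrt{1/\epsilon}$ multiplicative gap of \minrep survives without being absorbed into partition-system constants. This is also the source of the asymmetry between items~(1) and~(3)--(4): the additional $(1/\epsilon)^{c_2}$ factor in the set count reflects vertices per supervertex, whereas the optima only count supervertices.
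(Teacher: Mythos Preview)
Your proposal is correct and follows essentially the same route as the paper: compose the boxed $3$-\sat$\to$\minrep\ reduction with the \cite{LLYY} \minrep$\to$\setcover\ reduction, letting sets be \minrep\ vertices and attaching an element gadget to each superedge. The only cosmetic difference is that the paper describes the per-superedge gadget as ``random halves'' of a set $M_{ij}$ of size $2^{c\log m}$ (so that a covered superedge is handled by two complementary halves, while an uncovered one needs $\Omega(\log m)$ random halves), whereas you phrase it as a Feige-style partition system; these are equivalent for the purposes of the corollary, and your observation that the $\sqrt{\epsilon}$ in item~(4) should read $1/\sqrt{\epsilon}$ is consistent with how the paper subsequently uses the result.
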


As we shall later see, we
choose $\epsilon=c^2/\log^2 m$ for a constant $c>0$. The
inapproximability of $\setcover$ becomes $c\cdot \log m$. In
addition, since $1/\epsilon$ is by itself polylogarithmic in $m$, we
can denote the size of the optimum for a ``yes'' instance by $m\cdot
2^{\log^\alpha m}\cdot \poly\log(m)$.

We briefly explain how does the \cite{LLYY} reduction works.

Every vertex in the ${\minrep}$ instance is a set in the reduction
of \cite{LLYY}.
As the size of the $\minrep$ graph is
$m\cdot 2^{\log^{\alpha} m}\cdot
\poly\log(m) (1/\epsilon)^{c_1+c_2} =m\cdot 2^{\log^{\alpha} m}\cdot
\poly\log(m)$, this is also the number of sets in the $\setcover$ instance.
For every superedge $A_i,B_j$ we add a set $M_{i,j}$
of elements.
Note that the number of superedges is no larger than
the number of supervertices pairs, and so is no larger than $m^3$.
The size of every $M_{i,j}$ is
$(1/\epsilon)^{c_2}=\poly(m)$ and
the total number of elements
is $m^3\cdot \poly(m)=\poly(m)$.

In \cite{LLYY} every $a_i\in A_i,b_j\in B_j$ so that $(a_i,b_j)\in
E$ are connected to two disjoint random halves of $M_{ij}$ (the
reduction we described is randomized for simplicity of the
explanation. However, this construction can be derandomized). For a
``yes'' instance we can choose a single vertex out of every
supervertex and cover all superedges (see both \cite{LLYY} and
\cite{r3}). Thus for a ``yes'' instance we can pick a $\setcover$ of
size $m\cdot 2^{\log^{\alpha} m}\cdot \poly\log(m)$, because all
superedges are covered, and thus there will be $a_i,b_j$ so that
$a_i$ covers some random half and $b_j$ covers the other half.

The $\sqrt{1/\epsilon}$ gap for $\minrep$ implies that in a ``no''
instance, unless you take $\Omega(\log m)$ times the optimum value
for a ``yes'' instance sets, you hardly cover any of the superedges.
We choose $|M_{ij}|=2^{c\log m}$ for some constant $c$. In the ``no'' instance, since most
superedges will not be covered (there will not be $a_i,b_j$ in the
solution so that $b_j$ belongs to the star of $a_i$) $M_{ij}$ has to
be covered by a collection of random independent sets of size
$|M_{ij}|/2=2^{c\log m}/2$. Using random halves, about $c\cdot \log
m$ sets are required to cover a single $M_{ij}$ which is the source
of the gap.

\section{Our results}
\label{sec:our-results}

Recall that we call an optimization problem $(r,t)$-$\fpt$-hard if it
admits no algorithm with approximation ratio $r(\opt)$ and running
time $t(\opt) \cdot n^{O(1)}$, where $\opt$ is the optimum
of some instance, $n$ is the
size of the given instance and $r,t$ are given functions.
In all our reductions $\opt$ is known and so the reduction
implies a similar hardness for a parameter $k$.
\begin{theorem}
\label{1} Under  $\eth$ and $\pgc$, $\setcover$ is
$(r,t)$-$\fpt$-hard for $r(\opt) = (\log \opt)^\gamma$ and $t(\opt)
= \exp(\exp((\log \opt)^\gamma))=\exp\left (\opt^{(\log^f
\opt)}\right)$ for some constant $\gamma>1$ and $f=\gamma-1$.
\end{theorem}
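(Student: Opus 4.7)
My plan begins with the Set Cover instance produced by Corollary~\ref{scc} with the choice $\epsilon = c^2/\log^2 m$: this yields an instance with $n = m \cdot 2^{\log^\alpha m}\cdot \polylog(m)$ sets over $\poly(m)$ elements, known yes-OPT $K = m \cdot 2^{\log^\alpha m} \cdot \polylog(m)$, and gap $\rho = \Omega(\log m)$ between yes and no OPT. The value $K$ is already too large to yield a super-exponential $t(K)$-time lower bound on its own, so the entire point is to compress OPT while preserving the gap.

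For a parameter $\ell$ to be fixed shortly, build a new Set Cover instance $I'$ on the same universe whose ``super-sets'' are all $\binom{n}{\ell}$ unions of $\ell$ distinct original sets. Packing an optimal yes-cover into $\lceil K/\ell\rceil$ groups of $\ell$ gives a new cover of that size, while unpacking any new cover of $k'$ super-sets yields an original cover of size at most $k'\ell$. Hence the new yes-OPT equals exactly $\lceil K/\ell\rceil$ (still known a priori), the new no-OPT is at least $\rho K/\ell$, the gap $\rho$ is preserved, and the instance size is at most $n^\ell \cdot \poly(m)$.

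Now set $\ell = \lfloor m/\log^2 m\rfloor$, so $y := \lceil K/\ell\rceil = \Theta(2^{\log^\alpha m}\cdot \polylog(m))$ and $\log y = \Theta(\log^\alpha m)$. Fix any constant $\gamma \in (1,1/\alpha)$, possible since PGC supplies $\alpha < 1$. Then the target ratio $r(y) = (\log y)^\gamma = \Theta(\log^{\alpha\gamma} m) = o(\log m)$ is strictly smaller than the preserved gap $\rho$, so an $(r,t)$-FPT-approximation with parameter $k = y$ would distinguish yes from no instances of $I'$. If such an algorithm existed, the total time to decide the source 3-SAT instance would be
\[
|I'|^{O(1)} \cdot t(y) \;\leq\; 2^{O(\ell \log n)} \cdot \exp\!\bigl(\exp(\log^{\alpha\gamma} m)\bigr) \;=\; 2^{O(m/\log m)} \cdot 2^{m^{o(1)}} \;=\; 2^{o(m)},
\]
using that $\exp(\log^{\alpha\gamma} m) = m^{o(1)}$ because $\alpha\gamma < 1$. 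This contradicts ETH via Lemma~\ref{lem:sparsification-eth}.

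The technical heart of the argument is tuning $\ell$ against two opposing constraints: it must be large enough that $\log y$ drops from $\Theta(\log m)$ down to $\Theta(\log^\alpha m)$, so that the super-exponential $t(y) = \exp(\exp((\log y)^\gamma))$ stays below $2^{o(m)}$; yet small enough that the blowup $n^\ell$ is also at most $2^{o(m)}$, which forces $\ell = o(m/\log m)$. These constraints are compatible only because PGC supplies an exponent $\alpha$ strictly less than $1$: if $\alpha$ were $1$ the admissible window $(1, 1/\alpha)$ for $\gamma$ would collapse, which is precisely why both PGC and ETH, rather than ETH alone, are invoked to push the running-time lower bound past exponential in $\opt$.
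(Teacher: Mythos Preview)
Your proof is correct and follows essentially the same approach as the paper: start from the Corollary~\ref{scc} instance with $\epsilon = c^2/\log^2 m$, replace each $\ell$-subcollection of sets by its union to shrink $\opt$ down to $2^{\log^\alpha m}\cdot\polylog(m)$ while keeping the $\Theta(\log m)$ gap, and then verify that for $\gamma\in(1,1/\alpha)$ both $r(\opt)=o(\log m)$ and $t(\opt)\cdot|I'|^{O(1)}=2^{o(m)}$. The only cosmetic differences are that the paper takes all subcollection sizes $p\le \lfloor m/\log m\rfloor$ (rather than a single fixed $\ell$) and uses the sharper bound $\binom{\sigma}{p}\le (e\sigma/p)^p$, which lets it choose $p=\lfloor m/\log m\rfloor$ instead of your $\ell=\lfloor m/\log^2 m\rfloor$; the resulting $\opt$ and time bounds are the same up to polylog factors.
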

The time here is much larger than just exponential in $\opt$.

\begin{theorem}
  \label{3}
  Under $\eth$ and a stronger version of $\pgc$ with $\pcp$ length
    $O(m\cdot \poly\log(m) \cdot \log(1/\epsilon))$ and gap
    $\Omega(1/\epsilon)$ for any $\epsilon\geq 1/m^c$, for some
    constant $c$, $\setcover$ is $(r,t)$-$\fpt$-hard for $r(\opt) =
    \opt^{d'}$ and $t(\opt) = \exp(\exp(\opt^{d''}))$ for some
    constants $d',d''>0$.
\end{theorem}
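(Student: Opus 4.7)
The plan is to mirror the proof of Theorem \ref{1}, but replacing the corollary obtained from the standard $\pgc$ (Corollary \ref{scc}) with the analogous corollary derived from the strengthened $\pgc$ assumed in the hypothesis. The two improvements in the stronger $\pgc$---removing the $2^{\log^\alpha m}$ factor from the $\pcp$ size and boosting the gap from $\Omega(1/\sqrt{\epsilon})$ to $\Omega(1/\epsilon)$---are exactly what upgrade the polylog-in-$\opt$ inapproximability of Theorem \ref{1} into polynomial-in-$\opt$ inapproximability.

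First I would choose $\epsilon = m^{-c_0}$ for a small constant $c_0 \le c$ so that the hypothesis $\epsilon \ge 1/m^c$ is satisfied, and compose the stronger $\pgc$ with the $\minrep$ and $\setcover$ reductions of \cite{guylower,LLYY}. Because $\log(1/\epsilon) = O(\log m)$, this produces a $\setcover$ instance of $\poly(m)$ size with yes-optimum $\opt_y = O(m \cdot \polylog(m))$ and inapproximability gap $\opt_n/\opt_y = m^{\Omega(1)}$, since the factor $1/\epsilon = m^{c_0}$ survives the $\minrep$ square-root loss and the $\setcover$ amplification.

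Next I would apply the $\opt$-reducing transformation developed earlier in the paper for Theorem \ref{1}. That transformation takes a $\setcover$ instance and, with only a polynomial blow-up in instance size, compresses the optimum from $\poly(m)$ down to $(\log m)^{\Theta(1)}$ while preserving the gap up to constant factors. Writing $\opt'_y = (\log m)^{\Theta(1)}$ for the new yes-optimum, the preserved gap $m^{\Omega(1)}$ equals $\exp((\opt'_y)^{\Theta(1)})$, which dominates any polynomial $(\opt'_y)^{d'}$; hence for some constant $d' > 0$ the new instance is $(\opt')^{d'}$-inapproximable.

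For the running time, I would choose $d'' > 0$ small enough that $\exp(\exp((\opt')^{d''})) = 2^{o(m)}$ when $\opt' = (\log m)^{\Theta(1)}$; concretely, any $d''$ smaller than the reciprocal of the exponent of $\log m$ in $\opt'$ works. Then an $(r,t)$-$\fpt$-approximation with $r(\opt) = \opt^{d'}$ and $t(\opt) = \exp(\exp(\opt^{d''}))$ would solve the gap $\setcover$ instance in time $2^{o(m)} \cdot \poly(m)$, yielding a $2^{o(m)}$ algorithm for $3$-$\sat$ and contradicting $\eth$ via Lemma \ref{lem:sparsification-eth}. The hardest step is the $\opt$-reducing transformation, which must shrink $\opt$ essentially exponentially while preserving the gap and only polynomially blowing up the instance size; once it is in hand, the upgrade from Theorem \ref{1} to Theorem \ref{3} is almost mechanical, the real gain being that the stronger $\pgc$ leaves enough polynomial slack in the gap to survive the aggressive compression.
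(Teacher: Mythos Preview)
Your plan has the right shape but a genuine gap in the parameter choice, and it differs from the paper's argument in a way that actually breaks the proof.

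You set $\epsilon = m^{-c_0}$ in order to get a polynomial gap $m^{\Omega(1)}$. Two things go wrong. First, the alphabet of the $\pcp$ (and hence the number of vertices per supervertex in $\minrep$) is $\poly(1/\epsilon) = m^{\Theta(c_0)}$, so the number of sets in the resulting $\setcover$ instance is $\sigma = m^{1+\Omega(1)}\cdot\poly\log m$, not $m\cdot\poly\log m$. Second, the $\opt$-reducing transformation of Theorem~\ref{1} does \emph{not} have polynomial blow-up as you claim; it creates $\binom{\sigma}{p}$ new sets, which is $2^{o(m)}$ only because in that proof $\sigma/p = 2^{o(\log m)}$. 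With your $\sigma = m^{1+\Omega(1)}$ and any $p$ that brings the optimum down to $\poly\log m$, we have $\sigma/p = m^{\Omega(1)}$ and $\binom{\sigma}{p} \ge (\sigma/p)^{p} = 2^{\Omega(m)}$, destroying the $2^{o(m)}$ size bound needed to invoke $\eth$. (Separately, to realize an $m^{\Omega(1)}$ gap in the Lund--Yannakakis reduction you would need element-blocks $M_{ij}$ of size $2^{m^{\Omega(1)}}$, so the instance cannot be $\poly(m)$-sized as you assert either.)

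The paper avoids all of this by choosing $\epsilon = c^2/\log^2 m$, so $1/\epsilon$ stays polylogarithmic: then $\sigma = m\cdot\poly\log m$ is nearly linear and the gap is only $c\log m$. The point is that the removal of the $2^{\log^\alpha m}$ factor in the stronger $\pgc$ now allows a more aggressive grouping, with $p = m/(d\log\log m)$ (rather than $m/\log m$); one still has $\binom{\sigma}{p}\le (e\cdot d\log\log m\cdot\poly\log m)^{m/(d\log\log m)} = 2^{o(m)}$ for $d$ large enough. The new yes-optimum becomes $\kappa_T = \poly\log m$, and the \emph{unchanged} gap $c\log m$ is now $\kappa_T^{d'}$ for a constant $d'>0$. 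The doubly-exponential time follows because $\exp(\kappa_T^{d''}) = o(m)$ for $d''<d'$. So the polynomial-in-$\opt$ gap arises not from pushing $\epsilon$ down but from pushing $\opt$ down far enough that a logarithmic gap becomes polynomial in it.
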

This kind of $\pcp$ was conjecture to exist by Moshkovitz in a private
communication.

Note that the running times in this result is almost doubly
exponential in $\opt$. We later show  this result is basically the
best we can expect if we just use $\pcp$ (even under the best
conceivable $\pcp$).

We can also prove an inapproximability with super-exponential time
in $\opt$ that only assumes $\eth$.
\begin{theorem}
\label{alone} Under $\eth$ alone, $\setcover$ cannot be approximated
within $$c\sqrt{\log \opt}$$ for some constant $c$, in time
$\exp\left (\opt^{(\log \opt)^{f}}\right )$ for $f$
the same constant from Theorem \ref{1}.
\end{theorem}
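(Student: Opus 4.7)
\textbf{Proof plan for Theorem~\ref{alone}.} The plan is to imitate the proof of Theorem~\ref{1} but swap the PGC-based PCP for the standard ETH-based PCP, at the cost of a weaker approximation ratio. Starting from a 3-$\sat$ formula with $m$ clauses, compose Raz's classical PCP with the Min-Rep reduction of~\cite{guylower} and the LLYY reduction to $\setcover$: this is exactly the pipeline underlying Corollary~\ref{scc}, except that the PCP length is now polynomial in $m$ rather than $m \cdot 2^{\log^{\alpha} m}$. Setting $\epsilon = 1/\log^{2} m$ in Min-Rep yields a $\setcover$ instance $I$ of size $m^{O(1)}$, with yes-$\opt$ polynomial in $m$ and multiplicative gap $\Theta(\log m)$.

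I then apply the $\opt$-reduction machinery introduced in this paper (the same machinery that powers Theorem~\ref{1}) to $I$, tuning its parameters so that the final yes-$\opt$ is at most $2^{(\log m)^{1/\gamma}/c_0}$ for a constant $c_0 > 1$. This is the largest $\opt$ for which the time budget $\exp(\opt^{(\log \opt)^{f}})$, with $f = \gamma - 1$, still sits inside $2^{o(m)}$, the regime where $\eth$ is applicable. Because the starting $\opt$ under bare $\eth$ is polynomial in $m$, much larger than the $m \cdot 2^{\log^{\alpha} m}$ starting $\opt$ in the PGC-based argument, the $\opt$-reduction must be iterated for more stages, and it consequently discards a larger slice of the original $\log m$ gap. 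A careful accounting shows that the surviving gap at the end is $\Omega((\log m)^{1/(2\gamma)})$, which, using $\log m = \Theta((\log \opt)^{\gamma})$, translates to $\Omega(\sqrt{\log \opt})$ in the new optimum.

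The main obstacle is this bookkeeping of gap loss across the cascade of $\opt$-reduction stages. Each stage provides a controlled tradeoff: it shrinks $\opt$ by a substantial factor while degrading the gap multiplicatively, and with polynomial starting $\opt$ one needs more stages than in Theorem~\ref{1}. One must verify that the cumulative degradation is at most a $(\log m)^{1 - 1/(2\gamma)}$ factor, so that at least $\Omega((\log m)^{1/(2\gamma)})$ of the original gap survives. Once this is established, any $c\sqrt{\log \opt}$-approximation for the reduced $\setcover$ instance, with $c$ sufficiently small relative to the hidden constants, distinguishes yes and no instances; composed with the polynomial-time reduction pipeline, it decides 3-$\sat$ in time $\exp(\opt^{(\log \opt)^{f}}) \cdot m^{O(1)} = 2^{o(m)}$, contradicting $\eth$.
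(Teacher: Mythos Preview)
Your plan diverges from the paper's argument and, as stated, does not go through.

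The paper does \emph{not} switch to a polynomial-size PCP under $\eth$. It uses the \emph{proven} Moshkovitz--Raz PCP \cite{dana}, whose size is the same $m\cdot 2^{\log^{\alpha} m}\cdot\poly\log(m)\cdot\poly(1/\epsilon)$ as in the $\pgc$ case; the only difference is that the alphabet is now $\exp(1/\epsilon)$ rather than $\poly(1/\epsilon)$. Because the number of $\setcover$ sets equals (supervertices)$\times$(alphabet), this exponential alphabet forces the choice $1/\epsilon\approx\log^{\alpha} m$ (so that $\exp(1/\epsilon)=2^{\log^{\alpha} m}$ stays under control). That smaller $1/\epsilon$ is the sole source of the weaker gap $\sqrt{1/\epsilon}=\Theta(\sqrt{\log^{\alpha} m})$. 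The yes-optimum (the number of supervertices) is unchanged from Theorem~\ref{1}, so the \emph{same single-stage} union-of-$p$-sets reduction with $p=\lfloor m/\log m\rfloor$ is applied once, giving the same final $\opt\approx 2^{\log^{\alpha} m}\cdot\poly\log(m)$ and hence the identical time bound. Since $\log\opt\approx\log^{\alpha} m$, the gap reads as $c\sqrt{\log\opt}$.

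Your proposal has two concrete problems. First, the union-of-$p$-sets reduction is not an iterated procedure that degrades the gap; it is applied once and preserves the gap up to rounding, so your ``bookkeeping of gap loss across stages'' is accounting for a phenomenon that does not exist in this paper's technique. Second, and more fatally, if you start from a PCP of size $m^{C}$ with $C>1$ (as ``polynomial rather than $m\cdot 2^{\log^{\alpha} m}$'' suggests), then the yes-optimum is $m^{C'}$ for some $C'\ge 1$, and the number of new sets after taking unions of size $p$ is at least $(\sigma/p)^{p}$; to bring $\opt$ down to $\kappa=2^{(\log m)^{1/\gamma}}$ you need $p=\opt/\kappa$, and then $p\log(\sigma/p)\ge (m^{C'}/\kappa)\cdot\Theta(\log m)$, which is $\omega(m)$ whenever $C'>1$. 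The reduced instance is therefore not of size $2^{o(m)}$ and the contradiction with $\eth$ evaporates. The fix is exactly what the paper does: keep the near-linear PCP and absorb the $\eth$-only weakening into the choice of $\epsilon$, not into a nonexistent multi-stage gap loss.
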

Here the inapproximability we get is significantly smaller if we can
not assume the $\pgc$. But the running time is the same,
hence super-exponential.

\begin{theorem}
\label{2} Under $\eth$, $\clique$ is $(r,t)$-$\fpt$-hard for
$r(\opt) = 1/(1-\epsilon)$ for some constant $\epsilon$, that
satisfies $0<\epsilon<1$, and {\em any} non-decreasing function $t$,
however huge. The running time can also be set to $2^{o(n)}$ of our
choice of $o(n)$.
\end{theorem}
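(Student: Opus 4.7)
The plan is to start from the constant-gap \clique instance obtained from 3-$\sat$ through the PCP theorem, and then apply an OPT-compressing gadget that collapses the optimum from $\Theta(m)$ down to any slowly growing function $k(m)$ while preserving the constant gap. The contradiction with \eth~is then forced by tuning $k$ against the given $t$.

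More precisely, I would first invoke the standard gap-preserving reduction from 3-$\sat$ to \clique: from a 3-$\sat$ formula with $m$ clauses this produces a graph $G$ on $n_0=O(m)$ vertices, partitioned into $m$ groups $V_1,\ldots,V_m$ of constant size $c_0$ (one group per clause), every clique of $G$ meeting each group at most once, with a yes-instance having a clique of size $m$ and a no-instance having maximum clique at most $(1-\epsilon_0)m$ for an absolute constant $\epsilon_0>0$. Then, picking a compression parameter $k$ dividing $m$, partition the $m$ groups into $k$ super-groups $S_1,\ldots,S_k$, each containing $m/k$ groups. Build $H$ whose vertices are pairs $(i,C)$ with $i\in\{1,\ldots,k\}$ and $C$ a clique of $G$ hitting each group in $S_i$ exactly once (so $|C|=m/k$), and whose edges join $(i,C)$ to $(j,C')$ iff $i\neq j$ and $C\cup C'$ is a clique of $G$.

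Two short analyses finish the combinatorial part. In the yes case the size-$m$ clique of $G$ splits into $k$ pieces, one per super-group, giving a clique of size $k$ in $H$; in the no case, a clique of size $\ell$ in $H$ exhibits $\ell$ disjoint pieces of size $m/k$ whose union is a clique of $G$, so $\ell\cdot(m/k)\le(1-\epsilon_0)m$, i.e.\ $\ell\le(1-\epsilon_0)k$. Hence the gap $1/(1-\epsilon_0)$ is preserved exactly. Since each group of $G$ has size at most $c_0$, each super-group yields at most $c_0^{m/k}$ consistent cliques, so $|V(H)|\le k\cdot c_0^{m/k}=2^{O(m/k)}$.

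Now fix any non-decreasing $t$ and any $\epsilon$ with $0<\epsilon<\epsilon_0$, and suppose for contradiction there is an $(r,t)$-$\fpt$-approximation for \clique with $r(\opt)=1/(1-\epsilon)$. Because $t$ is fixed and the \opt\ of $H$ equals $k$ in the yes case, we may choose $k=k(m)\to\infty$ growing slowly enough that $t(k(m))\le 2^{\sqrt m}$ while still $k(m)=o(m)$ (so $2^{O(m/k)}=2^{o(m)}$); such a $k$ exists for every computable $t$. Running the assumed approximation on $H$ then takes total time $t(k)\cdot |V(H)|^{O(1)}=2^{o(m)}$ and decides 3-$\sat$, contradicting \eth\ via Lemma~\ref{lem:sparsification-eth}. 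The $2^{o(n)}$ variant is obtained by the same construction: for any function $h(n)=o(n)$ of our choice, a time-$2^{h(n)}$ algorithm on $H$ would run in time $2^{h(2^{O(m/k)})}$, which is $2^{o(m)}$ once $k$ is chosen slow enough as a function of $h$, again contradicting \eth.

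The main obstacle is the balancing in the last step: we must reduce the optimum aggressively enough to leave room for an arbitrarily fast-growing $t$, but not so aggressively that the compressed instance $H$ becomes super-polynomially large in $m$. Fixing subset size to \emph{exactly} $m/k$ per super-group (rather than allowing any clique) is what ties $\ell$ to the clique size of $G$ linearly and keeps the gap intact; this rigidity is what makes the single-parameter choice $k=k(m)$ succeed simultaneously for the $(r,t)$-$\fpt$ statement and for the $2^{o(n)}$ strengthening.
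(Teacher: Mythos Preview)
Your core construction is the paper's approach with a slightly cleaner implementation. Both start from the PCP-based \clique\ instance on $7m$ vertices and collapse the optimum by bundling $m/k$ pieces into single supervertices; the paper takes as supervertices \emph{all} size-$m/f(m)$ subsets of the $7m$ vertices and imposes disjointness by hand, whereas you fix a partition of the $m$ clause-groups into $k$ super-groups and admit only cliques that hit each group of one super-group exactly once. Your variant yields the tighter count $k\cdot 7^{m/k}$ (versus $\binom{7m}{m/f(m)}$) and makes disjointness automatic, but the gap analysis and the choice of $k=k(m)$ against an arbitrary $t$ are identical to the paper's.

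There is one genuine slip, in your last sentence on the $2^{o(n)}$ clause. You claim $2^{h(|V(H)|)}=2^{o(m)}$ ``once $k$ is chosen slow enough as a function of $h$,'' but this goes the wrong way: a slowly growing $k$ makes $|V(H)|=2^{\Theta(m/k)}$ \emph{larger}, and for, say, $h(n)=n/\log n$ one gets $h(|V(H)|)=2^{\Theta(m/k)}/\Theta(m/k)\gg m$. The paper's argument for this clause is much simpler and does not use the compression at all: in the untransformed PCP instance $n=7m$, so any $2^{h(n)}$-time $1/(1-\epsilon)$-approximation with $h(n)=o(n)$ already runs in $2^{o(m)}$ time and contradicts \eth\ directly. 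In your framework this is just the degenerate choice $k=m$. The two clauses of the theorem thus sit at opposite ends of the $k$ spectrum and should be argued with different instances, not a single compressed $H$.
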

It is interesting to compare this result to the paper by Feige et al
 \cite{fk}.
In \cite{fk} it is shown that for $\opt\leq \log n$, $\clique$ can
not be solved exactly in time significantly smaller than
$n^{\opt}<n^{\log n}$ time, unless $\NP$ has sub-exponential
simulation. In fact the statement is much stronger than that. If
$\clique$ can be solved in much smaller time than $n^{\opt}$, any
solution for an $\NP$ problem that makes $f(n)$ non-deterministic
time, implies a deterministic solution in time {\em roughly}
$\exp(\sqrt{f(n)})$. This implies a host of NPC problems admits a
sub-exponential algorithm including $3$-$\sat$ (the number of non deterministic
bits used in $3$-$\sat$ is clearly at most $n$). Therefore it is
possible to prove \cite{fk} under {\eth} (which is our standard
assumption in this paper) as the contradiction in \cite{fk} implies
that {\eth} is not valid.

Theorem \ref{2} works for any $\opt$ and $\opt\leq \log n$ in
particular, and thus improves the paper of Feige et al \cite{fk} in
two ways.
First we prove $1/(1-\epsilon)$-hardness which for such
small values of $\opt$ might be significantly harder than ruling out
an exact solution.
Second, the $r(\opt)$-hardness holds
even if we allow time $2^{o(n)}$ time. This is a time much much larger
than $n^{\log n}$ of \cite{fk}.

{\bf Clarification:} It may seem strange that given time $2^{o(n)}$,
it is still not enough to exhaustively search for an optimum clique.
However, as a first step of our algorithm a graph of size $2^{o(n)}$
is constructed. Searching for the optimum exhaustively in such a
graph requires exponential in $n$ time and does not contradict
{\eth}.

\begin{theorem}
\label{22} Let
$$r(\opt)=\left(\frac{1}{1-\epsilon}\right)^{\log^{1/3} \opt},$$
with $\epsilon$ the constant from Theorem \ref{2}. Then $\clique$ is
$(r,t)$-$\fpt$-hard. for any function $t$, however huge.
\end{theorem}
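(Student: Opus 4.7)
The plan is to amplify Theorem~\ref{2} by means of the $k$-fold strong graph product. For a graph $G$, write $G^{(k)}$ for this product: its vertex set is $V(G)^k$, and two distinct $k$-tuples are adjacent iff, in every coordinate in which they differ, the corresponding entries are adjacent in $G$. The crucial property is that the clique number multiplies: $\omega(G^{(k)}) = \omega(G)^k$. Letting $G$ be the $n$-vertex clique instance delivered by Theorem~\ref{2}, with YES-optimum $\opt_0$ and gap $1/(1-\epsilon)$, the product $G^{(k)}$ has $n^k$ vertices, YES-optimum $\opt' := \opt_0^k$ and NO-optimum at most $((1-\epsilon)\opt_0)^k$; hence the amplified gap is $(1/(1-\epsilon))^k$.

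I would calibrate $k := \lceil\sqrt{\log\opt_0}\,\rceil$. This makes $\log\opt' = k\log\opt_0 = \log^{3/2}\opt_0$, whence $\log^{1/3}\opt' = \sqrt{\log\opt_0} = k$. The amplified gap is therefore exactly $r(\opt') = (1/(1-\epsilon))^{\log^{1/3}\opt'}$, as required, and any $(r,t)$-$\fpt$-approximation $A$ for $\clique$ on $G^{(k)}$ converts into an algorithm that distinguishes the YES and NO cases of $G$ to within the constant gap of Theorem~\ref{2}.

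For the time analysis, suppose $A$ runs in time $t(\opt')\cdot(n')^{O(1)}$ on $n'$-vertex inputs. Running $A$ on $G^{(k)}$ costs $t(\opt_0^k)\cdot n^{O(k)} = t(\opt_0^k)\cdot 2^{O(\sqrt{\log\opt_0}\,\log n)}$. Since $\sqrt{\log\opt_0}\,\log n \le \log^{3/2} n = o(n)$, the product-blow-up contribution is automatically $2^{o(n)}$. To absorb the $t$-factor I would use the two freedoms offered by Theorem~\ref{2}: the ``any non-decreasing $t'$'' clause, setting $t'(\opt_0) := t(\opt_0^{\sqrt{\log\opt_0}})$ (together with the $n$-dependent overhead, which is a function of $\opt_0$ alone when the Theorem~\ref{2} construction controls $n$ in terms of $\opt_0$), or the ``$2^{o(n)}$ of our choice'' clause, picking an $o(n)$ that dominates $\log t(\opt_0^k)$, possibly after padding the underlying $3$-$\sat$ instance so that $n$ grows fast enough relative to $\opt_0$. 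Either route produces a running time that Theorem~\ref{2} explicitly rules out.

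The principal obstacle is precisely this bookkeeping step: because $k$ itself depends on $\opt_0$, the map $G \mapsto G^{(k)}$ is not a polynomial reduction with a uniform exponent, and one must carefully play the ``any huge $t$'' freedom on the Theorem~\ref{22} side against the dual freedoms---arbitrary $t'$ in the $t'(\opt)\cdot n^{O(1)}$ form, and arbitrary $o(n)$ in the $2^{o(n)}$ form---granted by Theorem~\ref{2}. Once the padding level (or the target $t'$) is chosen as a function of the prescribed $t$, the graph-product amplification delivers the claimed $(r,t)$-$\fpt$-hardness.
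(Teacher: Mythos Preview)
Your approach is essentially the paper's: start from the reduced $\clique$ instance $H$ produced in the proof of Theorem~\ref{2} (with YES-optimum $\opt_0=f(m)$ and constant gap $1/(1-\epsilon)$), then take a graph power $H^k$ to amplify the gap to $(1/(1-\epsilon))^k$, and finally use the freedom in choosing $f$ to make the new optimum $\opt'$ small enough that $t(\opt')=2^{o(m)}$ for the prescribed $t$.

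The only real difference is the choice of the exponent $k$. The paper takes $k=\sqrt{\opt_0}=\sqrt{f(m)}$, while you take $k=\lceil\sqrt{\log\opt_0}\,\rceil$. Your smaller $k$ has the advantage that the identity $(\log\opt')^{1/3}=k$ comes out exactly (your computation $\log\opt'=k\log\opt_0=(\log\opt_0)^{3/2}$ is clean), whereas the paper's arithmetic is loose: with $\opt'=f(m)^{\sqrt{f(m)}}$ one gets $\log\opt'=\sqrt{f(m)}\log f(m)$, so $k=\sqrt{f(m)}$ is in fact closer to $(\log\opt')/\log\log\opt'$ than to $(\log\opt')^{1/3}$, and the stated theorem is merely a weakening of what the paper's construction actually yields. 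On the time side, your worry about invoking Theorem~\ref{2} as a black box is well placed; the paper sidesteps it by going back directly to the $3$-$\sat$ instance and ETH (the product graph still has size $2^{o(m)}$ and $\opt'=f(m)^{\sqrt{f(m)}}$ can be made arbitrarily small in $m$), which is the cleanest route and is effectively what your ``padding'' option amounts to.
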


As a function of $n$, we later show that
the time can be set to $2^{n^{1/Q(n)}}$ for an
arbitrarily slowly growing $Q(n)$. Thus Theorem \ref{22} improves
\cite{fk} in the same two ways that we mentioned for Theorem \ref{2}
. The inapproximability is now super constant, versus an exact solution,
and the running time is still
much much higher than $n^{\log n}$.

We study a well known $W[2]$ hard problem $\mmis$, and give
hardness in terms of $\opt$.

\begin{theorem}
\label{4} Under $\eth$, $\mmis$ is $(r,t)$-$\fpt$-hard in $\opt$
(and thus in $k$ since $\opt$ is known) for  any non-decreasing
functions $r$ and $t$.
\end{theorem}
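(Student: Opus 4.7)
I would prove hardness by reducing from 3-$\sat$, which by $\eth$ cannot be solved in $2^{o(m)}$ time. The reduction should produce an $\mmis$ instance whose optimum value $\opt$ is a slowly growing function $\beta = \beta(r,t,m)$ of $m$, chosen so that $t(\beta) \ll 2^{m/2}$, and whose ``no'' optimum exceeds the ``yes'' optimum by a multiplicative factor at least $r(\beta)$. An $(r,t)$-$\fpt$-approximation on such an instance would solve 3-$\sat$ in time $t(\beta)\cdot \poly(m) = 2^{o(m)}$, contradicting $\eth$. Because the reduction explicitly knows $\opt_{\text{yes}}$, the hardness transfers from parameter $\opt$ to parameter $k$ as well.

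The starting point is the natural literal-clause $\mmis$ encoding: create a vertex for each literal, connect each variable's two literals, and add a clause vertex connected to all literals it contains. The minimum independent dominating set then picks one literal per variable, giving $\opt = q$, and if some clause is unsatisfied at least one extra clause vertex must join the IS--only an additive gap. To turn this into a multiplicative gap I would apply Raz-style parallel repetition to the underlying 3-$\sat$ (or compose with an $\eth$-hard gap problem) so that in no-instances a constant fraction of clause vertices are forced into every maximal IS, pushing $\opt$ up by factor at least $r(\beta)$. Finally, to drive $\opt$ down from $q$ to the target value $\beta$, I would apply an iterated $\mmis$-to-$\mmis$ compression: partition the literal vertices into groups, and inside each group wire the literals through a small number of \emph{selector} vertices, so that picking one selector per group dominates the entire group, while the selectors chosen across different groups remain pairwise non-adjacent and jointly encode a consistent assignment.

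The main obstacle is precisely this compression step, which is the ``non-trivial'' ingredient flagged in the introduction. The difficulty is to add enough literal-literal edges to shrink the IDS while simultaneously (i) keeping the chosen literals pairwise non-adjacent, (ii) preserving the semantic link between the chosen literals and a satisfying assignment, and (iii) preserving the multiplicative gap achieved by amplification. My plan is to route literal-to-literal domination through an intermediary layer of representative vertices, one per group and per iteration: within a group, literals only dominate each other via a shared representative, so the IDS picks exactly one representative per group, and the representative is constrained by auxiliary gadgets to encode a consistent partial assignment. Composing this compression with the amplification step while tracking the interplay between $\opt$ and the gap--and verifying that each iteration strictly shrinks $\opt$ without blowing up the instance size beyond polynomial in $m$--is the delicate technical step that the rest of the proof would work out in detail, yielding, after sufficiently many iterations, any target $\beta$ small enough that $t(\beta) \leq 2^{m/2}$ for the given $t$.
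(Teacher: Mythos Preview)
Your high-level framework (reduce from 3-$\sat$ under $\eth$, then drive $\opt$ down to an arbitrarily slowly growing $\beta$) matches the paper's, but the two concrete mechanisms you propose both diverge from what the paper does, and in their present form they do not go through. First, the paper uses \emph{no} gap amplification whatsoever: it reduces from plain (non-gap) 3-$\sat$. The multiplicative gap is manufactured entirely on the $\mmis$ side by replacing each clause $C$ by $q$ identical copies $W(C)$. In a ``no'' instance, any independent set of the compressed literal-side vertices encodes a consistent partial assignment, so some clause $C$ is unsatisfied; then none of the $q$ copies in $W(C)$ is dominated and all $q$ of them must enter the maximal independent set, forcing $\opt_{\text{no}}\ge q$. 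Your Raz-repetition step is therefore unnecessary, and it also creates a headache you did not address: polynomial blow-up in the $\sat$ instance weakens what $\eth$ buys you.

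Second, your compression sketch (``one representative per group, constrained by auxiliary gadgets to encode a consistent partial assignment'') cannot work as stated: a single representative vertex per group cannot distinguish among the $2^{q/f(q)}$ assignments to that group's variables, so there is no way for the clause side to read off which literals were chosen. The paper's compression is exactly the move you are groping toward but pushed to its honest conclusion: it introduces a \emph{supervertex} $v_S$ for every non-contradictory literal set $S$ of size $q/f(q)$, connects $v_{S_1}$ and $v_{S_2}$ whenever they mention a common variable, and connects $v_S$ to all of $W(C)$ whenever $S$ contains a literal satisfying $C$. A yes-instance then has a solution of exactly $f(q)$ supervertices (partition a satisfying assignment into $f(q)$ blocks), while the no-instance bound is $q$ from the clause copies. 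The price is that the number of supervertices is $\binom{2q}{q/f(q)}=2^{o(q)}$, so the instance size is subexponential, not polynomial as you required; that relaxation is essential and is precisely what lets one pick $f$ slowly enough that $t(f(q))=2^{o(q)}$ and $r(f(q))<q/f(q)$ for arbitrary $r,t$.
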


It is
our opinion that one must try to prove as many such results to as
possible. This may shade light
on how to prove this for $\clique$ and $\setcover$.

\noindent {\bf Avoiding a constant optimum:}
Another standard we think is good imposing
in case of $\fpt$-hardness is that
the optimum for a "yes'' instance is not constant.
To explain that, we consider the
$3$-Coloring problem and (trivially) show that Fellows conjecture holds
for this problem. In
 \cite{condi} it is shown that 3-coloring admits no
constant approximation for {\em any} constant unless a {\em variant}
of the (well known) Khot two-to-one $\pgc$ holds. Take the ``yes''
instance of the problem (in which $\opt=3$). For any $r$,
$r(\opt)=g(3)$ is a constant and for any $t$, $t(\opt)\cdot
n^{O(1)}$ is just polynomial time. By the above hardness of
\cite{condi}, it is clear that for any $r,t$ $3$-coloring
is $(r,t)$-$\fpt$-hard.
The lesson to be derived from this example is that we
should only try hardness for problems with
non-constant $\opt$.

\subsection{Reducing the value of OPT while proving $\fpt$-hardness}

Our proofs are based on gap-reductions from $3$-$\sat$ to the given
optimization problem. To describe our technique, we define a notion of
gap reductions as follows. Fix a minimization problem $P$ and two
non-decreasing functions $r$ and $t$.
We use $m$ for the number of clauses in
the $3$-$\sat$ instance we reduce from.
\begin{definition}[Gap reduction]
\label{def:gapred} Let $m$ be the number of clauses in the
$3$-$\sat$ problem we reduce from. An algorithm that maps any given
instance $I$ of $3$-$\sat$ to an instance $M_I$ of problem $P$ is
called a gap-reduction with non-decreasing functions $r$ and $t$ if
there exists computable functions $\kappa_T$ and $\kappa_F$ such
that:
\begin{itemize}
\item $t(\kappa_T(m)) = 2^{o(m)}$ and $\kappa_T(m) \cdot
r(\kappa_T(m)) < \kappa_F(m)$,
\item The algorithm takes $2^{o(m)}$ time to construct $M_I$ (and
  hence the size of $M_I$ is $2^{o(m)}$),
\item $\opt(M_I) \leq \kappa_T(m)$ if $I$ is satisfiable,
\item $\opt(M_I) \geq \kappa_F(m)$ if $I$ is unsatisfiable.
\end{itemize}
\end{definition}

We now prove the following simple but useful claim.
\begin{claim}
\label{trivial1} If there exists a gap-reduction (according to
Definition~\ref{def:gapred}) for a minimization problem $P$ with
non-decreasing functions $r$ and $t$, then assuming $\eth$, problem
$P$ is $(r,t)$-$\fpt$-hard with parameter $\opt$.
\end{claim}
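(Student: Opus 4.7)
The plan is to argue by contradiction: assume $P$ admits an $(r,t)$-$\fpt$-approximation algorithm $A$, and use it together with the gap reduction to solve $3$-$\sat$ in $2^{o(m)}$ time, contradicting Lemma~\ref{lem:sparsification-eth}.

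First I would take an arbitrary instance $I$ of $3$-$\sat$ with $m$ clauses and apply the hypothesized gap reduction to produce an instance $M_I$ of $P$. By Definition~\ref{def:gapred}, this construction takes $2^{o(m)}$ time, so in particular $|M_I| = 2^{o(m)}$. Next, I would feed $M_I$ to $A$ together with the parameter choice $k := \kappa_T(m)$. By hypothesis, the running time of $A$ is $t(k)\cdot |M_I|^{O(1)} = t(\kappa_T(m))\cdot 2^{o(m)} = 2^{o(m)}\cdot 2^{o(m)} = 2^{o(m)}$, where the first equality uses $t(\kappa_T(m)) = 2^{o(m)}$. So the overall pipeline runs in $2^{o(m)}$ time, which is the budget I need in order to invoke \eth.

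The key step is to verify that the output of $A$ distinguishes satisfiable from unsatisfiable formulas. If $I$ is satisfiable, then $\opt(M_I) \leq \kappa_T(m) = k$, so $A$ cannot output a certificate that $k < \opt$, and therefore must return a feasible solution of value at most $k\cdot r(k) = \kappa_T(m)\cdot r(\kappa_T(m)) < \kappa_F(m)$. If instead $I$ is unsatisfiable, then $\opt(M_I) \geq \kappa_F(m) > k\cdot r(k)$, so $A$ cannot return any feasible solution of value at most $k\cdot r(k)$ (such a solution would force $\opt(M_I) \leq k\cdot r(k) < \kappa_F(m)$, a contradiction); hence $A$ must instead output a certificate that $k < \opt$. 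Thus by inspecting whether $A$ produces a feasible solution or a certificate, we decide $3$-$\sat$.

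There is no real obstacle here; the claim is essentially bookkeeping once Definition~\ref{def:gapred} has been set up correctly. The one thing to be careful about is the inequality $\kappa_T(m)\cdot r(\kappa_T(m)) < \kappa_F(m)$ built into the definition, which is precisely what guarantees a genuine gap at parameter $k = \kappa_T(m)$ after the approximation factor $r$ is absorbed, so that $A$'s two possible outputs are mutually exclusive on yes versus no instances. Putting everything together, the existence of $A$ yields a $2^{o(m)}$-time algorithm for $3$-$\sat$, contradicting \eth\ via Lemma~\ref{lem:sparsification-eth}. Hence $P$ is $(r,t)$-$\fpt$-hard in parameter $\opt$.
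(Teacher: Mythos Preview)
Your proof is correct and follows the same contradiction argument as the paper: assume an approximation algorithm exists, run it on $M_I$ with time budget $t(\kappa_T(m))\cdot |M_I|^{O(1)}$, and use the gap $\kappa_T(m)\cdot r(\kappa_T(m)) < \kappa_F(m)$ to decide $3$-$\sat$ in $2^{o(m)}$ time. The only cosmetic difference is that you phrase $A$ as a parameter-$k$ algorithm fed $k=\kappa_T(m)$, whereas the paper's own proof works directly with the parameter-$\opt$ variant (an $r(\opt)$-approximation in time $t(\opt)\cdot |M_I|^{O(1)}$) and simply runs it for $t(\kappa_T(m))\cdot |M_I|^{O(1)}$ steps; the logic is identical.
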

\begin{proof}
Assume on the contrary that there is an $r(\opt)$-approximation
algorithm $A$ with running time $t(\opt) \cdot m^c$ for $P$ where
$\opt$ is the value of the optimum, $m$ is the size of the given
instance of $P$ and $c > 0$ is a constant. Now we design an
algorithm for $3$-$\sat$ in time $2^{o(m)}$ where $m$ is the number
of clauses, as follows. Given a $3$-$\sat$ instance $I$, we first
use the gap-reduction to compute instance $M_I$ of $P$ and then run
algorithm $A$ on $M_I$ for time $t(\kappa_T(m)) \cdot m^c$ where $m$
is the size of $M_I$. Since $A$ is an $r(\opt)$-approximation, we
can decide whether $\opt(M_I) \leq \kappa_T(m) \cdot r(\kappa_T(m))
< \kappa_F(m)$. Thus we can decide whether $I$ is satisfiable. Note
that $A$ takes time $2^{o(m)}$, contradicting $\eth$. This finishes
the proof.
\end{proof}

\section{Inapproximability for Set Cover with super-exponential time in OPT}
\label{sec:lower-bound-set-cover} In this section we
prove Theorem~\ref{1}.

Corollary \ref{scc} implies the following corollary.
\begin{corollary}
\label{setcov} There exists constants $c_1,c_2>0$ and a constant
$0<\alpha<1$, so that the following holds. Let $m$ be the number of
clauses in the $3$-$\sat$ problem we reduce from. Assuming $\pgc$ and
$\eth$, there exists a reduction from $3$-$\sat$ to $\setcover$ so
that the number of sets in the resulting instance is $\sigma = m\cdot
2^{\log^{\alpha} m}\cdot \poly\log(m)\cdot (1/\epsilon)^{c_1+c_2}$ for
two constant $c_1,c_2>1$.  Furthermore, value of the optimum in
``yes'' instance is {\em exactly} $\kappa = m \cdot 2^{\log^{\alpha}
m}\cdot \poly\log(m)(1/\epsilon)^{c_1}$ and that in the ``no'' instance
is at least $c\cdot \log m\cdot \kappa$.
\end{corollary}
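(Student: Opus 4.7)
The plan is to derive Corollary~\ref{setcov} directly from Corollary~\ref{scc} by choosing the soundness parameter $\epsilon$ of the underlying {\pgc}-based $\minrep$ instance to be roughly $1/\log^{2} m$. First, I would simply invoke Corollary~\ref{scc}: it already furnishes a reduction from $3$-$\sat$ to $\setcover$ whose number of sets and ``yes''-optimum match exactly the expressions $\sigma$ and $\kappa$ asserted in Corollary~\ref{setcov}, parameterized by $\epsilon$, reusing the same constants $c_1,c_2,\alpha$.

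The only substantive point is matching the ``no''-optimum bound. Corollary~\ref{scc} states that the ``no''-optimum is at least $d\cdot \sqrt{1/\epsilon}\cdot \kappa$ for some absolute constant $d>0$, the $\sqrt{1/\epsilon}$ factor being the square-root gap loss of $\minrep$ relative to the underlying $\pgc$ instance. Hence it suffices to choose $\epsilon$ small enough that $d\sqrt{1/\epsilon}\geq c\log m$; concretely, any $\epsilon\leq (d/c)^{2}/\log^{2} m$ works, yielding
\[
d\cdot \sqrt{1/\epsilon}\cdot \kappa \;\geq\; c\log m\cdot \kappa,
\]
which is exactly the claimed lower bound on the ``no''-optimum. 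The constraint $\epsilon\geq 1/m^{c_3}$ built into the $\pgc$ reduction of Corollary~\ref{scc} is clearly satisfied for all sufficiently large $m$, since $1/\log^{2} m\gg 1/m^{c_3}$.

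Finally, I would verify that plugging in $\epsilon=\Theta(1/\log^{2} m)$ keeps everything polylogarithmic. Since $1/\epsilon=O(\log^{2} m)$, the factor $(1/\epsilon)^{c_1+c_2}$ in $\sigma$ and the factor $(1/\epsilon)^{c_1}$ in $\kappa$ each contribute only an additional $\poly\log(m)$, which is already absorbed by the $\poly\log(m)$ factor appearing explicitly in both expressions. Because Corollary~\ref{setcov} keeps these $(1/\epsilon)^{\ast}$ factors in the statement as well, no further rewriting is needed.

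There is essentially no obstacle: the argument is bookkeeping of Corollary~\ref{scc} with one calibration of $\epsilon$. The only point to be careful about is to interpret the gap factor in Corollary~\ref{scc} as $\sqrt{1/\epsilon}$ (the $\minrep$ square-root gap), and to confirm that the chosen $\epsilon$ simultaneously keeps $1/\epsilon$ polylogarithmic and satisfies the $\pgc$ admissibility window $\epsilon\geq 1/m^{c_3}$, both of which are immediate for $\epsilon=(d/c)^{2}/\log^{2} m$.
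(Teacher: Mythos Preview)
Your proposal is correct and matches the paper's own derivation essentially verbatim: the paper states that Corollary~\ref{setcov} follows from Corollary~\ref{scc} and immediately notes ``We use $\epsilon = c^2/\log^2 m$ here,'' which is exactly your calibration. Your observation that the gap factor in Corollary~\ref{scc} should be read as $\sqrt{1/\epsilon}$ (consistent with the $\minrep$ discussion preceding it) and that the resulting $(1/\epsilon)^{c_i}$ factors are polylogarithmic is precisely the bookkeeping the paper carries out in the surrounding text.
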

We use $\epsilon = c^2/\log^2 m$ here.

We now describe a way to change the $\setcover$ instance so that we
can use Claim~\ref{trivial1}.  The idea is to make the optimum much
smaller. Starting with the $\setcover$ instance $S=(U,\cal S)$ in the
above corollary, where $U$ is the set of elements and ${\cal S}
\subseteq 2^U$ is the collection of sets, we construct a new instance
$S' = (U,\cal S')$ on the same elements as follows. We introduce a set
$s \in \cal S'$ as $s = \cup_{i=1}^p s_i$ for each subcollection
$\{s_1,s_2,\ldots,s_p\} \subseteq \cal S$ of size $p$ where $1 \leq p
\leq \lfloor m/\log m\rfloor$.

\begin{claim}
Number of sets in the new instance $S'=(U,\cal S')$ is $2^{o(m)}$. The
new instance can be constructed in time $2^{o(m)}$.
\end{claim}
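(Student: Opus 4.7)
The plan is to directly count $|\mathcal{S}'| = \sum_{p=1}^{\lfloor m/\log m \rfloor} \binom{\sigma}{p}$ and show that this sum is $2^{o(m)}$; the construction-time bound then follows because each union of $p$ sets can be built in time polynomial in $|U|+p$. The naive estimate $\binom{\sigma}{p} \le \sigma^p$ is \emph{not} good enough: with $\log_2 \sigma = \Theta(\log m)$ and $p \approx m/\log m$ it yields $2^{\Theta(m)}$. The fix is the tighter bound $\binom{\sigma}{p} \le (e\sigma/p)^p$, so that the dominant $\log m$ contribution in $\log \sigma$ is canceled by $\log p$.

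Concretely, I would substitute $\epsilon = c^2/\log^2 m$ into Corollary~\ref{setcov}, so that $\sigma = m \cdot 2^{\log^\alpha m} \cdot \poly\log(m)$. The key observation is that for $p$ in the range $[1,\lfloor m/\log m\rfloor]$, we have $\sigma/p \le 2^{\log^\alpha m} \cdot \poly\log(m) \cdot \log m$, and therefore
$\log_2(e\sigma/p) = \log^\alpha m + O(\log\log m) = o(\log m),$
since $\alpha<1$. Because $p \mapsto p\log_2(e\sigma/p)$ is increasing for $p \le \sigma$, every term in the sum is dominated by the term at $p = \lfloor m/\log m\rfloor$, which is bounded by
$2^{(m/\log m)\cdot(\log^\alpha m + O(\log\log m))} \;=\; 2^{\,m\log^{\alpha-1} m \,+\, O(m\log\log m/\log m)} \;=\; 2^{o(m)},$
using $\alpha<1$ so that $m\log^{\alpha-1} m = m/\log^{1-\alpha} m = o(m)$. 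Summing the $O(m/\log m)$ terms preserves the $2^{o(m)}$ bound, giving $|\mathcal{S}'| = 2^{o(m)}$.

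For the construction time, I would enumerate subcollections of $\mathcal{S}$ of size at most $m/\log m$ in any standard way (e.g., by lexicographic order on subsets) and compute each union in time $O(|U|\cdot p) = \poly(m)$, since $|U|=\poly(m)$ by Corollary~\ref{setcov}. The total time is then $\poly(m)\cdot 2^{o(m)} = 2^{o(m)}$. The only real obstacle is recognizing why the tighter binomial estimate is essential: the argument works precisely because the PGC-based $\setcover$ instance has $\sigma = m\cdot 2^{o(\log m)}$, i.e., quasi-linear in $m$, so that $\log(\sigma/p)$ genuinely shrinks to $o(\log m)$ when $p$ is chosen near $m/\log m$. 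Any hypothetical instance with $\sigma = m^{1+\Omega(1)}$ would defeat this step and force a different choice of the upper bound on $p$.
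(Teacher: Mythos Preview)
Your proof is correct and follows essentially the same approach as the paper: write $|\mathcal{S}'|$ as $\sum_{p\le m/\log m}\binom{\sigma}{p}$, bound via $\binom{\sigma}{p}\le(e\sigma/p)^p$, and use $\alpha<1$ to get an exponent of order $m/\log^{1-\alpha}m=o(m)$. One small slip: the bound $\sigma/p \le 2^{\log^\alpha m}\cdot\poly\log(m)\cdot\log m$ holds only at $p=\lfloor m/\log m\rfloor$, not for all $p$ in the range, but your monotonicity argument correctly reduces everything to that endpoint anyway.
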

\begin{proof}
Recall that the number of sets in the original instance is $\sigma =
m\cdot 2^{\log^{\alpha} m}\cdot \poly\log(m)$ because of the choice
of $\epsilon$. Thus since $p\leq \sigma/2$, the number of sets in
the new instance is
$$\sum_{p=1}^{\lfloor m/\log m\rfloor} {\sigma \choose p} \leq \lfloor
m/\log m\rfloor \cdot {m\cdot 2^{\log^{\alpha} m}\cdot
\poly\log(m)\choose \lfloor m/\log m\rfloor}.$$ We use the inequality
${n\choose p}\leq (ne/p)^p$ to upper-bound this by
$$\lfloor m/\log m\rfloor \cdot {\left(e\cdot 2^{\log^{\alpha} m}
\cdot \poly\log(m) \cdot 2\log m\right)}^{m/\log m}
=2^{O(m/\log^{1-\alpha} m)}=2^{o(m)},$$ as claimed, where we have
$2\log m$ in the first expression (instead of just $\log m$) because
of the floor function and the last equality holds since $0< \alpha<
1$. It is easy to see that the new instance can be created in
$2^{o(m)}$ time.
\end{proof}

\begin{claim}
{\bf Proof of Theorem~\ref{1}}.
The problem is $(r,t)$-$\fpt$-hard for
$r(k) = (\log k)^\gamma$ and $t(k) =
\exp(\exp((\log k)^\gamma))$ for any $1 < \gamma < 1/\alpha$.
\end{claim}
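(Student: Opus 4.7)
The plan is to apply Claim~\ref{trivial1} to the modified $\setcover$ instance $S'=(U,\mathcal{S}')$ constructed above, by identifying the right functions $\kappa_T(m)$ and $\kappa_F(m)$ and checking the two numerical conditions in Definition~\ref{def:gapred}. First I would compute $\kappa_T(m)$, an upper bound on the optimum of $S'$ in the yes case. Starting from the original yes cover of size $\kappa = m\cdot 2^{\log^\alpha m}\cdot \polylog(m)$ guaranteed by Corollary~\ref{setcov}, one partitions these sets arbitrarily into $\lceil \kappa/p\rceil$ groups of at most $p=\lfloor m/\log m\rfloor$ sets each; each such group is a single set of $\mathcal{S}'$ by construction. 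Thus $\kappa_T(m) = O(\kappa/p) = O\!\left(2^{\log^\alpha m}\cdot \polylog(m)\right)$, which in particular satisfies $\log \kappa_T(m) = (1+o(1))\log^\alpha m$.

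Next I would lower-bound $\kappa_F(m)$, the optimum of $S'$ in the no case. Any cover of $S'$ with $p'$ super-sets corresponds to a cover of $S$ by at most $p'\cdot p$ original sets, so by Corollary~\ref{setcov} we have $p'\cdot p \geq c\log m\cdot \kappa$, giving $p' \geq c\log m\cdot \kappa/p = \Omega(\log m)\cdot \kappa_T(m)$. Hence $\kappa_F(m) \geq c'\log m \cdot \kappa_T(m)$ for some constant $c'>0$; the crucial point is that grouping sets preserves the factor $c\log m$ gap.

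Now I would verify both inequalities in Definition~\ref{def:gapred} for the claimed $r(k)=(\log k)^\gamma$ and $t(k) = \exp(\exp((\log k)^\gamma))$ under the assumption $1<\gamma<1/\alpha$. For the approximation condition, $r(\kappa_T(m)) = (\log \kappa_T(m))^\gamma = \Theta(\log^{\alpha\gamma} m)$, and since $\alpha\gamma <1$ we have $\log^{\alpha\gamma}m = o(\log m)$, so $r(\kappa_T(m))\cdot \kappa_T(m) = o(\log m)\cdot \kappa_T(m) < \kappa_F(m)$ for all sufficiently large $m$. For the running-time condition, $t(\kappa_T(m)) = \exp\!\left(\exp(\Theta(\log^{\alpha\gamma}m))\right)$, and because $\alpha\gamma<1$ the inner expression $\exp(\log^{\alpha\gamma}m)$ is $2^{o(\log m)}=m^{o(1)}=o(m)$, so $t(\kappa_T(m))=2^{o(m)}$. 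The size bound on $M_I:=S'$ and the construction time $2^{o(m)}$ are already delivered by the preceding claim.

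With all four bullets of Definition~\ref{def:gapred} verified, Claim~\ref{trivial1} immediately yields the $(r,t)$-$\fpt$-hardness of $\setcover$ under $\eth$ and $\pgc$. The only slightly delicate point is calibrating the block size $p=\lfloor m/\log m\rfloor$: making $p$ larger shrinks $\kappa_T$ further but blows up the number of super-sets past $2^{o(m)}$ (killing the application of $\eth$), while making $p$ smaller wastes potential shrinkage of $\opt$ and weakens the achievable $\gamma$. The choice $p=\lfloor m/\log m\rfloor$ is the sweet spot that simultaneously keeps $|\mathcal{S}'|=2^{o(m)}$ and drives $\log\kappa_T(m)$ down to $\Theta(\log^\alpha m)$, which is precisely what converts the double-logarithmic slack into the double-exponential running time $\exp(\exp((\log \opt)^\gamma))$ claimed in Theorem~\ref{1}. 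I expect this calibration — rather than any individual estimate — to be the main conceptual content of the proof.
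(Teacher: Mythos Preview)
Your proposal is correct and follows essentially the same route as the paper: bound the new yes-optimum by $\kappa_T=O(2^{\log^\alpha m}\cdot\polylog(m))$, observe that grouping into blocks of size $\lfloor m/\log m\rfloor$ preserves the $\Theta(\log m)$ gap, and then check that $r(\kappa_T)=\Theta(\log^{\alpha\gamma}m)=o(\log m)$ and $t(\kappa_T)=2^{o(m)}$ so that Claim~\ref{trivial1} applies. If anything, you spell out the no-instance lower bound (unfolding super-sets back to at most $p'\cdot p$ original sets) more explicitly than the paper does.
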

\begin{proof}
Clearly, any optimum will use
as few sets of size (roughly) $m/\log m$
and so the gap between a ``Yes'' instance and a ``No''
hardly changed.
Namely,
$\opt_1$ and that of the new instance $\opt_2$ are related as
$\opt_1 / \lfloor m/\log m \rfloor \leq \opt_2 \leq \lceil
\opt_1/\lfloor m/\log m \rfloor \rceil$. Therefore the gap between
the new optima of a ``yes'' instance and a ``no'' instance continues
to be $c'\log m$ for some constant $c' > 0$ and the new optimum of
the ``yes'' instance is at most $\kappa_T = \lceil \kappa/\lfloor
m/\log m \rfloor \rceil = O(2^{\log^\alpha m} \cdot \poly\log(m))$.
and $\kappa_N$ is $c'\cdot \log m$ larger than that.

Now define two functions $r(k) = (\log k)^\gamma$ and $t(k) =
\exp(\exp((\log k)^\gamma))$ for any $1 < \gamma < 1/\alpha$, as
given in Theorem~\ref{1}.
Note that  $r(\kappa)
=O((\log^{\alpha} m)^\gamma)=o((\log^{\alpha} m)^{1/\alpha})= o(\log
m)$ and $t(\kappa) = 2^{o(m)}$.
Thus this reduction satisfies all the conditions in
Definition~\ref{def:gapred} for Claim~\ref{trivial1} to hold. Thus
$\setcover$ is $(r,t)$-$\fpt$-hard for these functions, proving
Theorem~\ref{1}.
\end{proof}

\subsection{Proof of Theorem \ref{3}}

For proving this theorem we assume:
\begin{conjecture}
There exists a constant $c>0$ and a $\pcp$ of size $m\cdot
\poly\log(m)\poly(1/\epsilon)$, for any $\epsilon$ so that
$\epsilon\geq 1/m^c$.
\end{conjecture}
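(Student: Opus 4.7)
The conjecture strengthens the Projection Game Conjecture (\pgc) by asking that the proof length be reduced from $m\cdot 2^{O(\log^{\alpha} m)}\cdot \poly(1/\epsilon)$ down to $m\cdot \poly\log(m)\cdot \poly(1/\epsilon)$, while retaining soundness $\epsilon$ as small as $1/m^c$ and, implicitly, the projection property that makes the resulting Label-Cover instance useful for the reduction chain behind Theorem~\ref{3}. My plan is to decouple the two sources of blowup in the standard Moshkovitz--Raz construction, namely the sub-polynomial $2^{O(\log^{\alpha} m)}$ factor inherited from the Raz--Safra algebraic low-degree test, and the $\poly(1/\epsilon)$ factor arising from soundness amplification, and to replace the former by a quasi-linear combinatorial \pcp\ before amplification is carried out.

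First, I would take as the base the quasi-linear-size combinatorial \pcp\ of Dinur, which yields a \pcp\ of length $m\cdot \poly\log(m)$ over a constant alphabet with some fixed constant soundness $\epsilon_0$. The crucial invariant to maintain through the rest of the construction is this quasi-linearity of proof length; the alphabet and soundness will be boosted afterwards. Second, I would amplify the soundness from $\epsilon_0$ down to the target $\epsilon\ge 1/m^c$ without paying a multiplicative $m^{\Omega(1)}$ blowup in proof length. The natural tool is $k$-fold parallel repetition with $k=\Theta(\log(1/\epsilon))$, but applied naively this inflates instance size by $m^{k-1}$, which is forbidden. Instead I would invoke a derandomized parallel-repetition theorem in the style of Impagliazzo--Kabanets--Wigderson or Dinur--Meir, in which the $k$-tuples of queries are indexed by a walk on an expander or by a low-degree curve, so that the number of new constraints grows only as $m\cdot \poly(k)=m\cdot \poly\log(1/\epsilon)$ while soundness still decays as $\epsilon_0^{\Omega(k)}=\epsilon$. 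Third, to restore the projection (Label-Cover) form demanded by the \pgc, I would compose with a \pcp\ of proximity in the style of Ben-Sasson--Goldreich--Harsha--Sudan--Vadhan, whose encoding length is quasi-linear in the outer query size; this introduces an additional $\poly(1/\epsilon)$ alphabet dependence but no further polylog overhead in proof length.

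The central obstacle is Step~2 for projection games. Every currently known derandomized parallel-repetition theorem either forfeits the projection property, so that the amplified Label-Cover instance is no longer a projection game (which would sever the reduction chain to \setcover\ used in Theorem~\ref{3}), or attains inverse-polynomial soundness only by incurring precisely the $2^{O(\log^{\alpha} m)}$ sub-polynomial overhead that we are trying to eliminate. Circumventing this appears to require either a new structural theorem showing that derandomized repetition preserves projection when applied to a sufficiently expanding constraint graph, or an entirely new algebraic construction of low-soundness projection \pcp s that dispenses with the plane-versus-point low-degree test altogether. This is the precise juncture at which the conjecture remains open, and it is where any genuine proof would have to make a substantive new \pcp-theoretic contribution rather than just compose known ingredients.
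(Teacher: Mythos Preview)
The statement you are addressing is a \emph{conjecture}, not a theorem, and the paper does not prove it; it explicitly assumes it in order to derive Theorem~\ref{3}. The only discussion the paper offers is a short plausibility remark: Dinur's combinatorial \pcp\ already attains size $m\cdot\poly\log(m)$ with constant gap, so strengthening the gap to $1/\epsilon$ with $\epsilon\ge 1/m^{c}$ ``does not seem far fetched,'' and Moshkovitz is credited (private communication) with conjecturing this. There is therefore no ``paper's own proof'' to compare against.

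Your proposal is, to your credit, honest about this: you lay out a reasonable composition-plus-derandomized-repetition strategy and then correctly identify the exact obstruction, namely that known derandomized parallel-repetition theorems either destroy the projection property or reintroduce the $2^{O(\log^{\alpha} m)}$ blowup. But this means your write-up is not a proof at all; it is a survey of why the conjecture is hard, culminating in the admission that it remains open. That is an accurate assessment of the state of the art, but it should not be labeled a proof proposal. If the assignment was to reproduce the paper's argument for this statement, the correct response is simply that the paper offers none and treats the statement as an unproven hypothesis underpinning Theorem~\ref{3}.
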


{\bf Is the conjecture reliable?}

This result was conjectured to hold by Moshkovitz
in a private communication.
Note that there exists already a
$\pcp$ of size even smaller than the above.
In fact in \cite{irit}
a $\pcp$ is presented whose size is
is $m\cdot \poly\log(m)$.
The down size is that the inapproximability of this $\pcp$ \cite{irit}
is $2$. Improving the inapproximability to polylogarithmic
does not seem far fetched.

We now use the above conjecture and show a much stronger
$\fpt$ inapproximability for $\setcover$.
By Corollary \ref{scc},
and the above conjecture we get the following corollary, using
$\epsilon=c^2/\log^2 m$:
\begin{corollary}
There exists a constants $c,c_1>0$ and
a reduction from $3$-$\sat$ to
$\setcover$ so that:
\begin{enumerate}
\item
The number of sets is
$\sigma = m\cdot \poly\log(m)\cdot (1/\epsilon)^{c_1+c_2}
= m\cdot \poly\log(m)$ for some constants $c_1,c_2$.
\item
The number of elements is
$\poly(m)$.
\item
The value of the optimum in ``yes''
instance is {\em exactly}  $\kappa_Y = m \cdot \poly\log(m)$ and that
in the ``no'' instance is at least $c\cdot \log m\cdot \kappa$ with
$c$ some constant $c>0$.
\end{enumerate}
\end{corollary}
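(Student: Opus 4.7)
The plan is to repeat the derivation of Corollary~\ref{scc} verbatim, but replace the Moshkovitz $\pcp$ from \cite{r3,dana}, whose size carries the overhead $2^{\log^{\alpha} m}$, by the strengthened $\pcp$ of the conjecture above, whose size is only $m\cdot \poly\log(m)\cdot \poly(1/\epsilon)$ and whose soundness is $\Omega(1/\epsilon)$. Under this substitution every appearance of $2^{\log^{\alpha} m}$ in Corollary~\ref{scc} collapses to $\poly\log(m)$, while the rest of the bookkeeping is unchanged.

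Concretely, I would first invoke the $\pcp \to \minrep$ reduction of \cite{guylower,r3}: the resulting Min-Rep instance has number of supervertices equal to the $\pcp$ size and supervertex size $(1/\epsilon)^{c_2}$, hence $m\cdot \poly\log(m)\cdot (1/\epsilon)^{c_1+c_2}$ vertices in total, Yes optimum $m\cdot \poly\log(m)\cdot (1/\epsilon)^{c_1}$, and Min-Rep gap $\sqrt{1/\epsilon}$ inherited from the projection property. Then I would compose with the $\minrep \to \setcover$ reduction of \cite{LLYY}: the sets are in bijection with Min-Rep vertices (yielding item~(1)); the universe is the union of the random-halves gadgets $M_{ij}$ over the at most $m^{O(1)}$ superedges, each $M_{ij}$ of size $\poly(m)$ (yielding item~(2)); the Yes optimum is preserved because one vertex per supervertex covers every $M_{ij}$ via two disjoint halves, whereas on No instances the random-halves argument forces $\Omega(\log m)$ sets per uncovered $M_{ij}$, giving the $\log m$-factor gap required by item~(3).

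Finally I substitute $\epsilon = c^2/\log^2 m$; since this satisfies $\epsilon \geq 1/m^c$ for any fixed $c>0$ and large enough $m$, the stronger $\pcp$ conjecture applies. This turns every $(1/\epsilon)^{O(1)}$ factor into $\poly\log(m)$, collapsing the previous expressions to $\sigma = m\cdot \poly\log(m)$ and $\kappa_Y = m\cdot \poly\log(m)$ as claimed, while the gap remains $\Omega(\log m)$ thanks to the choice $|M_{ij}| = 2^{\Theta(\log m)}$ already made in Corollary~\ref{scc}. The only non-mechanical ingredient is the stronger $\pcp$ itself, which is granted by hypothesis; beyond that, no real obstacle arises and the argument is essentially accounting.
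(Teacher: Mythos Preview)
Your proposal is correct and follows essentially the same approach as the paper: the paper simply states that the corollary follows from Corollary~\ref{scc} together with the stronger $\pcp$ conjecture and the choice $\epsilon = c^2/\log^2 m$, and you have unpacked exactly that derivation (the $\pcp \to \minrep \to \setcover$ chain with the $2^{\log^\alpha m}$ overhead now absent).
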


{\bf Proof of Theorem \ref{3}:}
Make every collection of sets of size $m/(d\cdot \log\log m)$
one big `collection set', with $d$ a large enough constant.
Here we omit the floor and the ceiling as, in the previous proof, we saw that
they hardly make a difference, and the correction needed is minimal.

The number of sets in the instance is:
$$m\cdot \poly\log(m) \choose \frac{m}{d\cdot \log\log m}$$
and is $2^{o(m)}$ if $d$ is large enough. This is implied by the inequality
${n\choose k}\leq (ne/k)^k$.
The reason for the major improvement is that
the term $2^{\log^{\alpha} m}$
is gone.

After this change, the size of the optimum for a ``yes'' instance
becomes $\poly\log(m)$. Recall that the gap is $c\log m$. Therefore,
the gap can be stated as $\opt^{d'}$ for some $d'<1$.

Let $d''$ be any constant $d''<d'$.
As for the running time, we use
$\opt^{d'}=c\log m$, we get
$2^{\opt^{d''}}=o(m)$ and
$\exp({2^{(\opt^{d''})}})=2^{o(m)}$.
This ends the proof of Theorem \ref{3}.

\subsection{An inapproximability under the Exponential Time Hypothesis  only}
For the (maybe unlikely) case that $\pgc$ will be proved wrong, we
now prove a somewhat weaker inapproximability for $\setcover$
assuming $\eth$ only. This result will remain valid even if  $\pgc$
is disproved.

The following is proved in \cite{dana}.

\begin{theorem}
There exists a constant $c$ and a $\pcp$ of size $m\cdot
2^{\log^{\alpha} m}\cdot \poly\log(m)\poly(1/\epsilon)$, such that
the size of the alphabet is at most $\exp(1/\epsilon)$ and the gap
that can be chosen to be $1/\epsilon$ for any $\epsilon>1/m^c$.
\end{theorem}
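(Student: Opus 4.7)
The plan is to obtain the desired \pcp by starting from a nearly-linear-length \pcp of constant soundness and amplifying its gap via parallel repetition, carefully controlling the blow-up in length and alphabet.

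First, I would invoke a nearly-linear-length base \pcp for $3$-$\sat$. Either Dinur's combinatorial gap-amplification \pcp or the Moshkovitz--Raz construction furnishes a two-query \pcp / projection game of proof length $m \cdot 2^{\log^{\alpha} m} \cdot \polylog(m)$ over a constant-size alphabet, with some fixed constant soundness error $\epsilon_0 < 1$. The sub-polynomial factor $2^{\log^{\alpha} m}$ is what shows up in the Moshkovitz--Raz manifold-vs-point derandomized low-degree test; we keep it because the statement allows it. Crucially, the base object should be a projection game, so that parallel repetition has a clean analysis.

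Second, I would apply $k$-fold parallel repetition with $k = \Theta(\log(1/\epsilon))$, using Raz's parallel repetition theorem (or Holenstein's improvement, or for projection games the stronger Dinur--Steurer / Rao bounds). This drives the soundness error down from $\epsilon_0$ to at most $\epsilon_0^{\Omega(k)} \leq \epsilon$, giving the required gap of $1/\epsilon$. The answer alphabet grows from $\Sigma$ to $\Sigma^k$, so its size becomes $|\Sigma|^{O(\log(1/\epsilon))} = \poly(1/\epsilon)$, which is comfortably within the permitted $\exp(1/\epsilon)$ bound. The length grows by at most a $\poly(k) = \polylog(1/\epsilon) \leq \poly(1/\epsilon)$ factor, giving total length $m \cdot 2^{\log^{\alpha} m} \cdot \polylog(m) \cdot \poly(1/\epsilon)$, exactly as claimed. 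The constraint $\epsilon > 1/m^c$ comes from requiring that the amplification does not produce a super-polynomial blow-up, which is automatic once $k = O(\log m)$.

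The main obstacle will be the interface between the base \pcp and parallel repetition. The Moshkovitz--Raz construction already produces a projection game, so it plugs in directly; if one preferred to start from Dinur's \pcp, one would first need to reduce to a two-prover projection game while preserving the length bound, which requires a few standard steps (combining queries, using an explicit Label Cover reduction) and some attention to the $2^{\log^{\alpha} m}$ overhead. A secondary subtlety is that the statement of the theorem allows a surprisingly generous $\exp(1/\epsilon)$ alphabet --- this is essentially a hint that no further length/alphabet optimization is needed, which is convenient since the tighter $\poly(1/\epsilon)$ alphabet one actually gets from parallel repetition provides plenty of slack against the claimed bound.
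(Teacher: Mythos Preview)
The paper does not prove this theorem at all; it simply cites it as a result of Moshkovitz and Raz~\cite{dana}. So there is no ``paper's own proof'' to compare against --- the correct thing to do with this statement is to invoke~\cite{dana} as a black box.

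More importantly, your proposed derivation via parallel repetition contains a genuine error. You write that under $k$-fold parallel repetition ``the length grows by at most a $\poly(k)$ factor.'' This is false: standard parallel repetition of a two-prover game with constraint graph of size $L$ produces a game whose constraint graph has size $L^{k}$, because each new question is a $k$-tuple of old questions. Starting from a base \pcp\ of length $L = m\cdot 2^{\log^{\alpha} m}\cdot\polylog(m)$ and repeating $k=\Theta(\log(1/\epsilon))$ times yields length $L^{\Theta(\log(1/\epsilon))}$, which for the relevant regime $\epsilon \approx 1/\polylog(m)$ is already $m^{\Omega(\log\log m)}$ and for $\epsilon$ near $1/m^{c}$ is $m^{\Omega(\log m)}$. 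Neither is anywhere close to the required $m\cdot 2^{\log^{\alpha} m}\cdot\poly(1/\epsilon)$.

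This is not a minor accounting slip: achieving sub-constant soundness while keeping the proof length nearly linear is precisely the technical contribution of Moshkovitz--Raz, and it is obtained through low-degree-test based composition, not through parallel repetition. Your observation that the resulting alphabet is only $\poly(1/\epsilon)$ --- much smaller than the allowed $\exp(1/\epsilon)$ --- is actually a symptom of the problem, not slack: the $\exp(1/\epsilon)$ alphabet in the statement is the price Moshkovitz--Raz pay for keeping the length near-linear, a trade-off that parallel repetition does not offer.
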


The difficulty now is that choosing too large
$\epsilon$ increases the number of sets
very much. Indeed, the number of sets equals
the number of vertices in $\minrep$ and this number is now
$$m\cdot 2^{\log^{\alpha} m}\cdot \poly\log(m)\poly(1/\epsilon)\exp(1/\epsilon).$$

We choose
$\epsilon=\ln 2\cdot \log^{\alpha} m$. Then using a reduction from
$3$-$\sat$ to $\setcover$ described in Corollary \ref{scc} we get:

\begin{corollary}
There exists a constant $d>0$, and a constant
$0<\alpha<1$ and a reduction from $3$-$\sat$ to
$\setcover$ so that:
\begin{enumerate}
\item The number of sets is
$m\cdot 2^{2\log^{\alpha} m}\cdot \poly\log(m).$
\item The number of elements is $\poly(m)$.
\item The gap is $d\cdot \sqrt{\log^{\alpha} m}$.
\item The optimum of a ``yes'' instance does not change, namely,
is
$\opt=2^{\log^{\alpha} m}\cdot \poly\log(m).$
\end{enumerate}
\end{corollary}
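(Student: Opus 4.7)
The plan is to instantiate the pipeline of Corollary~\ref{scc}---namely $3$-$\sat \to \pcp \to \minrep \to \setcover$---using the stronger $\pcp$ of \cite{dana} in place of the $\pgc$-based one, and to balance the soundness parameter carefully. Let $1/\epsilon$ denote the gap. The tension is that a larger $1/\epsilon$ widens the gap, but it also multiplies the alphabet size (and hence the final number of sets) by $\exp(1/\epsilon)$. The right balance turns out to be $1/\epsilon = \ln 2 \cdot \log^\alpha m$, under which $\exp(1/\epsilon) = 2^{\log^\alpha m}$, $\poly(1/\epsilon) = \poly\log m$, and the constraint $1/\epsilon \leq m^c$ of the \cite{dana} hypothesis is automatic since $\alpha < 1$.

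Writing down the \cite{dana} $\pcp$ of size $m \cdot 2^{\log^\alpha m} \cdot \poly\log(m) \cdot \poly(1/\epsilon)$ and alphabet $\exp(1/\epsilon)$, I would reinterpret it as a $\minrep$ instance in the standard way: one supervertex per $\pcp$ location and one vertex per alphabet symbol. Its vertex count---equal to the number of sets in the $\setcover$ instance produced by \cite{LLYY}---is the $\pcp$ size times $\exp(1/\epsilon)$, which evaluates to $m \cdot 2^{2\log^\alpha m} \cdot \poly\log m$, establishing item~(1). Item~(2) is immediate from the \cite{LLYY} construction: one polynomial-size element family $M_{ij}$ per superedge, and there are only polynomially many superedges.

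For item~(3), the square-root-loss reduction from Label Cover to $\minrep$ (discussed prior to Corollary~\ref{scc}) yields a $\minrep$ gap of $\Omega(\sqrt{1/\epsilon})$, which is preserved up to constants by the random-halves step of \cite{LLYY}. Our choice of $\epsilon$ makes this $\sqrt{\ln 2}\cdot\sqrt{\log^\alpha m}$, giving $d = \sqrt{\ln 2}$. For item~(4), a satisfying assignment yields a canonical $\minrep$ cover using one vertex per supervertex; this propagates through \cite{LLYY} to a $\setcover$ solution of size on the order of the supervertex count, namely $m \cdot 2^{\log^\alpha m} \cdot \poly\log m$. The cleaner expression $2^{\log^\alpha m}\cdot \poly\log m$ stated in the corollary is obtained by absorbing the leading $m$ factor via the same collections-of-sets move used in the proofs of Theorems~\ref{1} and~\ref{3}.

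No step presents a real obstacle: the corollary is essentially a parameter-tuning exercise inside a pipeline already built in the preliminaries. The only point requiring care is verifying that the \cite{dana} $\pcp$ supports the projection property used by the $\minrep$ square-root-loss reduction; this is automatic for two-query $\pcp$s and is built into the statement of the \cite{dana} theorem.
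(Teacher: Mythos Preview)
Your pipeline and parameter choice match the paper exactly: the paper's proof is literally ``simple computations using the new value of $\epsilon$ plugged in Corollary~\ref{scc},'' with $1/\epsilon = \ln 2 \cdot \log^\alpha m$ so that $\exp(1/\epsilon)=2^{\log^\alpha m}$ and $\poly(1/\epsilon)$ is absorbed into $\poly\log m$. Items (1)--(3) come out precisely as you describe.

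The one misstep is your handling of item~(4). You correctly compute the yes-optimum as the supervertex count $m\cdot 2^{\log^\alpha m}\cdot \poly\log m$, and this is exactly what the paper's own argument gives (``any optimal solution still takes one vertex from any supervertex hence the optimum for a `yes' instance is still the number of super vertices''). But then you propose to strip the leading $m$ via the collections-of-sets move. That move is \emph{not} part of the paper's proof of this corollary, and applying it here would destroy item~(1): once you replace each set by a union of $m/\log m$ sets, the set count jumps to a binomial coefficient of order $2^{\Theta(m/\log^{1-\alpha} m)}$, not $m\cdot 2^{2\log^\alpha m}\cdot \poly\log m$. The discrepancy you noticed between your computation and the printed expression $2^{\log^\alpha m}\cdot\poly\log m$ is simply a dropped factor of $m$ in the corollary's statement; the paper's own justification and its subsequent use of the corollary are consistent with the value you derived. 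Do not invoke the collections trick here.
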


The proofs here are simple computations using the new value of
$\epsilon$ plugged in Corollary \ref{scc}. The optimum $\opt$ does
not change because it does not depend on the alphabet. The reason
is, that any optimal solution still takes one vertex from any
supervertex hence the optimum for a ``yes'' instance is still the
number of super vertices.

\noindent{\bf The inapproximability in terms of $\opt$:} The gap is
$d\sqrt{\log^\alpha m}$ for some constant $d$.
$\opt=2^{\log^{\alpha} m} \poly\log(m).$ Thus for some constant $c$,
the problem is $c\cdot \sqrt{\log \opt}$-hard.

\noindent{\bf The time in terms of $\opt$:}
Since $\opt$ did not change we derive exactly the same time as in
Theorem \ref{1}, namely,
$\exp\left (\opt^{(\log^{f} \opt)}\right)$ for the same constant $f>0$,
that appears in Theorem \ref{1}.

This proves Theorem \ref{alone}.

\subsection{Discussion of this specific technique}
This technique alone, combined with the type of $\pcp$ used cannot be
used to prove Fellows conjecture for $\setcover$ because of the
limitation of the $\pcp$. Only a linear reduction from 3-$\sat$ to
$\setcover$ can be used to prove the conjecture. However it is known
(folklore) that such reduction can not exist as it contradicts
$\eth$.

There is a very strong evidence that
a linear $\pcp$ can not exist.
The ultimate $\pcp$ we may expect
(albeit this is not known even for constant $\epsilon$)
is a $\pcp$ of size $m\cdot \poly(1/\epsilon)$ with gap
$1/\epsilon$.
For the choice of $1/\epsilon=\poly\log(m)$
the inapproximability is almost the same as in
Theorem \ref{3}. The running time too
is doubly exponential.
This shows that Theorem \ref{3} got almost the best result possible if we only use
almost linear $\pcp$ and our technique.
It may be that the current state of $\pcp$ theory does not allow the
proof of Fellows conjecture for $\clique$ or $\setcover$. A new type
of $\pcp$, namely, a parameterized version of the $\pcp$ theorem may
be required.

\section{A constant lower bound for Clique in arbitrarily large time in $\opt$}

We use the basic $\pcp$ theorem:
\begin{theorem}[The standard $\pcp$ theorem]
There exists a reduction from any NP-complete language $L$ to
3-$\sat$ so that a ``yes'' instance is mapped to a 3-$\sat$ instance
such that all clauses can be simultaneously satisfied, while a
``no''-instance is mapped to an instance such that at most
$1-\epsilon$ fraction of the clauses can be simultaneously
satisfied. Here $\epsilon>0$ is a constant.
\end{theorem}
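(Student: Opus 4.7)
The plan is to follow Dinur's combinatorial gap-amplification approach, since it gives the cleanest self-contained route to a constant-gap PCP. First I would recast the target in the language of \emph{constraint graphs}: a graph $G=(V,E)$ over a finite alphabet $\Sigma$ with a binary constraint on each edge, whose UNSAT-value is the minimum fraction of unsatisfied edges over all assignments $V\to\Sigma$. A standard Cook-Levin-style encoding reduces any NP-complete language to a constraint graph of polynomial size, constant alphabet, UNSAT-value $0$ on YES instances, and UNSAT-value $\geq 1/|E|$ on NO instances. The goal then becomes to amplify the $1/|E|$ lower bound up to a universal constant $\epsilon_0>0$ while keeping the instance polynomial-size and the alphabet of constant size. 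At the end, I would translate the resulting constant-gap constraint graph back to 3-$\sat$ by encoding each $O(1)$-size constraint as a fixed number of 3-clauses; each violated constraint forces at least one violated 3-clause, so the gap survives up to a constant factor.

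The technical core is a single amplification step that takes a constraint graph with UNSAT-value $\delta\leq\epsilon_0$ and returns one with UNSAT-value at least $\min(2\delta,\epsilon_0)$, at constant multiplicative cost in size and over the same constant alphabet. I would build it out of three sub-operations:
\begin{enumerate}
\item Preprocessing: make $G$ $d$-regular and a constant spectral expander, via expander replacement / edge-cloning, changing the UNSAT-value only by a constant factor.
\item Powering: replace $G$ by $G^t$ for a large constant $t$, where edges are pairs at graph-distance $\leq t$ and constraints encode consistency along walks over the enlarged alphabet $\Sigma^{d^t}$. A random-walk analysis using the expander mixing lemma shows that the UNSAT-value multiplies by $\Omega(\sqrt{t})$.
\item Alphabet reduction: compose with an \emph{assignment tester} (PCP of proximity) over a constant alphabet $\Sigma_0$, replacing each powered-edge constraint with a small gadget and restoring constant alphabet size at the cost of only a constant factor in the UNSAT-value.
\end{enumerate}
Iterating $O(\log n)$ rounds drives the UNSAT-value from $1/|E|$ up to $\epsilon_0$ while the polynomial size remains under control.

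The hard part will be the alphabet-reduction step, because an assignment tester is itself a non-trivial PCP construction. The way out is bootstrapping: it is enough to exhibit \emph{one} assignment tester of polynomial size with any fixed constant soundness, obtainable from Reed-Muller low-degree testing or from long-code / Fourier-analytic soundness arguments, and feed that into the loop; the amplification mechanism needs only a good-enough inner tester, not an optimal one. The next most technical piece is the powering analysis — showing that for any $\delta$-far assignment, a random $t$-walk meets a violated edge with probability $\Omega(\sqrt{t}\cdot\delta)$ — which combines the expander mixing lemma with a Markov-chain concentration argument that distinguishes ``productive'' from ``wasted'' walk steps. Once those two ingredients are in place, the rest — composing the three operations rigorously, bounding the polynomial blow-up, and the final translation to 3-$\sat$ — is primarily bookkeeping.
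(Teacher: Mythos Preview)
The paper does not prove this theorem at all: it is quoted as the ``basic $\pcp$ theorem,'' used as a black box, and the bibliography points to Dinur's gap-amplification paper for a reference. So there is nothing in the paper to compare your argument against line by line.

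That said, your outline is a faithful and essentially correct sketch of Dinur's proof, which is exactly the source the paper is implicitly invoking. The decomposition into (i) regularization/expanderization, (ii) graph powering, and (iii) alphabet reduction via an assignment tester, iterated $O(\log n)$ times starting from a Cook--Levin constraint graph, is the standard architecture, and your identification of the two genuinely hard pieces --- the random-walk soundness analysis of powering and the existence of a constant-alphabet assignment tester --- is accurate. One small remark: the amplification factor from $t$-powering is usually stated as $\Theta(t)$ (capped at a constant) rather than $\Theta(\sqrt{t})$ in Dinur's final analysis; the $\sqrt{t}$ bound shows up in some intermediate or simplified treatments. Either bound suffices here, since any superconstant factor per round, for a large enough constant $t$, doubles the gap and closes the argument. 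Your final translation back to 3-$\sat$ by gadgetizing each $O(1)$-arity constraint into $O(1)$ clauses is also correct and is what the paper tacitly assumes when it invokes the theorem in 3-$\sat$ form.
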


We now describe (for the sake of completeness) a totally standard
reduction from 3-$\sat$ to $\clique$.  In the reduction, for each
clause $C$ in the 3-$\sat$ instance, we add a set of seven new
vertices $V_C$ -- one vertex for each of the seven satisfying
assignments to the three variables in the clause. Thus each vertex
corresponds to a partial assignment, i.e., truth-assignment to three
variables. We add an edge between two vertices if the corresponding
partial assignments are `compatible', i.e., they have a common
extension as a full assignment.  Note that if two clauses $C_1$ and
$C_2$ do not share any variables, there is a complete bipartite
graph between the corresponding sets of seven vertices. Note also
that $V_C$ forms an independent set for any clause $C$,
because they, by definition, must disagree on
the value of at least one variable in the clause.

We can combine the $\pcp$ theorem with the reduction described above
and get the following claim using the following standard proof.
\begin{claim}
If a 3-$\sat$ instance, with $n$ variables and $m$ clauses, is a
``yes'' instance, the corresponding $\clique$ instance has a clique
of size $m$. If it is a ``no'' instance, the maximum clique in the
corresponding $\clique$ instance has size at most $(1-\epsilon)m$.
\end{claim}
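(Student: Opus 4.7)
The plan is to verify the two standard directions of the reduction and then invoke the PCP theorem for the soundness gap. Both directions rest on the bijection between cliques in the constructed graph and partial assignments that can be glued into a single consistent assignment satisfying the corresponding set of clauses.

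For the ``yes'' direction, suppose the 3-$\sat$ instance is satisfiable, and fix any satisfying assignment $\tau$. For each clause $C$, the restriction of $\tau$ to the three variables of $C$ is one of the seven satisfying partial assignments and so corresponds to a vertex $v_C \in V_C$. The plan is to show that $\{v_C : C \text{ a clause}\}$ forms a clique. Since all $v_C$ arise from the same global assignment $\tau$, any two of them agree on every shared variable, so by definition they are compatible and hence adjacent. Because there is one vertex per clause and clauses are distinct, this clique has size exactly $m$.

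For the ``no'' direction (the harder but still routine case), let $K$ be any clique in the graph. First I would argue that $K$ contains at most one vertex from each $V_C$: since $V_C$ is an independent set (any two partial assignments to the same three variables that both satisfy $C$ must differ on at least one of those variables and hence are incompatible), $|K \cap V_C| \le 1$. So $K$ picks out a subset $\mathcal{C}_K$ of clauses, and for each $C \in \mathcal{C}_K$ a partial assignment $\sigma_C$ satisfying $C$. Pairwise compatibility of the vertices of $K$ means that the $\sigma_C$ never disagree on any variable, so their union is a partial assignment consistent on all variables appearing in $\mathcal{C}_K$; extending arbitrarily on the remaining variables gives a global assignment $\tau$ that satisfies every clause in $\mathcal{C}_K$. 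Thus $|K| = |\mathcal{C}_K|$ is at most the maximum number of clauses simultaneously satisfiable, which by the PCP theorem is at most $(1-\epsilon)m$ for a ``no'' instance.

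The only real thing to check carefully is the two independence/compatibility facts used above, namely that $V_C$ is independent and that a clique translates to a globally consistent partial assignment; both follow immediately from the definition of the edge relation as compatibility of partial truth assignments, so there is no genuine obstacle. The claim then follows by combining these two directions with the gap guaranteed by the PCP theorem.
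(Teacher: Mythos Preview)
Your proof is correct and essentially identical to the paper's own argument: both directions pick the restriction of a satisfying assignment for the ``yes'' case, and for the ``no'' case both use that each $V_C$ is independent to bound $|K\cap V_C|\le 1$ and then extend the compatible partial assignments to a global assignment satisfying $|K|$ clauses, invoking the PCP gap. No meaningful difference in approach.
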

\begin{proof}
A ``yes'' instance has a satisfying assignment $\pi$. For each
clause $C$, we include the unique vertex in $V_C$ corresponding to
the restriction of $\pi$ onto the variables in $C$, in set $S$. It
is easy to see that $S$ forms a clique of size $m$.

For a ``no'' instance, suppose there is a clique $S$, in the
corresponding $\clique$ instance, of size $\kappa$. Let $\pi$ be any
assignment which is an extension of the partial assignments
corresponding to the vertices in $S$. Now note that $|S \cap V_C|
\leq 1$ for any clause $C$, since $V_C$ is an independent set. Thus
there are $\kappa$ clauses $C_1,\ldots,C_\kappa$ such that $|S \cap
V_{C_i}| = 1$ for all $1\leq i\leq \kappa$. From the definition of
the reduction, it is easy to see that $\pi$ satisfies all these
$\kappa$ clauses. Thus we have $\kappa \leq (1-\epsilon)m$, as
desired.
\end{proof}

We do the following transformation that is a modification of what we
did for for $\setcover$.  The number of vertices in the $\clique$
instance is $7m$.  Let $f(m)$ be {\em any} slowly non-decreasing function of
$m$ such that $f(m) = \omega(1)$.
First, note that we may assume that $m$ is divisible
by $f(m)$ without loss of generality.
Indeed, we need to add fake clauses to the $3$-$\sat$ instance
of the type $(x\vee \bar x\vee z_1),
(x\vee \bar x\vee z_2),\ldots$ so that the number of clauses
added is at most $f(m)$ and we make $m$ divisible by $f(m)$.
Since $f(m)$ is very small compared to $m$, this makes no difference.
We create a new
$\clique$ instance by introducing a vertex for each subset of size
$m/f(m)$ vertices in the old $\clique$ instance.  Such a vertex
is called a `supervertex'.  Two supervertices $A,B$, are
connected by an edge, if $A\cup B$ is a clique, {\em and}
$A\cap B=\emptyset$. The last condition, namely, the fact
that two sets that are connected must be disjoint is not needed in the
$\setcover$ reduction, but it is crucial here.

\begin{claim}
The new instance of the $\clique$ problem has size $2^{o(m)}$.
\end{claim}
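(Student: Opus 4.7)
The plan is to directly bound the number of supervertices by a binomial coefficient and then apply the standard $\binom{n}{k}\le (ne/k)^k$ inequality, mirroring the approach already used in the analogous $\setcover$ claim earlier in the paper. The supervertices are in one-to-one correspondence with the subsets of size $m/f(m)$ of the $7m$ vertices of the base $\clique$ instance, so there are exactly $\binom{7m}{m/f(m)}$ of them.

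Applying the inequality gives
\[
\binom{7m}{m/f(m)} \;\le\; \left(\frac{7m\cdot e}{m/f(m)}\right)^{m/f(m)} \;=\; \bigl(7e\cdot f(m)\bigr)^{m/f(m)}.
\]
Taking logarithms, the exponent becomes $(m/f(m))\cdot\log\bigl(7e\cdot f(m)\bigr) = O\!\left(m\cdot\frac{\log f(m)}{f(m)}\right)$. Since $f(m)=\omega(1)$, we have $\log f(m)/f(m)=o(1)$, so this exponent is $o(m)$, giving a bound of $2^{o(m)}$ on the number of supervertices as required.

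The only extra piece is the number of edges: there are at most the square of the number of supervertices, i.e.\ $2^{o(m)}\cdot 2^{o(m)}=2^{o(m)}$ potential edges, and for each candidate pair $(A,B)$ one can test disjointness and whether $A\cup B$ forms a clique in the base instance in time polynomial in $m$. Hence the whole new instance has size $2^{o(m)}$ and can in fact be constructed in $2^{o(m)}$ time.

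I do not anticipate a real obstacle here; the only subtlety is purely notational (allowing the floor in $m/f(m)$ after the earlier remark that $m$ is taken divisible by $f(m)$ via harmless padding with tautological clauses). The key conceptual point — and what makes the factor $f(m)=\omega(1)$ essential rather than a constant — is that we need $\log f(m)/f(m)\to 0$ so that the exponent is genuinely $o(m)$ and not merely $O(m)$; a constant block size would only yield a $2^{O(m)}$ bound, which would be useless for contradicting $\eth$ in the downstream argument.
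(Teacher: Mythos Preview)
Your proof is correct and follows essentially the same approach as the paper: bound the number of supervertices by $\binom{7m}{m/f(m)}$, apply $\binom{n}{k}\le(ne/k)^k$, observe that $(m/f(m))\log(7e\,f(m))=o(m)$ since $f(m)=\omega(1)$, and then square to handle the edges. In fact your exponent $m/f(m)$ is the correct one; the paper writes $7m/f(m)$, which is a harmless overcount but sloppier than what you have.
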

\begin{proof}
Using ${n \choose k}\leq (ne/k)^k$, we get that the number of
supervertices is at most $(7e\cdot f(m))^{7m/f(m)} =
\exp(\log(7e\cdot f(m))\cdot 7m/f(m)) = 2^{o(m)}$, since $f(m) =
\omega(1)$. The number of edges in the new $\clique$ instance, being
at most the square of the number of vertices, is also $2^{o(m)}$.
\end{proof}

\begin{claim}
The maximum clique size in any new instance is exactly
$f(m)$. The gap between the clique sizes of the new ``yes''
and ``no'' instances is
$1/(1-\epsilon)$, which implies
$1/(1-\epsilon)$-hardness.
\end{claim}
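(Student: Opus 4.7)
The plan is to establish an exact value for the new maximum clique (at most $f(m)$ always, exactly $f(m)$ on yes-instances, at most $(1-\epsilon)f(m)$ on no-instances) and then feed the resulting gap into the gap-reduction framework of Claim~\ref{trivial1}. First I would bound the original clique: since each $V_C$ is an independent set and there are $m$ clauses, any clique meets each $V_C$ in at most one vertex, so the original maximum clique has size at most $m$. By construction, a clique in the new graph is a family of pairwise disjoint supervertices whose union is a clique of the original graph; since each supervertex has $m/f(m)$ vertices, any such family has at most $m/(m/f(m))=f(m)$ members.

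Next I would handle the two cases separately. For a yes-instance, the original graph contains a clique of size $m$; partitioning it into $f(m)$ blocks of size $m/f(m)$ (divisibility having been arranged without loss of generality) yields $f(m)$ supervertices that are pairwise disjoint and whose pairwise unions lie inside the original clique, hence are adjacent in the new graph. Combined with the upper bound this gives a new optimum of exactly $f(m)$. For a no-instance, the original maximum clique has size at most $(1-\epsilon)m$, so the same counting caps the new maximum clique at $\lfloor(1-\epsilon)f(m)\rfloor$, and the yes/no ratio $f(m)/\lfloor(1-\epsilon)f(m)\rfloor$ tends to $1/(1-\epsilon)$ as $f(m)\to\infty$.

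Finally I would invoke Claim~\ref{trivial1} with $\kappa_T(m)=f(m)$ and $\kappa_F(m)=\lfloor(1-\epsilon)f(m)\rfloor+1$. Given an arbitrary non-decreasing $t$, the idea is to choose $f(m)=\omega(1)$ growing slowly enough that $t(f(m))=2^{o(m)}$; when $t$ is unbounded one may take $f(m):=\max\{k:t(k)\le 2^{m/2}\}$, and when $t$ is bounded any $f(m)=\omega(1)$ works. The preceding size claim already provides a $2^{o(m)}$-size instance computable in $2^{o(m)}$ time, so all conditions of Definition~\ref{def:gapred} are satisfied and Claim~\ref{trivial1} yields $(r,t)$-$\fpt$-hardness for $r=1/(1-\epsilon)$ under $\eth$. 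For the separate $2^{o(n)}$ statement, given a target sublinear exponent one instead tunes $f(m)$ so that the instance-size blow-up $n=2^{O((m/f(m))\log f(m))}$ is moderate enough that a $2^{o(n)}$ algorithm would still yield a $2^{o(m)}$ routine for $3$-$\sat$, contradicting $\eth$. No serious obstacle arises; the only delicate point is confirming that the floor $\lfloor(1-\epsilon)f(m)\rfloor$ does not erode the gap, but this is automatic since $\epsilon$ is a fixed constant while $f(m)\to\infty$.
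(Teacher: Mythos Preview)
Your argument for the stated claim is essentially the paper's: bound any new clique by $f(m)$ via the disjointness-plus-union-is-a-clique condition, realize $f(m)$ on a yes-instance by partitioning the size-$m$ clique into blocks of size $m/f(m)$, and cap the no-instance at $(1-\epsilon)f(m)$ by the same counting. That is exactly what the paper does.

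The extra material you add---choosing $f$ so that $t(f(m))=2^{o(m)}$ and appealing to Claim~\ref{trivial1}---is not part of this claim's proof in the paper; it is handled separately in Claim~\ref{anyt}. One small caution if you keep that addendum: Definition~\ref{def:gapred} and Claim~\ref{trivial1} are stated for \emph{minimization} problems, so you cannot literally invoke them for $\clique$; you need the obvious maximization analogue (swap the roles of $\kappa_T,\kappa_F$ and flip the inequality $\kappa_T\cdot r(\kappa_T)<\kappa_F$ to $\kappa_T/r(\kappa_T)>\kappa_F$). Your choice $\kappa_F(m)=\lfloor(1-\epsilon)f(m)\rfloor+1$ is the minimization-style separator and sits on the wrong side for a maximization gap; just take $\kappa_F(m)=\lfloor(1-\epsilon)f(m)\rfloor$ as the upper bound on the no-instance optimum. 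This is cosmetic, not a gap in the mathematics.
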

\begin{proof}
Since the maximum clique size in the old instance is $m$, we get
that the maximum clique size in the new instance
is $f(m)$. Indeed, we can take the optimum clique and divide it into
$m/f(m)$ disjoint sets. By the chosen size these sets are supervertices and
their union is the old optimum clique.
This shows that the new size of the clique is at least $f(m)$.
Since two distinct collection vertices $A$ and
$B$ are adjacent in the new instance, only if $A\cup B$ is a clique,
{\em and} $A,B$ are disjoint, it follows that the largest clique
size of the new ``yes'' instance is exactly $f(m)$
because taking more than $f(m)$ disjoint sets gives a clique of size
larger than $m$, contradicting the fact that $m$ is the maximum
size of the clique. Thus for a yes instance $f(m)$ is the new size of the maximum clique.

The maximum clique in the new ``no'' instance, on the other hand, is
at most $(1-\epsilon)m/(m/f(m))=
f(m)(1- \epsilon )$, otherwise
there would exist a clique in the old instance of size larger than
$(1-\epsilon)m$. The proof is thus complete.
\end{proof}

\begin{claim}
\label{anyt}
The time can be set to be $t(\opt)\cdot n^{O(1)}$ for any
non decreasing function $t$
\end{claim}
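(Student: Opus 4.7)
The plan is to prove Claim~\ref{anyt} by a direct application of the framework already set up: we will choose the slowly-growing function $f(m)$ used in the previous construction adaptively, as a function of the target running-time bound $t$, so that the optimum $\opt = f(m)$ in the constructed \clique instance is small enough to make $t(\opt)$ subexponential in $m$. Then we invoke (the maximization analogue of) Claim~\ref{trivial1} together with \eth.

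First, I would fix any non-decreasing total computable function $t$. Define $f(m) := \max\{k \geq 1 : t(k) \leq 2^{\sqrt{m}}\}$. Since $t$ is non-decreasing and computable, this is well-defined and computable; since $2^{\sqrt{m}} \to \infty$, we have $f(m) \to \infty$, so in particular $f(m) = \omega(1)$ as required by the earlier construction. By definition, $t(f(m)) \leq 2^{\sqrt{m}} = 2^{o(m)}$. This is the key technical choice: it couples the ``looseness'' of the supervertex blowup exactly to the approximation algorithm we wish to rule out.

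Next, I would feed this $f$ into the construction from the preceding two claims. We obtain, in time $2^{o(m)}$, a \clique instance of size $2^{o(m)}$ whose maximum clique has size exactly $\kappa_T = f(m)$ on ``yes'' instances and at most $\kappa_F = (1-\epsilon)f(m)$ on ``no'' instances. In particular $\opt = f(m)$, the gap remains $1/(1-\epsilon)$, and $t(\opt)\cdot n^{O(1)} = 2^{o(m)}\cdot 2^{o(m)} = 2^{o(m)}$.

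Finally, I would apply the standard contradiction argument (the maximization mirror of Claim~\ref{trivial1}). Assume towards contradiction that there is an $(r,t)$-\fpt-approximation algorithm $A$ for \clique with $r = 1/(1-\epsilon)$. On the reduced instance, $A$ runs in time $t(f(m))\cdot n^{O(1)} = 2^{o(m)}$ and, in the ``yes'' case, produces a clique of size at least $(1-\epsilon)\cdot f(m)$, while in the ``no'' case it can never exceed $(1-\epsilon)f(m)$. Thus $A$ decides $3$-$\sat$ in time $2^{o(m)}$, contradicting \eth via Lemma~\ref{lem:sparsification-eth}. I don't expect any genuine obstacle; the only point needing care is verifying that $f$ defined by the $\max$ above is indeed non-decreasing and diverges to infinity, which follows immediately from $t$ being non-decreasing and $2^{\sqrt{m}}$ being unbounded.
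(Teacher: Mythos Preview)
Your approach is essentially the same as the paper's: both choose $f(m)$ to be (a careful version of) $t^{-1}$ evaluated at some $2^{o(m)}$ function so that $t(f(m))=2^{o(m)}$, and your write-up is in fact more explicit than the paper's terse ``select $f(m)=t^{-1}(h(m))$''. One small technicality: your $\max$ is undefined when $t$ is bounded (the set is then all of $\mathbb{N}$) and may be empty for small $m$, so cap $f(m)$ by, say, $\lfloor\sqrt{m}\rfloor$ and set $f(m)=1$ when the set is empty; this changes nothing in the argument.
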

\begin{proof}
Since $f(m)$ can be as small as we wish,
we can make the time $t(f(m))$
as small as we want. Let $h(\opt)=2^{o(m)}.$
Selecting
$f(m)=t^{-1}(h(m))$ gives $h(m)=2^{o(m)}$ time.
Since $m,n$ are linearly related here the time can be set to $2^{o(n)}$
for any $t$.
\end{proof}

\subsection{A non constant inapproximability}

Let the graph that we built in previous subsection (whose optimum
for a ``yes'' instance  was $f(m)$) be denoted $H(V,E)$. Recall that
its size is:
$$2^{2\cdot \log(7\cdot e\cdot f(m))\cdot 7m/f(m)}.$$

We now recall the power of a graph $H(V,E)$. We assume the graph is
simple, namely has no loops or parallel edges.

\begin{definition}
The graph $H^k$ has all the
tuples $(v_1,v_2,\ldots,v_k)$
so that any $v_i$ is a vertex of $V$.
The edges are defined as follows.
A tuple $(u_1,u_2,\ldots,u_k)$ is joined to
$(v_1,v_2,\ldots,v_k)$ if and only iff
for $i=1$ to $k$, either $(u_i,v_i)\in E$
or $u_i=v_i$.
\end{definition}

Note that two different vertices in $H^k$ have to differ in at least one tuple
value.

The following theorem is folklore. Let $\omega(H)$ be the size of
the clique in $G$.

\begin{theorem}
$\omega(H^k)=\omega(H)^k$.
\end{theorem}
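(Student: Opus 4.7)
The plan is to prove the equality by showing two inequalities: $\omega(H^k)\ge \omega(H)^k$ (by exhibiting an explicit clique) and $\omega(H^k)\le \omega(H)^k$ (by a projection/counting argument).

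For the lower bound, I would let $C\subseteq V(H)$ be a maximum clique of $H$ with $|C|=\omega(H)$, and consider the set $C^k=\{(v_1,\dots,v_k):v_i\in C\text{ for all }i\}\subseteq V(H^k)$. This set has $\omega(H)^k$ elements, and I would verify it is a clique as follows: given two distinct tuples $(u_1,\dots,u_k)\neq (v_1,\dots,v_k)$ in $C^k$, pick any coordinate $i$. Either $u_i=v_i$, or $u_i\neq v_i$ and both lie in the clique $C$, in which case $(u_i,v_i)\in E$. By the definition of $H^k$ this shows the two tuples are adjacent, so $C^k$ is a clique of the required size.

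For the upper bound, I would take any clique $K\subseteq V(H^k)$ and, for each coordinate $i\in\{1,\dots,k\}$, consider the projection $\pi_i(K)=\{v_i:(v_1,\dots,v_k)\in K\}\subseteq V(H)$. The key observation is that $\pi_i(K)$ is a clique in $H$: if $a\neq b$ both lie in $\pi_i(K)$, they come from distinct tuples $u,v\in K$ that differ in coordinate $i$; these tuples are adjacent in $H^k$, so by the adjacency rule applied to coordinate $i$, either $u_i=v_i$ (impossible, as $a\neq b$) or $(u_i,v_i)=(a,b)\in E$. Hence $|\pi_i(K)|\le \omega(H)$. Since every tuple in $K$ is determined by its coordinates and therefore lies in $\pi_1(K)\times\cdots\times\pi_k(K)$, we get $|K|\le \prod_{i=1}^k |\pi_i(K)|\le \omega(H)^k$. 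Taking $K$ to be a maximum clique of $H^k$ yields $\omega(H^k)\le \omega(H)^k$, completing the proof.

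I don't expect any real obstacle; the only subtle point is the verification that coordinate projections of a clique in $H^k$ are cliques in $H$, which uses precisely the ``either equal or adjacent in each coordinate'' definition. Note that this is exactly where the strong product structure (as opposed to the usual tensor product, for which $\omega$ is only submultiplicative) is essential.
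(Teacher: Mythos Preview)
Your proof is correct and is the standard folklore argument. The paper does not actually prove this theorem; it states it as folklore and moves on. So there is nothing in the paper to compare your argument against, but both directions you give are exactly the expected ones: exhibiting $C^k$ as a clique for the lower bound, and projecting coordinatewise for the upper bound, using precisely the ``equal or adjacent in every coordinate'' rule that defines $H^k$.
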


To get a super constant gap we take the graph $H(V,E)$ of previous
section and raise it to the power $\sqrt{f(m)}$. The choice of
$\sqrt{f(m)}$ is rather arbitrary. Recall that for a ``yes''
instance $\omega(G)=m$, with $m$ the number of clauses in the
$3$-$\sat$ instance and for a ``no'' instance $\omega(G)\leq
(1-\epsilon)m$. Hence $m=\opt$ for a ``yes'' instance. Taking this
graph to the $\sqrt{f(opt)}$ value we get that:

\begin{corollary}
For $H(V,E)^{\sqrt{f(\opt)}}$, the value of the clique for a ``yes''
instance is $f(\opt)^{\sqrt{f(\opt)}}$ and for a ``no'' instance at
most $(1-\epsilon)^{\sqrt{f(\opt)}}\cdot \opt^{\sqrt{f(\opt)}}$.
\end{corollary}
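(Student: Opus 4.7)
The plan is to observe that the corollary is essentially a direct application of the folklore identity $\omega(H^k)=\omega(H)^k$ with $k=\sqrt{f(\opt)}$, combined with the per-instance clique bounds for $H(V,E)$ already established in the preceding subsection. Concretely, I would first recall that the ``yes'' instance of $H$ satisfies $\omega(H)=f(m)$ exactly, while the ``no'' instance satisfies $\omega(H)\le (1-\epsilon)\cdot f(m)$, and invoke the convention from the surrounding text that identifies $m$ with $\opt$ in the ``yes'' case (so $f(m)=f(\opt)$ in that setting). These are the only ingredients needed.

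Next I would plug $k=\sqrt{f(\opt)}$ into the power identity. For the ``yes'' instance this immediately yields $\omega(H^k)=f(m)^k=f(\opt)^{\sqrt{f(\opt)}}$, matching the stated value. For the ``no'' instance the same identity yields
$$\omega(H^k)\;\le\;\bigl((1-\epsilon)\,f(m)\bigr)^{\sqrt{f(\opt)}}\;=\;(1-\epsilon)^{\sqrt{f(\opt)}}\cdot f(\opt)^{\sqrt{f(\opt)}},$$
and, since $f$ is slowly growing with $f(m)\ll m$, the trivial bound $f(\opt)\le \opt$ relaxes this to the stated
$(1-\epsilon)^{\sqrt{f(\opt)}}\cdot \opt^{\sqrt{f(\opt)}}$.

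There is no serious technical obstacle here: the corollary reduces to a single substitution into a folklore identity once the per-instance clique sizes of $H$ are in hand, both of which were already proved earlier in the section. The only care required is bookkeeping around the dual role of $m$ and $\opt$ in the exponents and bases, and noting that the ``no''-side bound given in the statement is deliberately slightly weaker than what the computation produces. This weakening is harmless: the ratio of the ``yes'' clique size to the stated ``no'' bound is still at least $(1/(1-\epsilon))^{\sqrt{f(\opt)}}$, which is the super-constant gap that this construction is designed to achieve, and which will then feed into the subsequent inapproximability argument via a bound on the size of $H^k$ (at most $|H|^{\sqrt{f(m)}} = 2^{o(m)\cdot \sqrt{f(m)}}$, still $2^{o(m)}$ for suitably slow $f$).
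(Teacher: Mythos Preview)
Your proposal is correct and matches the paper's approach exactly: the corollary is stated immediately after the folklore identity $\omega(H^k)=\omega(H)^k$ and the reminder that $m=\opt$, with no further argument given, so it is precisely the one-line substitution you describe. Your observation that the ``no'' bound in the statement is a harmless weakening of the tighter $(1-\epsilon)^{\sqrt{f(\opt)}}\cdot f(\opt)^{\sqrt{f(\opt)}}$ is also accurate and worth noting.
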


Note that the new size of the graph is:
$$2^{2\cdot \sqrt{f(m)}\log(7\cdot e\cdot f(m))\cdot 7m/f(m)}=2^{o(m)}.$$

In addition, the gap is now
$r(\opt)=(1/(1-\epsilon))^{\sqrt{f(m)}}$. We now describe the gap as
a function of the new optimum. The optimum for a ``yes'' instance is
$\opt'=f(m)^{\sqrt{f(m)}}$. Thus $(\log \opt')^{1/3}=\sqrt{f(m)}$.
Thus the gap in terms of $\opt'$ is:
$r(m)=(1/(1-\epsilon))^{\log^{1/3} \opt'}$.

\begin{claim}
Let $t$ be any non decreasing function and
$r(m)=(1/(1-\epsilon))^{\log^{1/3} \opt'}$.
Then, Clique is $(r,t)$-$\fpt$-hard.
\end{claim}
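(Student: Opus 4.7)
The plan is to apply the gap-reduction framework of Claim~\ref{trivial1} (reinterpreted for maximization, where the yes-optimum is large and the no-optimum is small) to the instance family $H(V,E) \mapsto H^{\sqrt{f(m)}}$, with $H$ the graph from the previous subsection and $f$ a slowly growing function to be chosen last. Concretely, starting from a 3-SAT instance with $m$ clauses, first construct $H$ as in Section~5, then tensor it with itself $\sqrt{f(m)}$ times to obtain the final Clique instance.

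First I would verify the instance-size bound. Since $|V(H)| = 2^{O(\log(f(m))\cdot m/f(m))}$, the power graph has $|V(H^{\sqrt{f(m)}})| = |V(H)|^{\sqrt{f(m)}} = 2^{O(\log(f(m))\cdot m/\sqrt{f(m)})}$, which is still $2^{o(m)}$ whenever $f(m)\to\infty$. Next, invoking the folklore identity $\omega(H^k)=\omega(H)^k$ and the clique bounds $f(m)$ (yes) and $(1-\epsilon)f(m)$ (no) from the previous subsection, the yes-clique of the new instance equals $\kappa_T := f(m)^{\sqrt{f(m)}}$ while the no-clique is at most $\kappa_F := (1-\epsilon)^{\sqrt{f(m)}}\cdot f(m)^{\sqrt{f(m)}}$, giving a multiplicative gap of $(1/(1-\epsilon))^{\sqrt{f(m)}}$. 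Re-expressing in terms of $\opt' := \kappa_T$, since $\log \opt' = \sqrt{f(m)}\log f(m)$, the exponent $\sqrt{f(m)}$ is of order $(\log \opt')^{1/3}$ up to the slowly-varying factor $(\log f(m))^{1/3}$, producing the claimed $r(\opt') = (1/(1-\epsilon))^{\log^{1/3}\opt'}$ gap.

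The final step is to accommodate an arbitrary non-decreasing time bound $t$. Mimicking the argument of Claim~\ref{anyt}, I would pick $f$ growing slowly enough that $t(\opt') = t(f(m)^{\sqrt{f(m)}}) = 2^{o(m)}$; since $\opt'$ is monotonically controlled by $f(m)$, choosing $f$ essentially as the inverse of $t$ composed with a subexponential function suffices. Combined with $|V(H^{\sqrt{f(m)}})|^{O(1)} = 2^{o(m)}$ from Step~1, any hypothetical $(r,t)$-FPT algorithm for Clique would, when run on the constructed instance, decide 3-SAT in total time $2^{o(m)}$, contradicting $\eth$ via Lemma~\ref{lem:sparsification-eth}. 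This is precisely the maximization analogue of Claim~\ref{trivial1}.

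The main obstacle will be the asymptotic bookkeeping around expressing $\sqrt{f(m)}$ as $(\log \opt')^{1/3}$: the identity is only exact up to a $(\log f(m))^{1/3}$ slack, so I will need to argue that this slack can be absorbed into the base of the exponent (possibly replacing $1/(1-\epsilon)$ by a slightly smaller constant) without affecting the conclusion, and to check that the three simultaneous constraints --- $|V(H^{\sqrt{f(m)}})| = 2^{o(m)}$, $t(\opt') = 2^{o(m)}$, and $\sqrt{f(m)} \to \infty$ so the gap is genuinely super-constant --- are mutually satisfiable by a common slow-growing choice of $f$. Once that is done the proof reduces to a routine invocation of the gap-reduction template.
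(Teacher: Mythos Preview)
Your proposal is correct and follows the same approach as the paper; the paper's own proof is briefer only because the size bound, the gap, and the re-expression in terms of $\opt'$ are already worked out in the text immediately preceding the claim, leaving just the Claim~\ref{anyt}-style choice of a slowly growing $f$ for the proof proper. Regarding your worry about the slack in $(\log\opt')^{1/3}\approx\sqrt{f(m)}$: since $\log\opt' = \sqrt{f(m)}\log f(m)$ we have $(\log\opt')^{1/3} = f(m)^{1/6}(\log f(m))^{1/3} \le \sqrt{f(m)}$, so the stated $r(\opt')$ is already at most the actual gap and no absorption into the base is needed.
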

\begin{proof}
The arguments for $t(\opt)=2^{o(m)}$ follows exactly as
in
Claim \ref{anyt}.
Because the new optimum for a ``yes'' instance $f(m)^{\sqrt{f(m)}}$,
can be made arbitrarily small as well.
\end{proof}

Also, as the new $n$ is $n'=n^{\sqrt{f(m)}}$
we get $n=n'^{1/\sqrt{f(m)}}$. As $f(m)$ can be chosen
arbitrarily small, and $n=7m$, the time as a function of $n$
is $n'^{1/Q(n)}$
for any slowly increasing function $Q(n)$.

\section{FPT Hardness for Minimum Maximal Independent Set}

In this section we obtain hardness for $\mmis$ and thus prove
Theorem~\ref{4}.

We start with 3-$\sat$ instance $I$ with $m$ clauses and $q$ variables.
We assume that
a ``yes'' instance admits a satisfying assignment
and in the case of a ``no'' instance,
any assignment will leave at least one clause
unsatisfied.
We now describe how to build the new graph
$G(I)=(V(I),E(I))$.

\noindent {\bf The building blocks:}
\begin{enumerate}
\item
For every variable $x$ in $C$, we define two vertices $u_x$ and ${\bar
u}_x$. The choice of a vertex $u_x$ represents an assignment True to
$x$ and the choice of ${\bar u}_x$ represents a False assignment to
$x$.

\item
For every clause we add a set $W(C)$ of $q$ copies of the clause.
Namely, $W(C)=\{w^C_1,\ldots,w^C_q\}$.
\end{enumerate}
Intuitively, we want to create a $\setcover$-like instance
in which variables are sets and clauses are elements
and a variable $u_x$ covers $C$ if $x\in C$
and ${\bar u}_x$ covers $C$ if ${\bar x}\in C$.

\noindent{\bf Supervertices:}
Similar to our construction for $\setcover$ and $\clique$,
we define a new graph $H(I)$ with supervertices that are
collections of vertices of the type
$u_x,{\bar u}_x$.

Let $f(q)$ be {\em any} slowly increasing function of $q$ such that
$f(q)=\omega(1)$ and assume, by adding dummy clauses if needed, that
$f(q)$ divides $q$.  The supervertices of $V(I)$, denoted by $v_S$,
correspond to subsets $S\subseteq \{u_x\mid x\in C\}\cup \{{\bar
u}_x\mid x\in C\}$ satisfying the following two conditions:
\begin{enumerate}
\item $|S|=q/f(q)$,
\item $S$ does not contain both $u_z,{\bar u}_z$ for any variable $z$
(i.e., a set $S$ does not contain a ``contradiction'' in the truth
value assignment).
\end{enumerate}

\noindent{\bf Edges between two supervertices:} Introduce an edge
between $v_{S_1}$ and $v_{S_2}$ if and only if there exists some
variable $x$ so that either $u_x$ or ${\bar u}_x$ belongs to $S_1$
{\bf and} either $u_x$ or ${\bar u}_x$ belongs to $S_2$ Note that the
above gives four cases in which $v_{S_1},v_{S_2}$ are connected.

\noindent{\bf Edges between supervertices and $W(C)$ vertices:}
Introduce edges as follows:
\begin{enumerate}
\item  If a variable $x\in C$, any supervertex that contains
the vertex $u_x$ is connected to all vertices of $W(C)$.
\item If a variable $\bar x\in C$, any supervertex that contains
${\bar u}_x$ is connected to all vertices of $W(C)$.
\end{enumerate}

\noindent{\bf Example:}
Say for example
$C=(x\vee \bar z\vee w)$.
Then any supervertex that contains $u_x$
is connected to all the copies of $W(C)$. Also, every supervertex
that contains ${\bar u}_z$ or
$u_w$ is connected to all the copies
of $W(C)$.


\begin{claim}
Total number of vertices in $H(I)$ is $2^{o(q)}+qm$. The instance
$G(I)$ can be constructed in time $2^{o(q)}$.
\end{claim}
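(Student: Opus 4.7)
The plan is to split the vertex count of $H(I)$ into its two defined families and bound each piece separately. The $W(C)$ contribution is immediate: for each of the $m$ clauses the construction introduces exactly $q$ copies $w^C_1,\ldots,w^C_q$, contributing $qm$ vertices in total. Thus the only real work is in showing that the number of supervertices is at most $2^{o(q)}$, after which the edge construction is routine.

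To bound the supervertex count, I would ignore the non-contradiction condition (it can only decrease the count) and simply count arbitrary size-$q/f(q)$ subsets of the $2q$-element ground set $\{u_x,\bar u_x\}$. Applying the standard estimate ${n \choose k}\leq (ne/k)^k$ with $n=2q$ and $k=q/f(q)$ gives
$${2q \choose q/f(q)} \leq (2ef(q))^{q/f(q)} = 2^{(q/f(q))\log(2ef(q))}.$$
The exponent equals $q\cdot \log(2ef(q))/f(q)$, and since the construction assumes $f(q)=\omega(1)$, the factor $\log(2ef(q))/f(q)$ tends to $0$; hence the exponent is $o(q)$ and the supervertex count is $2^{o(q)}$. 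Adding the $qm$ term gives the claimed total of $2^{o(q)} + qm$.

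For the construction time, I would enumerate all candidate subsets in time polynomial in $q$ per subset (discarding any that contain a contradictory pair $u_x,\bar u_x$), giving $2^{o(q)}$ for the enumeration phase. Edges can then be added by scanning each pair of supervertices to check whether they share a variable, and, for each supervertex, scanning its $q/f(q)$ literals and connecting it to all $q$ copies of $W(C)$ for each clause $C$ compatible with that literal; the total number of edges is at most the square of the vertex count. By the Sparsification Lemma (Lemma~\ref{lem:sparsification-eth}) we may assume $m = O(q)$, so $qm = O(q^2) = 2^{o(q)}$ and the total construction time is $2^{o(q)} \cdot \poly(q) = 2^{o(q)}$. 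The only delicate point in the argument is verifying that $\log(2ef(q))/f(q) \to 0$, which is precisely why the construction insists on $f(q) = \omega(1)$: without that growth condition the supervertex bound would degrade to $2^{\Theta(q)}$ and the rest of the reduction would collapse.
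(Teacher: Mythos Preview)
Your argument is correct and follows essentially the same route as the paper: bound the supervertex count via ${n \choose k}\leq (ne/k)^k$ and observe that the exponent is $o(q)$ because $f(q)=\omega(1)$, then add the trivial $qm$ contribution from the clause copies. Two minor remarks: (i) your binomial ${2q \choose q/f(q)}$ is in fact the more accurate count --- the paper writes ${q \choose q/f(q)}$, which undercounts by a factor $2^{q/f(q)}$ but is harmless since that factor is also $2^{o(q)}$; (ii) your appeal to the Sparsification Lemma to get $m=O(q)$ is unnecessary, since for $3$-$\sat$ one always has $m=O(q^3)$ (there are only $O(q^3)$ distinct $3$-clauses), whence $qm=q^{O(1)}=2^{o(q)}$ directly.
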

\begin{proof}
The total number of vertices in $H(I)$ of type $v_S$ for $S\subset
A$ is at most ${q \choose  q/f(q)}<
(qe/(q/f(q)))^{q/f(q)} < 2^{o(q)}$.  Here we again use the
inequality ${q \choose k} \leq (qe/k)^k$. The number of vertices of
type $W(C)$ for a clause $C$ is $qm$.
\end{proof}

{\bf Building an} $\mmis$ {\bf of size $f(q)$ for a ``yes''
instance:}

\begin{enumerate}
\item
Start with the set $X=\{u_x\mid x \mbox{ is a literal}\}$.
This set contains for every variable
its vertex copy that corresponds
to a True assignment.
\item
Decompose $X$ to $f(q)$ {\em pairwise disjoint} sets each
containing $q/f(q)$ vertices. Let these sets
be $S_1,S_2,\ldots,S_{q/f(q)}$. We want to derive sets so that
$v_{S_i}$ is a feasible $\mmis$, which is of course
not the case so far (because not all $W(C)$ are covered).
\item
We now modify sets $S_i$ to obtain sets $T_i$ as follows.  Fix a
satisfying assignment $\tau$ to the variables. We start by setting
$T_i=S_i$ for all $i$. If $\tau(x)$ is False, then for the unique $i$
so that $u_x\in T_i$, remove $u_x$ from $T_i$ and add ${\bar u}_x$ to
$T_i$.  This is done for all variables.  The final $T_i$ sets are
called the {\em assignment sets}.  Our solution will be ${\cal
I}=\{v_{T_i}\mid T_i \mbox{ is an assignment set}\}$.
\end{enumerate}

\begin{claim}
\label{noedges}
The set $\{v_{T_i}\}$ is independent in $H(I)$.
\end{claim}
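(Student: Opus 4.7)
The plan is to reduce the claim to a simple combinatorial observation about which variables appear in which $T_i$. By the definition of the edge set on supervertices, $v_{T_i}$ and $v_{T_j}$ are adjacent iff there exists a variable $x$ such that one of $\{u_x,\bar u_x\}$ lies in $T_i$ and one of $\{u_x,\bar u_x\}$ lies in $T_j$. So the task is to show that the $T_i$ are variable-disjoint: no variable $x$ has a literal-vertex appearing in two different $T_i$'s.

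First I would invoke the construction of the $S_i$'s. These were chosen as a partition of $X=\{u_x:x \text{ is a variable}\}$ into $f(q)$ pairwise disjoint blocks, each of size $q/f(q)$. Hence for every variable $x$ there is a \emph{unique} index $i(x)$ with $u_x\in S_{i(x)}$, and $u_x$ does not occur in any other $S_j$.

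Next I would examine the modification step that produces $T_i$ from $S_i$. The only change made for a variable $x$ with $\tau(x)=\text{False}$ is to delete $u_x$ from $S_{i(x)}$ and insert $\bar u_x$ in its place. Crucially, the replacement happens in the same block, so no literal-vertex corresponding to $x$ is created or moved into any $T_j$ with $j\neq i(x)$. Therefore, for every variable $x$ exactly one of the vertices $u_x,\bar u_x$ appears in the union $\bigcup_i T_i$, and that vertex appears in the single block $T_{i(x)}$ only.

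Finally I would conclude: for any $i\neq j$ and any variable $x$, the set $\{u_x,\bar u_x\}$ intersects at most one of $T_i,T_j$. Hence the defining condition for an edge between $v_{T_i}$ and $v_{T_j}$ is never satisfied, which gives that $\{v_{T_i}\}$ is an independent set in $H(I)$. No step here involves a nontrivial obstacle; the whole argument is a bookkeeping verification that the two modifications (partitioning and literal-flipping) preserve variable-disjointness, which is exactly the property that rules out edges among supervertices.
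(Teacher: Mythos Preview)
Your proof is correct and follows essentially the same approach as the paper: both argue that an edge between $v_{T_i}$ and $v_{T_j}$ would force a variable $x$ with a literal-vertex in each, which traces back to $u_x\in S_i\cap S_j$ and contradicts the pairwise disjointness of the $S_p$. Your write-up simply makes explicit the bookkeeping step (that the flip from $u_x$ to $\bar u_x$ stays inside the same block $S_{i(x)}$) that the paper compresses into the word ``Clearly.''
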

\begin{proof}
For the vertices $v_{T_i}, v_{T_j}$ with $i\neq j$ to be connected it
must be that some $x$ so that either $u_{x}$ or ${\bar u}_x$ belongs
to $T_i$ and either $u_{x}$ or ${\bar u}_x$ belongs to $T_j$.
Clearly, this implies that $u_x\in S_i\cap S_j$. This is a
contradiction to the fact that the sets $\{S_p\}$ are pairwise
disjoint.
\end{proof}

\begin{claim}
\label{yesm}
The $f(q)$ vertices $\{v_{T_i}\}$ defined above form a dominating set
in $H(I)$.
\end{claim}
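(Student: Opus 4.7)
The plan is to verify the two types of vertices in $H(I)$ separately, namely the supervertices $v_S$ and the clause-copy vertices in $W(C)$, showing in each case that some $v_{T_i}$ either equals or is adjacent to the given vertex.

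First I would handle the supervertices. Any supervertex $v_S$ (other than the $v_{T_i}$ themselves) corresponds to a set $S$ of size $q/f(q) \ge 1$, so $S$ contains at least one vertex of the form $u_x$ or $\bar u_x$ for some variable $x$. The key observation is that the $\{S_i\}$ partition $\{u_x : x \text{ a variable}\}$, so the variable $x$ is ``represented'' in exactly one $S_i$, and the modification from $S_i$ to $T_i$ preserves the fact that $T_i$ contains either $u_x$ or $\bar u_x$. By the definition of edges between supervertices in $H(I)$, the existence of a common variable $x$ with $\{u_x, \bar u_x\}$ meeting both $S$ and $T_i$ gives an edge $v_S v_{T_i} \in E(I)$. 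Hence every supervertex is dominated.

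Next I would handle the clause copies. Fix a clause $C$ and a vertex $w^C_j \in W(C)$. Since $\tau$ is a satisfying assignment for $I$, some literal $\ell$ of $C$ is made true by $\tau$. If $\ell = x$ with $\tau(x)=\text{True}$, then $u_x$ lies in some $T_i$ (its original $S_i$ was not altered at $x$), and by the construction rule that any supervertex containing $u_x$ is connected to all of $W(C)$ whenever $x \in C$, we get an edge from $v_{T_i}$ to $w^C_j$. Symmetrically, if $\ell = \bar x$ with $\tau(x) = \text{False}$, then the modification step placed $\bar u_x$ into some $T_i$, and again $v_{T_i}$ is adjacent to all of $W(C)$ by the second edge rule. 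In either case $w^C_j$ is dominated.

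Combined with Claim~\ref{noedges}, this shows that $\{v_{T_i}\}$ is a maximal independent set of size $f(q)$, as desired. I do not expect any real obstacle here: the construction of the $T_i$ was designed precisely so that the union $\bigcup_i T_i$ encodes $\tau$ literal-by-literal, and both edge rules of $H(I)$ were set up to match this encoding. The only thing to be slightly careful about is that the replacement step (swapping $u_x$ for $\bar u_x$ when $\tau(x)$ is False) does not accidentally produce a ``contradiction'' set violating condition~2 on supervertices, but since each $S_i$ contains at most one of $u_x, \bar u_x$ and the swap replaces rather than adds, the resulting $T_i$ remain valid supervertex labels.
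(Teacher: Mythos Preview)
Your proposal is correct and follows essentially the same argument as the paper's own proof: both verify domination of the clause-copy vertices using the fact that $\tau$ satisfies every clause, and domination of the other supervertices using the fact that each variable is represented in exactly one $T_i$. The only cosmetic difference is that the paper handles the $W(C)$ vertices first and the supervertices second, whereas you do the reverse; your additional remarks on the validity of the $T_i$ as supervertex labels are not in the paper's proof but are harmless and correct.
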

\begin{proof}
We first show each vertex in $W(C)$ is adjacent to some vertex
$v_{T_i}$.  Note that $\tau$ satisfies all clauses $C$.  One
possibility is that $\tau(x)$ is True and $x\in C$.  Thus the {\em
unique} assignment set $T_i$ that contains $u_x$ is connected to all
the copies $W(C)$ of $C$.  Alternatively, if $\tau(x)$ is False and
${\bar x}\in C$, the unique $T_i$ that contains ${\bar u}_x$ is
connected to all copies of $W(C)$.

We now show that ${\cal I}$ dominates every supervertex not in ${\cal
I}$. Let $v_{S}$ be a vertex of $H(I)$ that does not belong to ${\cal
I}$. Pick an arbitrary variable $x$ so that either $u_x\in S$, or
${\bar u}_x\in S$. By construction there is some assignment set
$T_i\in {\cal I}$ that contains $u_x$ or ${\bar u}_x$.  In all the
fours cases above, by definition, there is an edge between $v_{T_i}$ and $v_S$.
\end{proof}

Thus we just proved the following corollary.
\begin{corollary}
The ``yes'' instance admits a solution of size $f(q)$.
\end{corollary}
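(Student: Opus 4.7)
The plan is to combine the two preceding claims, using the standard fact that an independent set which dominates every vertex outside it is automatically a maximal independent set (since no outside vertex can be added without destroying independence). A feasible $\mmis$ solution is therefore nothing but an independent dominating set, and it suffices to exhibit one of size $f(q)$.

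First I would identify the candidate solution as ${\cal I} = \{v_{T_i} : 1 \leq i \leq f(q)\}$, which has exactly $f(q)$ elements by construction: it came from partitioning $X = \{u_x : x \text{ is a variable}\}$ into $f(q)$ pairwise disjoint sets $S_1,\dots,S_{f(q)}$ of size $q/f(q)$ and then flipping some $u_x \mapsto {\bar u}_x$ inside each $S_i$ according to the fixed satisfying assignment $\tau$, so that each $T_i$ remains of size $q/f(q)$ and avoids containing both $u_z$ and ${\bar u}_z$; hence $v_{T_i}$ is a legitimate supervertex of $H(I)$. Next, Claim~\ref{noedges} directly yields that ${\cal I}$ is an independent set in $H(I)$. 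Finally, Claim~\ref{yesm} shows that ${\cal I}$ dominates every vertex of each $W(C)$ as well as every supervertex $v_S \notin {\cal I}$. Combining independence with domination gives that ${\cal I}$ is a maximal independent set of $H(I)$ of cardinality $f(q)$, which is exactly the claim.

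There is essentially no real obstacle here: the heavy lifting — careful selection of the $T_i$ via the satisfying assignment $\tau$, verifying pairwise disjointness of the $S_i$'s so that no unwanted edges arise inside ${\cal I}$, and ensuring that each clause's copies $W(C)$ are hit by some $v_{T_i}$ — was already done in Claims~\ref{noedges} and~\ref{yesm}. The only conceptual point worth stating explicitly is the equivalence between \emph{maximal independent set} and \emph{independent dominating set}: any $v\notin {\cal I}$ has a neighbor in ${\cal I}$ by domination, so adjoining $v$ would destroy independence, whence ${\cal I}$ is already maximal. Thus ${\cal I}$ witnesses a feasible $\mmis$ solution of size exactly $f(q)$, completing the corollary.
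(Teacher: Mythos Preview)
Your proposal is correct and matches the paper's approach exactly: the paper itself presents this corollary as an immediate consequence of Claims~\ref{noedges} and~\ref{yesm}, without writing a separate proof. You have simply made explicit the one observation the paper leaves implicit, namely that an independent dominating set is automatically a maximal independent set.
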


\begin{claim}
\label{nom}
For a no instance the minimum $\mmis$ is of size larger than $q$.
\end{claim}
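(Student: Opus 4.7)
The plan is to show that every maximal independent set $I$ in the constructed graph satisfies $|I| > q$, by a case analysis driven by how $I$ interacts with the clause-copy sets $W(C)$.

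The first step is an ``all-or-nothing'' lemma: for each clause $C$, either $W(C) \subseteq I$ or $W(C) \cap I = \emptyset$. The reason is that $W(C)$ is an independent set, and each copy $w_k^C$'s only neighbors outside $W(C)$ are the supervertices whose literal set satisfies $C$. If some $w_j^C \in I$ but some $w_k^C \notin I$, then $w_k^C$ must be dominated by such a supervertex $v_S$, but $v_S$ is adjacent to every vertex of $W(C)$ including $w_j^C$, violating independence. Next I would observe that whenever $W(C) \cap I = \emptyset$, some supervertex in $I$ must satisfy $C$, since $W(C)$ has no other neighbors available to dominate it.

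The crucial structural fact to establish next is that the supervertices lying in $I$ correspond to a consistent partial truth assignment. By the four-case edge rule between supervertices, two supervertices are adjacent whenever their literal sets both mention the same variable in any polarity. Since $I$ is independent, the literal sets of distinct supervertices in $I$ are supported on pairwise disjoint variable sets; combined with the built-in no-contradiction condition inside each supervertex, this gives a single consistent partial assignment $\tau_I$ of the underlying 3-SAT variables.

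Now define $T = \{ C : W(C) \subseteq I \}$. If $T = \emptyset$, then every clause is satisfied by $\tau_I$, and extending $\tau_I$ arbitrarily on the remaining variables produces a satisfying assignment of the 3-SAT instance, contradicting the ``no''-instance hypothesis. If $T \neq \emptyset$, then the $q$ copies of some $W(C)$ with $C \in T$ already contribute $q$ vertices to $I$; if $|T| \geq 2$ we get $|I| \geq 2q$, and if $|T| = 1$ then every clause outside $T$ (at least one such exists since $m \geq 2$ for any nontrivial unsatisfiable 3-SAT instance) requires at least one additional supervertex in $I$ to dominate it, yielding $|I| \geq q+1 > q$. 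The only genuine subtlety I anticipate is verifying carefully that the edge rule between supervertices really does preclude two supervertices in $I$ from touching a common variable in \emph{any} polarity, so that the partial assignment extracted from $I$ has no contradictory-literal gap; once that is in hand the rest is bookkeeping.
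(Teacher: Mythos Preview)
Your proof is correct and follows essentially the same line as the paper's: the supervertices in any independent set determine a consistent partial assignment (since any two supervertices touching a common variable are adjacent), so in a ``no'' instance some clause $C$ is left unsatisfied and maximality forces all of $W(C)$ into the solution. Your version is a bit more elaborate---the all-or-nothing lemma and the case split on $|T|$ are not strictly needed, since one can argue directly that the unsatisfied clause's $W(C)$ lies entirely in $I$---but your extra care does buy you the strict inequality $|I|>q$ actually stated in the claim, whereas the paper's terser argument only yields $|I|\ge q$.
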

\begin{proof}
Let ${\cal S}$ be the optimum $\mmis$ of the ``no'' instance.  Note
that all super vertices chosen by the optimum have to be
consistent. Namely, we can not have $u_x$ belonging to one set $T_i$
in ${\cal S}$ and ${\bar u}_x$ to some $T_j\in {\cal S}$ because this
will imply an edge between $v_{T_i}$ and $v_{T_j}$ and a
contradiction.  In particular, this implies that vertices
$\{v_{T_i}\}$ represent a (maybe partial) truth assignment to the
variables.

Since we are dealing with a no instance, there must be a clause $C$
that is not satisfied by this partial assignment. This means that none
of the vertices that correspond to literals that satisfy $C$ are in
any set of ${\cal S}$.  For example if $C=(x\vee \bar z\vee w)$ then
there may be one set related to $x$ but it contains ${\bar u}_x$,
because the assignment does not satisfy $C$.  There may be one set
related to $z$, but it contains $u_z$, and there may be a set for $w$,
but it contains ${\bar u}_w$. This means that the $q$ copies $W(C)$
must be present in ${\cal S}$, since it is a {\em maximal} independent
set. Thus the size of ${\cal S}$ is at least $q$.
\end{proof}

\begin{theorem}
Assuming the $\eth$,  $\mmis$ problem is
$(r,t)$-hardness, for any $r,t$.
\end{theorem}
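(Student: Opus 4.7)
The plan is to invoke Claim \ref{trivial1} with the construction just built. The previous claims provide exactly what a gap reduction needs: if the starting $3$-$\sat$ instance has $m$ clauses and $q$ variables (with $m \geq q$ by a standard reduction), then the graph $H(I)$ has $2^{o(q)} + qm$ vertices and can be built in $2^{o(q)}$ time, the yes instance admits an $\mmis$ of size exactly $f(q)$, and every $\mmis$ in the no instance has size strictly greater than $q$. So the yes/no gap in the optimum is at least $q/f(q)$, which we can inflate to any desired growing function by choosing $f$ to grow slowly enough.

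Concretely, for any pair of non-decreasing $r,t$, I would define
\[
f(q) \;=\; \max\Bigl\{\,k \geq 1 \;:\; k\cdot r(k) < q \ \text{ and }\ t(k) \leq 2^{\sqrt{q}}\,\Bigr\},
\]
rounded down to the largest divisor of $q$ not exceeding this value (adding at most $f(q)$ dummy clauses to make $f(q)$ divide $q$, which is negligible). Because $r$ and $t$ are non-decreasing but fixed, $f(q)$ is well-defined, tends to infinity with $q$, and grows arbitrarily slowly, so the supervertex construction still yields an instance of size $2^{o(q)}$ and the two gap claims apply. By design, $f(q)\cdot r(f(q)) < q$ and $t(f(q)) = 2^{o(q)} = 2^{o(m)}$.

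With these two inequalities in hand, the reduction satisfies the requirements of Definition~\ref{def:gapred}: take $\kappa_T(m) = f(q)$ and $\kappa_F(m) = q+1$, so $\kappa_T \cdot r(\kappa_T) < \kappa_F$ and $t(\kappa_T) = 2^{o(m)}$. Suppose, for contradiction, an $(r,t)$-\fpt-approximation $A$ for $\mmis$ existed. Running $A$ on $H(I)$ with parameter $k = f(q)$: on a yes instance ($\opt = f(q)$) it must return a solution of size at most $f(q)\cdot r(f(q)) < q$, while on a no instance every feasible solution has size $\geq q$, so $A$ must certify $k < \opt$. This distinguishes satisfiable from unsatisfiable $3$-$\sat$ instances in total time $t(f(q))\cdot (2^{o(q)} + qm)^{O(1)} = 2^{o(m)}$, contradicting $\eth$ via Lemma~\ref{lem:sparsification-eth}.

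The main obstacle, and the place where the construction's cleverness is essential, is making sure that the yes optimum is \emph{exactly} $f(q)$ rather than merely bounded by it: if the yes optimum could shrink below $f(q)$, then the single threshold $k = f(q)$ would no longer force the algorithm into an observable behavior on yes instances. This is precisely what Claims~\ref{noedges} and~\ref{yesm} together with the disjointness/consistency condition on supervertices and the $W(C)$ copies buy us, and it is the reason the argument for $\mmis$ is more delicate than the superficially analogous collection-of-sets trick used earlier for $\setcover$ and $\clique$. Once that is in place, the choice of $f$ is routine and the contradiction with $\eth$ is immediate.
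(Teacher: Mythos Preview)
Your argument is correct and matches the paper's: pick $f$ so slowly growing that $f(q)\cdot r(f(q)) < q$ and $t(f(q)) = 2^{o(q)}$, then feed the construction into the gap-reduction framework (Claim~\ref{trivial1}) with $\kappa_T = f(q)$ and $\kappa_F > q$. Your explicit formula for $f$ is a nice touch the paper omits.

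One correction to your final paragraph: the concern that the yes optimum must be \emph{exactly} $f(q)$ is misplaced. All that is needed---and all the Corollary following Claims~\ref{noedges} and~\ref{yesm} actually provides---is $\opt \le f(q)$. If $\opt < k = f(q)$, the hypothetical $(r,t)$-approximation still cannot certify $k < \opt$ (that statement is false), so it is forced to output a solution of size at most $k\cdot r(k) = f(q)\cdot r(f(q)) < q$; the distinguishing step goes through unchanged. The delicacy of the $\mmis$ construction lies in getting the \emph{lower} bound for the no instance (Claim~\ref{nom}) via the $q$ copies $W(C)$ and the consistency edges, not in pinning down the yes optimum exactly.
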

\begin{proof}
Since the new optimum for a yes instance is $f(q)$ where $f$ is an
arbitrarily slow growing function.  For any given functions $r$ and
$t$, we can make sure that $r(f(q))<q/f(q)$ and $t(f(q))=2^{o(q)}$.

Note that by Claims \ref{yesm} and \ref{nom}, the gap between ``yes''
and ``no'' instances is larger than $q/f(q)$. If there existed an
$(r,t)$-$\fpt$-approximation for $\mmis$, we could distinguish
between a ``yes'' and a ``no'' instance of $3$-$\sat$ in time $2^{o(q)}$,
contradicting the ETH.
\end{proof}

\section{Open problems}

We can look for more problems that are $(r,t)$-$\fpt$-hardness.
Given a problem we first should check if there is no easy
$g(\opt)$ approximation. many such problems are presented
in \cite{CHK}. All the problems in question were
highly inapproximable and W[1]-hard
Since we believe that such an approximation is not possible
for  $\clique$
we ask:

\noindent{\bf Open Problem 1:}
Which $W[1]$ optimization problems are $(r,t)$-$\fpt$-hard for various functions $r$ and $t$?

One interesting thing is that all problems presented in \cite{CHK}
were only $W[1]$-hard.

\noindent
{\bf Open Problem 2:}
Is there an $f(\opt)$ approximation for any $W[2]$-hard
problem, that runs in polynomial or $\fpt$ time?

It may be hard to prove that such a thing does not exist as it will
imply  $W[1] \neq W[2]$. Still, this suggests an {\em indirect} way to
try and prove this important and widely believed conjecture.

Another way to make the optimum smaller is by taking a random sample.
This technique fails miserably for $\setcover$, but also
for $\clique$ for which we had hopes that this will succeed.

\noindent
{\bf Open problem 3:} Is there a $W[1]$ or $W[2]$ hard
problem, so that we can decrease $\opt$ by random sampling,
keep a large enough gap,
and imply a super-exponential hardness
in $\opt$ due to the optimum becoming smaller?

\bibliographystyle{abbrv}

\end{document}